%% file: _main.tex
\documentclass[a4paper,UKenglish,cleveref, autoref,thm-restate]{lipics-v2021}
\hideLIPIcs  %uncomment to remove references to LIPIcs series (logo, DOI, ...), e.g. when preparing a pre-final version to be uploaded to arXiv or another public repository

\title{New Algorithms for Parity-SAT and Its Bounded-Occurrence Versions} %TODO Please add

\author{Sanjay Jain}{School of Computing, National University of Singapore, Singapore}{sanjayjain@nus.edu.sg}{https://orcid.org/0000-0001-6798-8330}{}

\author{Junqiang Peng}{University of Electronic Science and Technology of China, China}{jqpeng0@foxmail.com}{https://orcid.org/0000-0003-2742-5562}{corresponding author}

\author{Frank Stephan}{Department of Mathematics, National University of Singapore, Singapore\\ School of Computing, National University of Singapore, Singapore}{fstephan@comp.nus.edu.sg}{https://orcid.org/0000-0001-9152-1706}{}

\author{Haoyun Tang}{School of Computing, National University of Singapore, Singapore}{e1154532@u.nus.edu}{https://orcid.org/0009-0001-5410-0221}{}

\author{Mingyu Xiao}{University of Electronic Science and Technology of China, China}{myxiao@uestc.edu.cn}{https://orcid.org/0000-0002-1012-2373}{}

\authorrunning{S.~Jain, J.~Peng, F.~Stephan, H.~Tang, and M.~Xiao} %TODO mandatory. First: Use abbreviated first/middle names. Second (only in severe cases): Use first author plus 'et al.'

\Copyright{Sanjay Jain, Junqiang Peng, Frank Stephan, Haoyun Tang, and Mingyu Xiao} %TODO mandatory, please use full first names. LIPIcs license is "CC-BY";  http://creativecommons.org/licenses/by/3.0/

\ccsdesc[500]{Theory of computation~Design and analysis of algorithms}

\keywords{Parity-SAT, Exact Exponential Algorithms} %TODO mandatory; please add comma-separated list of keywords

\funding{{Junqiang Peng and Mingyu Xiao were supported by the National Natural Science Foundation of China, under the grants 62372095 and 62502078. Sanjay Jain and Frank Stephan were supported by Singapore Ministry of Education (MOE) AcRF Tier 2 grant MOE-000538-00. Additionally, Sanjay Jain was supported by NUS grant E-252-00-0021-01. Additionally, Frank Stephan was supported by AcRF Tier 1 grants A-0008454-00-00 and A-0008494-00-00.}}
\nolinenumbers %uncomment to disable line numbering

\input{_preamble}

\begin{document}

\maketitle

\begin{abstract}

Parity-SAT is the problem of determining whether a given CNF formula has an odd number of satisfying assignments. As a canonical $\oplus$P-complete problem, it represents a fundamental variant of the exact model counting problem (\#SAT). Under the Strong Exponential Time Hypothesis (SETH), Parity-SAT admits no $O^*((2-\varepsilon)^n)$-time or $O^*((2-\varepsilon)^m)$-time algorithm for any constant $\varepsilon>0$, where $n$ and $m$ denote the numbers of variables and clauses, respectively. Thus, breaking the $2^n$ or $2^m$ barrier appears impossible in full generality.

In this work, we revisit this barrier through structural restrictions and a refined exploitation of parity. We study Parity-$d$-occ-SAT, where each variable appears in at most $d$ clauses, and obtain three main results.
First, we design {a randomized} $O^*(2^{m(1-1/O(d))})$-time algorithm, thereby breaking the $2^m$ barrier for every fixed $d$. Second, for the special case $d=2$, we develop a significantly sharper branching algorithm running in $O^*(1.1193^n)$ time or $O^*(1.3248^m)$ time. Third, leveraging the structural insights underlying the $d=2$ case, we obtain an $O^*(1.1052^L)$-time algorithm for general Parity-SAT, where $L$ denotes the formula length.
All algorithms use only polynomial space. Notably, our running-time bounds are better than the best known bounds for the corresponding exact counting counterparts, highlighting a genuine algorithmic advantage of parity over counting. Conceptually, our results demonstrate that parity admits finer structural reductions and more efficient branching than exact model counting, and that bounded occurrence can be systematically leveraged to circumvent classical exponential barriers.

\end{abstract}

\newpage
\input{sections/intro}

\input{sections/prelim}

\input{sections/basic-rules}

\input{sections/Parity-d-occ-SAT-m}

\input{sections/Parity-2-occ-SAT}

\input{sections/Parity-SAT-L}

\input{sections/conclusion}

\bibliography{_ref}

\input{sections/appendix}

\end{document}

%% file: _preamble.tex
\usepackage{amsmath,amsthm,amssymb,mathtools}
\usepackage{xcolor}
\usepackage{booktabs,multirow,multicol,makecell}
\usepackage{todonotes}
\usepackage[noabbrev]{cleveref}
\usepackage{algorithm}
\usepackage[noend]{algpseudocode}
\usepackage{tikz}
\usetikzlibrary{graphs, graphs.standard, calc, positioning, decorations.pathreplacing, shapes.geometric, arrows.meta}
\newcommand{\bigOstar}[1]{O^*(#1)}
\newcommand{\bigO}[1]{{O}(#1)}
\newcommand{\ParitySAT}{Parity-SAT}
\newcommand{\ParitydOccSAT}{Parity-$d$-occ-SAT}
\newcommand{\ParityTwoOccSAT}{Parity-$2$-occ-SAT}
\newtheorem{rrule}{R-Rule}
\DeclareMathOperator{\var}{var}
\DeclareMathOperator{\parity}{Parity}
\DeclarePairedDelimiter{\abs}{\lvert}{\rvert}
\newcommand{\ResultTwoOccN}{1.1193^n}
\newcommand{\ResultTwoOccM}{1.3248^m}

\newcommand{\nl}{\overline}
\newcommand{\formula}{\varphi}
\newcommand{\literal}{\ell}

\newcommand{\LeftComment}[1]{%
  \Statex \hspace*{\dimexpr-\algorithmicindent*\value{ALG@nested}\relax} \textcolor{gray}{$\triangleright$ #1}%
}
\newcommand{\mycomment}[1]{\textcolor{gray}{$\triangleright$ #1}}

%% file: sections/intro.tex
\section{Introduction}

The {\sc Satisfiability} (SAT) problem, determining whether a given Conjunctive Normal Form (CNF) formula has a satisfying assignment, is a cornerstone of Computer Science and the prototypical {\sf NP}-complete problem~\cite{conf/stoc/Cook71}. Its counting variant, {\sc Model Counting} (\#SAT), requires computing the exact number of satisfying assignments (called models). \#SAT is \#{\sf P}-complete~\cite{journals/tcs/Valiant79} and serves as a fundamental tool in probabilistic inference~\cite{journals/ai/Roth96,conf/focs/BacchusDP03,conf/aaai/SangBK05,journals/ai/ChaviraD08}, network reliability estimation~\cite{conf/aaai/Duenas-OsorioMP17}, and explainable AI~\cite{conf/sat/NarodytskaSMIM19}.
A significant modular variant of \#SAT is \ParitySAT{}, which asks whether a formula has an odd number of models. \ParitySAT{} is the canonical complete problem for the complexity class $\oplus${\sf P} introduced by Papadimitriou and Zachos~\cite{conf/tcs/PapadimitriouZ83} and is shown to be NP-hard under randomized reductions by Valiant and Vazirani~\cite{journals/tcs/ValiantV86}. 
This problem also plays a pivotal role in the proof of Toda's theorem~\cite{journals/siamcomp/Toda91}, a landmark result in the study of counting complexity.

Given the inherent hardness of these problems, research has shifted toward finding \emph{moderately exponential-time} algorithms. Typically, instances are parameterized by the number of variables $n$, with the goal of achieving a running time of $\bigOstar{c^n}$ for some constant $c < 2$. A brute-force search over $2^n$ assignments solves SAT, \#SAT, and \ParitySAT{} in $\bigOstar{2^n}$ time.
However, it is not easy to break the trivial ``$2^n$-barrier''.

Under the Strong Exponential Time Hypothesis (SETH)~\cite{journals/jcss/ImpagliazzoP01}, no $\bigOstar{(2-\varepsilon)^n}$ algorithm exists for SAT for any $\varepsilon > 0$. By the isolation lemma~\cite{journals/jcss/CalabroIKP08,conf/iwpec/Traxler08}, this barrier extends to \ParitySAT{} as well. These lower bounds have motivated the study of \emph{restricted} formula classes, such as $k$-SAT (where clauses have size at most $k$) and $d$-occ-SAT (where variables appear in at most $d$ clauses). For $k$-SAT, it has been extensively studied and we provide a review on it later in this section. In this paper,
we focus on $d$-occ-SAT and
denote its counting and parity counterparts by \#$d$-occ-SAT and \ParitydOccSAT{}, respectively.

It is known that $d$-occ-SAT can be solved in $\bigOstar{2^{n(1-{1}/{\bigO{\log d}})}}$ time~\cite{conf/coco/CalabroIP06,series/faia/DantsinH21}.
Remarkably, this upper bound carries over to the exact counting variants: \#$d$-occ-SAT admits a deterministic algorithm running in $\bigOstar{2^{n(1-{1}/{\bigO{\log d}})}}$ time, albeit requiring exponential space~\cite{conf/soda/ChanW16}.
Naturally, this algorithm also applies to \ParitydOccSAT{}.

The computational landscape shifts significantly when instances are parameterized by the number of clauses $m$. While SAT admits an $\bigOstar{1.2226^m}$-time algorithm~\cite{journals/tcs/ChuXZ21}, its parity (and counting) counterparts face a rigid barrier: although they can be solved in $\bigOstar{2^m}$ time via Inclusion-Exclusion~\cite{journals/siamcomp/Iwama89,journals/ipl/Lozinskii92}, Cygan et al.~\cite{journals/talg/CyganDLMNOPSW16} showed that this $2^m$-barrier is insurmountable under SETH even for \ParitySAT{}. Moreover, for the bounded-occurrence restrictions, to the best of our knowledge, no known algorithms break the $2^m$-barrier even for \ParityTwoOccSAT{}.

\subparagraph*{Our Results.}

\begin{table}[t]
    \centering
    \input{tables/result-summary}
    \caption{{Summary of known results and our results. In this table, {`Para.'} stands for `Parameter'; {`det.'} and {`rand.'} stand for deterministic and randomized algorithms, respectively; {`poly-sp.'} and {`exp-sp.'} denote polynomial and exponential space.}}
    \label{tab:summary}
\end{table}

In this work, we study fast algorithms for \ParitySAT{} and \ParitydOccSAT{}. 
{First, we demonstrate that the $2^m$-barrier is surmountable for the class of bounded-occurrence formulas by giving a randomzied $\bigOstar{2^{m(1-1/\bigO{d})}}$-time algorithm for \ParitydOccSAT{}.}
Second, for the specific case of $d=2$, we design a refined algorithm for \ParityTwoOccSAT{} that runs in $\bigOstar{\ResultTwoOccN}$ time or $\bigOstar{\ResultTwoOccM}$ time. Finally, utilizing the results and algorithmic ideas developed for \ParityTwoOccSAT{}, we obtain an $\bigOstar{1.1052^L}$-time algorithm for the general \ParitySAT{}, where $L$ is the formula length (i.e., sum of the sizes of clauses). We remark that the parameter $L$ is also well-studied in the literature, as detailed in the related work discussion later in this section. {All our algorithms require only polynomial space}, and our bounds are better than the known results for their counting counterparts. This underscores that parity counting, being a special case of exact counting, potentially offers greater tractability. A summary of known results and our contributions is provided in Table~\ref{tab:summary}.

Our results mainly rely on the exploitation of parity. A simple example is that if an unassigned variable appears in no clauses, it acts as a {free variable} that doubles the total number of satisfying assignments; consequently, one can immediately infer that the parity of the formula is zero. Building upon this and other parity-specific properties, we develop several reduction rules and branching rules used in our algorithms. Furthermore, unlike most branching algorithms that solely branch on variables, we employ an additional clause-based branching rule, which may be of independent interest.

\subparagraph*{Other Related Works.}
It is known that $k$-SAT can be solved in $\bigOstar{2^{n(1-{1}/{\bigO{k}})}}$ time~\cite{conf/focs/Schoning99,conf/focs/PaturiPZ97,journals/jacm/PaturiPSZ05}. For \#$k$-SAT, the same runtime bound is achieved either deterministically with exponential space~\cite{conf/soda/ChanW16} or via a randomized polynomial-space algorithm~\cite{conf/soda/ImpagliazzoMP12}.
Particular attention has also been given to small values of $k$. Specifically, $3$-SAT admits a randomized $\bigOstar{1.306973^n}$-time~\cite{conf/focs/Scheder21} and a deterministic $\bigOstar{1.32793^n}$-time algorithm~\cite{conf/icalp/Liu18}, and (weighted) \#$2$-SAT is solvable in $\bigOstar{1.2377^n}$ time~\cite{conf/iwpec/Wahlstrom08} or $\bigOstar{1.1082^m}$ time using exponential space~\cite{conf/ijcai/0001S025}.

% ~\cite{journals/iandc/Gelder88,kullmann1997deciding,conf/soda/Hirsch98,journals/jar/Hirsch00a,conf/esa/Wahlstrom05,conf/wads/ChenL09}. 
Beyond $n$ and $m$, the formula length $L$ is another widely studied parameter. Currently, SAT and \#SAT can be solved in $\bigOstar{1.0638^L}$~\cite{journals/iandc/PengX23} and $\bigOstar{1.1082^L}$~\cite{conf/ijcai/0001S025} time, respectively. The latter bound is obtained by reducing a \#SAT instance of length $L$ to a weighted \#2-SAT instance with $m = L$ clauses~\cite{conf/lics/BannachDGH25,conf/sosa/BannachDGH26} and calling the $\bigOstar{1.1082^m}$ time algorithm for weighted \#2-SAT~\cite{conf/ijcai/0001S025}; while the reduction itself only uses polynomial space, the invoked algorithm~\cite{conf/ijcai/0001S025} requires exponential space.

We also highlight that the parity counting versions of many classic problems have been extensively studied, including Perfect Matching~\cite{journals/tcs/Valiant79}, Hamiltonian Cycle~\cite{conf/focs/BjorklundH13}, and $k$-Path~\cite{conf/esa/CurticapeanDH21}. Some of these problems admit polynomial-time algorithms, while others remain computationally hard. Consequently, parity counting problems have attracted significant attention from the perspectives of classical complexity theory, parameterized complexity, and fine-grained complexity (see e.g., \cite{journals/iandc/ArvindK06,conf/focs/Valiant06,conf/esa/CurticapeanDH21,journals/algorithmica/GoldbergR24,conf/icalp/AbboudFW20,conf/icalp/BjorklundDH15}).

%% file: tables/result-summary.tex
\begin{tabular}{ccc|cc}
\toprule
    \textbf{Para.} & \multicolumn{2}{c|}{\textbf{\#SAT}} & \multicolumn{2}{c}{\textbf{\ParitySAT{}}}\\
\midrule
    {$L$} 
    & \multicolumn{2}{c|}{\makecell[c]{$\bigOstar{1.1082^L}$ \\ det., exp-sp. \cite{conf/lics/BannachDGH25,conf/sosa/BannachDGH26,conf/ijcai/0001S025}}}
    & \multicolumn{2}{c}{\makecell[c]{$\bigOstar{1.1052^L}$ \\ det., poly-sp. {[\textbf{\cref{sec:L}}]}}} \\
\bottomrule
\toprule    
    \textbf{Para.} & \textbf{\#2-occ-SAT} & \textbf{\#d-occ-SAT} & \textbf{\ParityTwoOccSAT{}} & \textbf{\ParitydOccSAT{}} \\
\midrule
    \multirow{2}{*}{$n$} 
    & \makecell[c]{$\bigOstar{1.2282^n}$ \\ det., exp-sp.} 
    & \makecell[c]{$\bigOstar{2^{n(1-{1}/{\bigO{\log d}})}}$ \\ rand., poly-sp. \cite{conf/soda/ImpagliazzoMP12}} 
    & \makecell[c]{$\bigOstar{\ResultTwoOccN}$ \\ det., poly-sp.} 
    & \makecell[c]{$\bigOstar{2^{n(1-{1}/{\bigO{\log d}})}}$ \\ rand., poly-sp. \cite{conf/soda/ImpagliazzoMP12}} \\
    
    & ({derived by} $L \le 2n$) 
    & det., exp-sp.~\cite{conf/soda/ChanW16} 
    & {[\textbf{\cref{sec:2-occ}}]} 
    & det., exp-sp.~\cite{conf/soda/ChanW16} \\
\midrule
    \multirow{2}{*}{$m$} 
    & \multicolumn{2}{c|}{\makecell[c]{$\bigOstar{2^m}$ \\ det., poly-sp. \cite{journals/siamcomp/Iwama89,journals/ipl/Lozinskii92}}} 
    & \makecell[c]{$\bigOstar{\ResultTwoOccM}$ \\ det., poly-sp.} 
    & \makecell[c]{$\bigOstar{2^{m(1-{1}/{\bigO{d}})}}$ \\ rand., poly-sp.} \\
    
    & \multicolumn{2}{c|}{} 
    & {[\textbf{\cref{sec:2-occ}}]} 
    & {[\textbf{\cref{sec:d-occ}}]} \\
\bottomrule
\end{tabular}

%% file: sections/prelim.tex
\section{Preliminaries}\label{sec:prelim}
\subparagraph*{Notations.}
Let $V$ be a set of \emph{Boolean variables}. A Boolean variable (or simply \emph{variable}) $x \in V$ can be assigned a value of $1$ (\textsc{True}) or $0$ (\textsc{False}). A variable $x$ has two corresponding \emph{literals}: the positive literal $x$ and the negative literal $\nl{x}$. We use $\nl{x}$ to denote the negation of literal $x$, so $\nl{\nl{x}}=x$.
A \emph{clause} on $V$ is a set of literals on $V$. Note that a clause might be empty. {A \emph{CNF formula} (or simply \emph{formula}) $\formula = \{C_1, C_2, \dots, C_m\}$ over $V$ is a set of clauses on $V$, also denoted by $\formula=\bigwedge_{i=1}^m C_i$.} We denote by $\var(\formula)$ the variable set of $\formula$. For a literal $\literal$, $\var(\literal)$ denotes its corresponding variable. For a clause $C$, $\var(C)$ denotes the set of variables appearing in the clause (i.e., $x \in \var(C)$ if $x \in C$ or $\nl{x} \in C$).
We say a variable $x$ \emph{co-occurs} with a variable $y$ if $x$ and $y$ share a clause.

An \emph{assignment} for a variable set $V$ is a mapping $\sigma: V\rightarrow\{0, 1\}$. Given an assignment $\sigma$, a clause is \emph{satisfied} by $\sigma$ if at least one literal in it evaluates to $1$ under $\sigma$. An assignment for $\var(\formula)$ is called a \emph{satisfying assignment} of $\formula$ if $\sigma$ satisfies all clauses in $\formula$.
We use $\parity(\formula)$ to denote the parity of the number of satisfying assignments of $\formula$ (i.e., it is $1$ if the number of satisfying assignments is odd, and $0$ otherwise).

The \emph{degree} of a variable $x$ in formula $\formula$ is the total number of occurrences of literals $x$ and $\nl{x}$ in $\formula$. A variable is called a \emph{$d$-variable} if its degree is exactly $d$, and a \emph{$d^+$-variable} if its degree is at least $d$. 
% The \emph{degree of a formula} $\formula$, denoted by $\deg(\formula)$, is the maximum degree of all variables in $\formula$.
A variable $x$ is called $(i,j)$-variable if literal $x$ appears $i$ times and literal $\nl{x}$ appears $j$ times.

The \emph{length} of a clause $C$, denoted by $|C|$, is the number of literals in $C$. A clause is called a \emph{$k$-clause} (resp. \emph{$k^-$-clause} and \emph{$k^+$-clause}) if its length is exactly $k$ (resp. at most $k$ and at least $k$). A formula $\formula$ is called a \emph{$k$-CNF formula} if each clause in $\formula$ has length at most $k$.

For a formula $\formula$, define $n(\formula) = \abs{\var(\formula)}$ as the number of variables, $m(\formula) = \abs{\formula}$ as the number of clauses, and $L(\formula) = \sum_{C \in \formula} \abs{C}$ as the total length of the formula. We may simply write $n$, $m$, and $L$ if the context is clear.

\subparagraph*{Branch-and-Search.}
A \emph{branch-and-search} algorithm first applies reduction rules to simplify the instance, and then recursively solves it via branching operations. To evaluate the time complexity, we associate each problem instance with a non-negative measure $\mu$. Let $T(\mu)$ denote an upper bound on the number of leaves in the search tree generated by the algorithm for any instance with measure at most $\mu$. A branching operation generates $l$ subinstances whose solutions collectively determine the solution to the original instance, and the measure decreases by at least $a_i > 0$ in the $i$-th branch, yielding the recurrence relation
\[
    T(\mu) \leq T(\mu - a_1) + \dots + T(\mu - a_l).
\]
This recurrence is represented by a \emph{branching vector} $(a_1, \dots, a_l)$. The \emph{branching factor} of this recurrence, denoted by $\tau(a_1, \dots, a_l)$, is defined as the unique positive real root of the characteristic equation $1 - \sum_{i=1}^l x^{-a_i} = 0$. If the branching factor of every operation in the algorithm is bounded by a constant $\gamma$, then the overall size of the search tree is bounded by $\bigOstar{\gamma^\mu}$. For a comprehensive background on exact exponential algorithms, we refer the reader to~\cite{series/txtcs/FominK10}.
The following standard property allows us to analytically upper-bound branching factors, which is a direct consequence of Lemma~2.2 and Lemma~2.3 in~\cite{series/txtcs/FominK10}.

\begin{lemma}\label{lem:vec-dominate}
    For any reals $a_1, a_2 > 0$ and $p > q > 0$, if $a_1 + a_2 \geq p$ and $\min(a_1, a_2) \geq q$, then $\tau(a_1, a_2) \leq \tau(p-q, q)$.
\end{lemma}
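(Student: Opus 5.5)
The plan is to derive the statement from two standard facts about branching factors. The first is monotonicity: $\tau$ is non-increasing in each entry, since increasing some $a_i$ decreases the left-hand side of $1-\sum_i x^{-a_i}$ at every $x>1$. The second is balancing: for a fixed sum $a_1+a_2=s$, the factor $\tau(a_1,a_2)$ grows as the vector becomes more unbalanced. This is the content of Lemmas~2.2 and~2.3 in~\cite{series/txtcs/FominK10}, and I will simply invoke them.

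First reduce to the extremal case. Without loss of generality $a_1\le a_2$, so $a_1\ge q$. Put $s=a_1+a_2\ge p$.

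Replacing $(a_1,a_2)$ by $(a_1, s'-a_1)$, where $s'=\max(p, 2a_1)$, only decreases entries, because $a_2\ge \max(p-a_1,a_1)$. By monotonicity, $\tau(a_1,a_2)\le\tau(a_1,s'-a_1)$. So it suffices to handle vectors with $a_1\ge q$ whose sum is either exactly $p$, or equal to $2a_1\ge p$ (the balanced case).

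\textbf{Sum exactly $p$.} The vector $(a_1,p-a_1)$ has $q\le a_1\le p/2$. Along the line $x_1+x_2=p$, unbalancing increases $\tau$. Hence $\tau(a_1,p-a_1)\le\tau(q,p-q)$ whenever $q\le p/2$, because $(q,p-q)$ is the most unbalanced vector permitted.

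\textbf{Balanced case.} The vector is $(a_1,a_1)$ with $2a_1\ge p$. By monotonicity, $\tau(a_1,a_1)\le\tau(p/2,p/2)\le\tau(q,p-q)$, using balancing again.

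\textbf{Case $q>p/2$.} Now $\min(a_1,a_2)\ge q$ forces $a_1,a_2\ge q>p-q$. Therefore $\tau(a_1,a_2)\le\tau(q,q)\le\tau(p-q,q)$ directly by monotonicity.

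The only substantive step is the balancing inequality. If I had to prove it rather than cite it, I would argue as follows. Let $x_0=\tau(p/2,p/2)$ and fix the target $x=\tau(q,p-q)$. The function $t\mapsto x^{-t}+x^{-(p-t)}$ is convex and symmetric about $p/2$, so it is decreasing on $[0,p/2]$. Hence for $q\le a_1\le p/2$,
\[
x^{-a_1}+x^{-(p-a_1)}\le x^{-q}+x^{-(p-q)}=1 .
\]
Since $\sum_i y^{-a_i}$ is decreasing in $y$ and equals $1$ exactly at $y=\tau(a_1,p-a_1)$, this gives $\tau(a_1,p-a_1)\le x$. This convexity argument is the main point to get right; the rest is bookkeeping.
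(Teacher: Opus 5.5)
Your proof is correct and takes the paper's route: the paper states the lemma as a direct consequence of the monotonicity and balancing facts (Lemmas~2.2 and~2.3 of \cite{series/txtcs/FominK10}), and your reduction to the sum-$p$ and balanced cases, together with the convexity argument for balancing, spells out that deduction. One small slip in your justification of monotonicity: for $x>1$, increasing $a_i$ decreases $\sum_i x^{-a_i}$ (so it increases, not decreases, $1-\sum_i x^{-a_i}$), and that is why $\tau$ is non-increasing in each entry.
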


%% file: sections/basic-rules.tex
\section{Reduction Rules and Branching Rules}\label{sec:rules}
In this section, we introduce some basic reduction rules and branching rules that will be frequently used in algorithms in subsequent sections.

Before presenting the rules, we define the notation for the formula simplifications and modifications involved.
For a literal $\literal$, we use $\formula[\literal=1]$ to denote the formula obtained from $\formula$ by assigning $\literal = 1$. This simplifies $\formula$ by removing all satisfied clauses (those containing $\literal$) and deleting all falsified literals (occurrences of $\nl{\literal}$) from the remaining clauses. Accordingly, we define $\formula[\literal=0]=\formula[\nl{\literal}=1]$.
Extending this to a clause $C=(\literal_1\lor \cdots \lor \literal_k)$, we define $\formula[C=0] = \formula[\literal_1=0, \dots, \literal_k=0]$ (falsifying $C$ by assigning all its literals to $0$) and $\formula[C=1] = \formula \land C$ (requiring $C$ to be satisfied by adding it to the formula).
We use a combined notation to represent a sequence of operations {in order} (e.g., $\formula[C_1=1, C_2=0, \literal=1]$ denotes the formula obtained by adding $C_1$, falsifying $C_2$, and assigning $\literal=1$).

\subsection{Reduction Rules}\label{subsec:reduction-rules}

A reduction rule (R-Rule) is correct if it takes a formula $\formula$ as input and outputs a formula $\formula'$ with $\parity(\formula) = \parity(\formula')$, or correctly reports the value of $\parity(\formula)$.
{The correctness of most rules is easily seen, and we will briefly justify some of them. A formal statement of correctness is given in \cref{lem:rules-correctness} below.}
{The first five reduction rules are standard, which can also be applied to exact counting.}
\begin{rrule}[Base case]\label{rrule:empty-cls}
    If formula $\formula$ contains an empty clause, report $\parity(\formula)=0$.
\end{rrule}

\begin{rrule}[Elimination of duplicated literals]\label{rrule:duplicated}
    If a clause $C$ contains duplicated literals $\literal$, remove all but one $\literal$ in $C$.
\end{rrule}

\begin{rrule}[Elimination of tautology]\label{rrule:taut}
    If a clause $C$ contains two complementary literals $\literal$ and $\nl{\literal}$, remove clause $C$.
\end{rrule}

\begin{rrule}[Elimination of subsumptions]\label{rrule:subsumption}
    If there are two clauses $C$ and $D$ such that $C\subseteq D$, remove clause $D$.
\end{rrule}

\begin{rrule}[Elimination of $1$-clauses]\label{rrule:1cls}
    If there is a $1$-clause $(\literal)$, assign $\literal=1$.
\end{rrule}

{Next, we introduce four rules (i.e., R-Rules~\ref{rrule:0var}, \ref{rrule:1var}, \ref{rrule:dominate}, \ref{rrule:twins}) that are specific for \ParitySAT{}.}
\begin{rrule}[Elimination of $0$-variables]\label{rrule:0var}
    If there is an unassigned variable $x$ that does not appear in any clause, report that $\parity(\formula)=0$.
\end{rrule}

The correctness of \cref{rrule:0var} follows from a simple parity-based observation: if there is a $0$-variable, then the parity of the number of satisfying assignments is even. This is because such a {free variable} doubles the number of solutions. 

We say that a literal $x$ \emph{dominates} a variable $y$ (where $y \neq \var(x)$) if every clause containing variable $y$ also contains literal $x$. Observe that assigning $x=1$ makes $y$ a $0$-variable in the resulting formula $\formula[x=1]$. This implies that $\parity(\formula[x=1])=0$ and so $\parity(\formula)=\parity(\formula[x=0])$. Thus, we have the following two reduction rules:

\begin{rrule}[Elimination of $1$-variables]\label{rrule:1var}
    If there is a $1$-variable $x$ and let $(x\vee C)$ be the clause containing $x$, then assign $x=1$ and assign all literals in $C$ to $0$.
\end{rrule}

\begin{rrule}[Domination]\label{rrule:dominate}
    If there is a literal $x$ that dominates some variable $y$, assign $x=0$.
\end{rrule}

{
Technically, \cref{rrule:1var} is a special case of \cref{rrule:dominate}. If variable $x$ solely appears in $(x \vee C)$, \cref{rrule:dominate} assigns all literals in $C$ to $0$ (as they dominate $x$), leaving $(x)$ and forcing $x=1$ via \cref{rrule:1cls}. We state \cref{rrule:1var} explicitly for conceptual clarity and frequent subsequent use.
}

{
Two literals $\literal_a$ and $\literal_b$ (where $\var(\literal_a) \neq \var(\literal_b)$) are \emph{twins} if every clause in $\formula$ contains $\literal_a$ (resp., $\nl{\literal_a}$) if and only if it also contains $\literal_b$ (resp., $\nl{\literal_b}$).
}

\begin{rrule}\label{rrule:twins}
    If two literals $\literal_a$ and $\literal_b$ are twins in $\formula$, remove the variable $\var(\literal_b)$ from the {variable set (i.e., remove $\literal_b$ and $\nl{\literal_b}$ from the clauses, and $\var(\formula)\gets \var(\formula)\setminus\{\var(\literal_b)\}$).}
\end{rrule}

{The following two rules, \cref{rrule:subsumption-complementary-ltr} and \cref{rrule:two-2cls-complementary}, are also applicable for exact counting.}
\begin{rrule}\label{rrule:subsumption-complementary-ltr}
    If there is a clause $(\literal \vee C)$ and another clause $(\nl{\literal} \vee C \vee D)$, remove the literal $\nl{\literal}$ from $(\nl{\literal} \vee C \vee D)$.
\end{rrule}

For two literals $\literal_a$ and $\literal_b$, \emph{setting $\literal_a=\literal_b$} means replacing all occurrences of $\literal_a$ with $\literal_b$ (and $\nl{\literal_a}$ with $\nl{\literal_b}$) in the formula, and then removing the variable $\var(\literal_a)$ from the variable set.

\begin{rrule}\label{rrule:two-2cls-complementary}
    If there are two $2$-clauses $(x\vee y)$ and $(\nl{x}\vee \nl{y})$ in $\formula$, set $x=\nl{y}$ and apply \cref{rrule:taut} exhaustively. 
\end{rrule}

We use $\textsf{brute-force}(\formula)$ to denote the brute-force algorithm that returns the parity of the number of satisfying assignments of formula $\formula$ by enumerating all possible assignments. Note that $\textsf{brute-force}(\formula)$ runs in $\bigO{1}$ time if $\formula$ has only a constant number of variables.

\begin{rrule}[isolate]\label{rrule:small-subf-0}
    If formula $\formula$ can be partitioned into two non-empty subformulas $\formula_1$ and $\formula_2$, {i.e., $\formula=\formula_1\wedge \formula_2$}, with $\var(\formula_1) \cap \var(\formula_2) = \emptyset$ and $\abs{\var(\formula_1)} \le 10$, do the following:
    (1) $p \leftarrow \textsf{brute-force}(\formula_1)$; (2) {If $p = 0$, report $\parity(\formula)=0$};
    (3) Remove $\formula_1$ from $\formula$. 
\end{rrule}

\begin{rrule}[semi-isolate]\label{rrule:small-subf-1}
    If formula $\formula$ can be partitioned into two non-empty subformulas $\formula_1$ and $\formula_2$, {i.e., $\formula=\formula_1\wedge \formula_2$}, with $\var(\formula_1) \cap \var(\formula_2) = \{x\}$ and $\abs{\var(\formula_1)} \le 10$, do the following:
    (1) $p_1 \leftarrow \textsf{brute-force}(\formula_1[x=1])$ and $p_0 \leftarrow \textsf{brute-force}(\formula_1[x=0])$;
    (2) {If $p_1 = p_0 = 0$, report $\parity(\formula)=0$. } 
    (3) Remove $\formula_1$ from $\formula$; (4) If $p_0 = 1$ and $p_1 = 0$, assign $x=0$. If $p_0 = 0$ and $p_1 = 1$, assign $x=1$. If $p_0 = p_1 = 1$, leave $x$ as an unassigned variable.
\end{rrule}
{Since $\formula_1$ and $\formula_2$ share only the variable $x$, they become disjoint once $x$ is assigned. Thus, the parity calculation splits over the two possible assignments to $x$: $\parity(\formula) \equiv \parity(\formula_1[x=0]) \cdot \parity(\formula_2[x=0]) + \parity(\formula_1[x=1]) \cdot \parity(\formula_2[x=1]) \pmod 2$. Substituting $p_0$ and $p_1$, we obtain $\parity(\formula) \equiv p_0 \cdot \parity(\formula_2[x=0]) + p_1 \cdot \parity(\formula_2[x=1]) \pmod 2$. The four cases directly evaluate this equation for all possible combinations of $(p_0, p_1)$.}

{
We remark that the threshold number 10 in the above two rules is
chosen arbitrarily to exclude small isolated or semi-isolated subformulas in our subsequent analysis. It only needs to be a sufficiently large constant; using a larger or smaller safe value will not affect the current analysis.}

\begin{definition}[reduced formulas]
    We use $R(\formula)$ to denote the resulting formula after exhaustively applying the above reduction rules on $\formula$. A formula $\formula$ is \emph{reduced} if $\formula=R(\formula)$.
\end{definition}

\begin{lemma}\label{lem:rules-correctness}
    All the reduction rules are correct, i.e., $\parity(\formula)=\parity(R(\formula))$.
\end{lemma}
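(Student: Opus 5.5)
Since $R(\formula)$ is obtained by finitely many single applications of the rules, I will show that one application of any R-Rule preserves $\parity$ or reports it correctly, and then chain these steps together. Termination is immediate: every rule except \cref{rrule:duplicated} and \cref{rrule:subsumption-complementary-ltr} removes a variable or a clause, and those two remove a literal, so the length $L$ strictly decreases. In every case I will compare satisfying assignments over the current variable set $\var(\formula)$, and an assigned variable leaves that set. For each rule I will either give a bijection between the satisfying assignments of $\formula$ and those of $\formula'$, or show that the assignments removed come in pairs.

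The standard rules are routine. \cref{rrule:empty-cls} holds because an empty clause has no models. Rules \ref{rrule:duplicated}, \ref{rrule:taut} and \ref{rrule:subsumption} leave the set of models unchanged. One caveat: if a variable occurs only in a removed clause, it becomes a $0$-variable and is still counted, so I will keep $\var(\formula)$ fixed. For \cref{rrule:1cls}, every model sets $\literal=1$, so restricting gives a bijection with the models of $\formula[\literal=1]$. For \cref{rrule:subsumption-complementary-ltr}, an assignment with $\literal=1$ satisfies both clauses either way, and one with $\literal=0$ must satisfy $C$, which satisfies both versions. For \cref{rrule:two-2cls-complementary}, the two clauses force $x=\nl{y}$, so substitution is a bijection, and the tautologies it creates are the two clauses themselves, now satisfied automatically.

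The parity-specific rules rest on one fact: a $0$-variable doubles the model count. This gives \cref{rrule:0var}. For \cref{rrule:dominate}, I split $\parity(\formula)=\parity(\formula[x=1])+\parity(\formula[x=0]) \pmod 2$. In $\formula[x=1]$ the dominated variable $y$ occurs in no clause, so that term is $0$. Hence $\parity(\formula)=\parity(\formula[x=0])$, which is exactly assigning $x=0$. \cref{rrule:1var} follows by the reduction to domination already noted in the text. Each literal of $C$ dominates $x$ and is set to $0$ in turn. After each step $x$ still occurs only in that clause, so it remains dominated. At the end the $1$-clause $(x)$ forces $x=1$.

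For \cref{rrule:twins}, I partition the assignments of $\formula$ by whether $\literal_a$ and $\literal_b$ take the same value. If they are equal, each clause's satisfaction equals its satisfaction once $\literal_b,\nl{\literal_b}$ are deleted, because every clause containing $\literal_b$ (resp.\ $\nl{\literal_b}$) also contains $\literal_a$ (resp.\ $\nl{\literal_a}$). This gives a bijection with the models of $\formula'$. If they differ, I use the involution that flips both $\var(\literal_a)$ and $\var(\literal_b)$. It preserves the condition $\literal_a\neq\literal_b$. Every clause containing either variable contains the pair $\{\literal_a,\nl{\literal_b}\}$ or the pair $\{\nl{\literal_a},\literal_b\}$ as a twin pair, and such a pair evaluates to exactly one true literal both before and after the flip. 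So models map to models. The involution has no fixed points, so these models cancel mod $2$. This is the step I expect to need the most care: the case analysis must cover clauses containing $\literal_a$ with $\literal_b$ versus $\nl{\literal_a}$ with $\nl{\literal_b}$, and it must exclude the possibility that a clause contains both $\literal_b$ and $\nl{\literal_b}$. Finally, \cref{rrule:small-subf-0} follows from the multiplicativity $\parity(\formula_1\wedge\formula_2)=\parity(\formula_1)\parity(\formula_2)$ for variable-disjoint parts, and \cref{rrule:small-subf-1} follows from the decomposition displayed after that rule.
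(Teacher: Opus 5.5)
Your proposal is correct and follows essentially the same rule-by-rule verification as the paper: your fixed-point-free involution for \cref{rrule:twins} is the paper's cancellation of the cases $(\literal_a,\literal_b)=(0,1)$ and $(1,0)$, and your iterated-domination derivation of \cref{rrule:1var} is the paper's own remark. Two slips need fixing: by the definition of twins, the relevant clauses contain $\{\literal_a,\literal_b\}$ or $\{\nl{\literal_a},\nl{\literal_b}\}$ (as your following sentence says), not $\{\literal_a,\nl{\literal_b}\}$ or $\{\nl{\literal_a},\literal_b\}$, and only the former have exactly one true literal when $\literal_a\neq\literal_b$; and in \cref{rrule:subsumption-complementary-ltr} the case $\literal=1$ works because $\nl{\literal}$ is false, so the old and new clauses take the same value, not because both are satisfied.
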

\begin{proof}
    The correctness of the first five basic rules (i.e., elimination of empty clauses, tautologies, duplicated literals, subsumptions, and $1$-clauses) is straightforward.
    The correctness of \cref{rrule:0var}, \cref{rrule:dominate}, \cref{rrule:1var} (i.e., elimination of $0$-variables, domination, $1$-variables) has been justified inline. We prove the correctness of the remaining rules:

    \textbf{\cref{rrule:twins}} removes $\var(\literal_b)$ when two literals $\literal_a$ and $\literal_b$ are twins (i.e., they always appear together with the same polarity). Let $\mathcal{S}$ be the set of clauses in $\formula$ that contain $\var(\literal_a)$ and $\var(\literal_b)$. By the definition of twins, for every clause $C$ in $\mathcal{S}$, either $\literal_a, \literal_b\in C$ or $\nl{\literal_a},\nl{\literal_b}\in C$. We partition the satisfying assignments of $\formula$ based on the four possible evaluations of $(\literal_a, \literal_b) \in \{0, 1\}^2$. Consider the cases $(\literal_a=0, \literal_b=1)$ and $(\literal_a=1, \literal_b=0)$. Under both assignments, all clauses in $\mathcal{S}$ are satisfied. Since $\var(\literal_a)$ and $\var(\literal_b)$ do not appear in any clauses outside $\mathcal{S}$, the resulting formulas under these two assignments are identical. Thus, these two cases cancel each other out in terms of parity. The parity of the formula is then determined by the remaining cases where $\literal_a = \literal_b$ (i.e., $\literal_a=\literal_b=0$ and $\literal_a=\literal_b=1$). In this case, $\literal_a \vee \literal_b \equiv \literal_a$ (and $\nl{\literal_a} \vee \nl{\literal_b} \equiv \nl{\literal_a}$). Therefore, removing $\var(\literal_b)$ safely preserves the parity of the formula.
        
    \textbf{\cref{rrule:subsumption-complementary-ltr}} simplifies a clause $(\nl{\literal} \vee C \vee D)$ to $(C \vee D)$ in the presence of another clause $(\literal \vee C)$. Let $\formula'$ be the formula obtained by this replacement. It suffices to show that an assignment satisfies $\formula$ if and only if it also satisfies $\formula'$. First, any assignment satisfying $\formula'$ satisfies $(C \vee D)$, which also satisfies $(\nl{\literal} \vee C \vee D)$. Thus, it also satisfies $\formula$. Conversely, assume for contradiction that an assignment satisfies $\formula$ but not $\formula'$. This assignment must satisfy $(\nl{\literal} \vee C \vee D)$ and falsify $(C \vee D)$. This implies that $\literal = 0$ (since $C$ and $D$ evaluate to $0$). However, such an assignment would falsify the existing clause $(\literal \vee C)$ in $\formula$. A contradiction.
        
    \textbf{\cref{rrule:two-2cls-complementary}} sets $x = \nl{y}$ when the formula contains two $2$-clauses $(x \vee y)$ and $(\nl{x} \vee \nl{y})$. Observe that setting $x=y=0$ falsifies $(x \vee y)$, and setting $x=y=1$ falsifies $(\nl{x} \vee \nl{y})$. Therefore, any satisfying assignment must satisfy $x \neq y$, which justifies the replacement.
        
    \textbf{\cref{rrule:small-subf-0} and \cref{rrule:small-subf-1}} evaluate and eliminate a small subformula $\formula_1$ that shares at most one variable with the remaining formula $\formula_2$. For \cref{rrule:small-subf-0}, since $\var(\formula_1) \cap \var(\formula_2) = \emptyset$, the satisfying assignments of $\formula = \formula_1 \land \formula_2$ are formed by the Cartesian product of the satisfying assignments of $\formula_1$ and $\formula_2$. Therefore, we have $\parity(\formula) = \parity(\formula_1) \cdot \parity(\formula_2)$. Since $p = \parity(\formula_1)$, if $p = 0$, it directly implies $\parity(\formula) = 0$. If $p = 1$, then $\parity(\formula) = \parity(\formula_2)$. The rule correctly removes $\formula_1$ (leaving $\formula_2$), preserving the parity of the formula.
        
    Consider \cref{rrule:small-subf-1}. By branching on the shared variable $x \in \{0, 1\}$, we partition the satisfying assignments of $\formula$. Since $\formula_1$ and $\formula_2$ share only $x$, assigning a value $b$ to $x$ yields two variable-disjoint subformulas $\formula_1[x=b]$ and $\formula_2[x=b]$. Thus, we have
    \begin{align*}
        \parity(\formula) &= \big( \parity(\formula_1[x=0]) \cdot \parity(\formula_2[x=0]) \big) \oplus \big( \parity(\formula_1[x=1]) \cdot \parity(\formula_2[x=1]) \big)\\
        &= \big(p_0 \cdot \parity(\formula_2[x=0])\big) \oplus \big(p_1 \cdot \parity(\formula_2[x=1])\big).
    \end{align*}
    We analyze the four possible outcomes for $(p_0, p_1) \in \{0, 1\}^2$. If $p_0 = p_1 = 0$, then $\parity(\formula) = 0 \oplus 0 = 0$, and the rule correctly reports $0$. If $p_0 = 1$ and $p_1 = 0$, then $\parity(\formula) = \parity(\formula_2[x=0])$; the rule assigns $x=0$ and leaves $\formula_2[x=0]$, preserving the parity. Similarly, if $p_0 = 0$ and $p_1 = 1$, then $\parity(\formula) = \parity(\formula_2[x=1])$, which is preserved by assigning $x=1$. Finally, if $p_0 = p_1 = 1$, then $\parity(\formula) = \parity(\formula_2[x=0]) \oplus \parity(\formula_2[x=1])$. This is exactly the parity of $\formula_2$ with $x$ treated as an unassigned variable; thus, the rule simply removes $\formula_1$, which correctly leaves $\formula_2$ intact.
\end{proof}

\begin{lemma}\label{lem:poly-time-reduce}
    It takes polynomial time and space to transform any formula $\formula$ to $R(\formula)$.
\end{lemma}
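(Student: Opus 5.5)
The plan is to show two things: each individual application of a reduction rule takes polynomial time and space, and the total number of rule applications before reaching $R(\formula)$ is polynomially bounded. Together these give the claim, since space is reused across applications and only the current formula (of size at most its initial size, polynomial in the input) is stored.

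\textbf{Per-application cost.} Checking applicability of each rule is a simple scan.
\begin{itemize}
\item R-Rules~\ref{rrule:empty-cls}, \ref{rrule:duplicated}, \ref{rrule:taut}, \ref{rrule:1cls} and \ref{rrule:0var} need only a pass over the clauses.
\item Subsumption (R-Rule~\ref{rrule:subsumption}) and R-Rule~\ref{rrule:subsumption-complementary-ltr} need a check over all pairs of clauses, and for R-Rule~\ref{rrule:subsumption-complementary-ltr} also over the literals of one clause; this is $O(m^2L)$.
\item Domination (R-Rule~\ref{rrule:dominate}) and twins (R-Rule~\ref{rrule:twins}) need a check over all pairs of literals/variables against all clauses, again polynomial.
\item R-Rule~\ref{rrule:two-2cls-complementary} needs a check over all pairs of $2$-clauses.
\item For R-Rules~\ref{rrule:small-subf-0} and \ref{rrule:small-subf-1}, detecting a subformula $\formula_1$ with $\abs{\var(\formula_1)}\le 10$ can be done by enumerating all variable subsets $S$ of size at most $10$ (there are $O(n^{10})$ of them) together with, for semi-isolation, the candidate shared variable $x$. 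For each such choice, we take $\formula_1$ to be the clauses touching $S$ and check that they touch nothing outside $S\cup\{x\}$, with the appropriate condition on $x$. The brute-force call then costs $O(2^{10}\cdot L)=O(L)$.
\end{itemize}
Executing any rule (assigning literals, deleting clauses or literals, substituting $x=\nl{y}$) is clearly linear in $L$.

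\textbf{Termination bound.} Here I would use the potential $\Phi=n+m+L$. Every rule except reporting an answer strictly decreases $\Phi$, and the rules that report an answer terminate immediately.
\begin{itemize}
\item R-Rule~\ref{rrule:duplicated} and R-Rule~\ref{rrule:subsumption-complementary-ltr} decrease $L$.
\item R-Rules~\ref{rrule:taut} and \ref{rrule:subsumption} remove a clause.
\item Assignments (R-Rules~\ref{rrule:1cls}, \ref{rrule:1var}, \ref{rrule:dominate}) remove a variable and never add literals.
\item R-Rule~\ref{rrule:twins} removes a variable. R-Rule~\ref{rrule:two-2cls-complementary} removes a variable: the substitution introduces no new literals, and the subsequent tautology removals only delete clauses.
\item R-Rules~\ref{rrule:small-subf-0}/\ref{rrule:small-subf-1} delete the nonempty $\formula_1$.
\end{itemize}
The one subtle point is that $\formula[C=1]$ adds clauses, but no reduction rule uses it, so $\Phi$ never increases. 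Hence there are at most $\Phi\le 2L+n+m$ applications, each polynomial, giving polynomial total time.

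The main obstacle is the isolate/semi-isolate detection. The rule as stated allows any partition, so we must make sure the enumeration over variable sets of size at most $10$ captures every applicable partition. This holds because $\formula_1$ is determined by which clauses contain variables of $\var(\formula_1)\setminus\{x\}$, plus possibly clauses only on $x$. Those leftover clauses are $1$-clauses on $x$, which R-Rule~\ref{rrule:1cls} handles anyway, so it suffices to apply the rule whenever any such small set is found.
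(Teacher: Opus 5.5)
Your proof is correct and follows the same route as the paper: a potential built from $n$, $m$, $L$ that no rule increases and every non-terminating rule strictly decreases, together with polynomial-time applicability checks. The paper orders $(n,m,L)$ lexicographically, whereas your sum makes the bound on the number of applications explicit; your enumeration of variable sets of size at most $10$ for R-Rules~\ref{rrule:small-subf-0} and~\ref{rrule:small-subf-1} spells out a detail the paper only asserts. One small correction there: clauses whose only variable is $x$ can also be tautologies or empty clauses, not just $1$-clauses, but other rules handle these too, so the argument stands.
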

\begin{proof}
    We define a potential function $\Phi(\formula) = (n(\formula), m(\formula), L(\formula))$ and evaluate it using lexicographical order. Observe that every application of a reduction rule decreases $\Phi(\formula)$: 
    \begin{itemize}
        \item Rules~\ref{rrule:1cls}, \ref{rrule:0var}, \ref{rrule:1var}, \ref{rrule:dominate}, \ref{rrule:twins}, \ref{rrule:two-2cls-complementary}, \ref{rrule:small-subf-0}, and \ref{rrule:small-subf-1} decrease $n(\formula)$ by at least $1$.
        \item Rules~\ref{rrule:empty-cls}, \ref{rrule:taut}, \ref{rrule:subsumption}, and  do not increase $n(\formula)$ but decrease $m(\formula)$ by at least $1$.
        \item Rules~\ref{rrule:duplicated} and \ref{rrule:subsumption-complementary-ltr} do not increase $n(\formula)$ or $m(\formula)$, but decrease $L(\formula)$ by at least $1$.
    \end{itemize}
    Since all parameters are bounded by the initial input size and checking the applicability of any rule takes polynomial time and space, the exhaustive application of these rules must terminate in polynomial time and space.
\end{proof}

\begin{lemma}[properties of reduced formula]\label{lem:reduced-formula}
    For a reduced formula $\formula$, it holds that 
    \begin{enumerate}
        \item All variables in $\formula$ are $2^+$-variables;
        \item All clauses in $\formula$ are $2^+$-clauses;
        \item Any two clauses in $\formula$ have at most one common $2$-variable;
        \item Any two $2$-clauses can have at most one common variable;
        \item\label{prop:reduced-formula-subformula} For any subformula $\formula'$ with $\abs{\var(\formula')}\leq 10$, it holds that $\abs{\var(\formula')\cap \var(\formula\setminus \formula')}\geq 2$.
    \end{enumerate}
\end{lemma}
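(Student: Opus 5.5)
Each of the five properties follows by contradiction: if it failed, some reduction rule would still apply, contradicting $\formula=R(\formula)$. Throughout, I use that a reduced formula has no empty clause, since otherwise \cref{rrule:empty-cls} terminates. I will treat the items in order.

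For item 1, a $0$-variable triggers \cref{rrule:0var} and a $1$-variable triggers \cref{rrule:1var}. For item 2, an empty clause is excluded as above, and a $1$-clause triggers \cref{rrule:1cls}.

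For item 3, suppose clauses $C\neq D$ share two $2$-variables $y,z$. Since both variables have degree $2$, each occurs exactly in $C$ and $D$. By \cref{rrule:duplicated,rrule:taut}, each variable appears once per clause with one polarity. Up to renaming $y$ as $\nl y$ and $z$ as $\nl z$, fix $y\in C$ and $z\in C$, and split on the polarities in $D$.
\begin{itemize}
\item If $y,z\in D$, or $\nl y,\nl z\in D$, then the literals $y$ and $z$ are twins (in the second case, $y$ appears exactly in $C$ and $\nl y$ exactly in $D$, and likewise for $z$). So \cref{rrule:twins} applies.
\item If $y\in D$ and $\nl z\in D$, then every clause containing the variable $y$ also contains the variable $z$, but the polarities mismatch, so twin-ness fails. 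Here I instead use domination: $y$ appears in both $C$ and $D$, so the literal $y$ dominates the variable $z$, and \cref{rrule:dominate} applies.
\item The remaining case $\nl y\in D$, $z\in D$ is symmetric: the literal $z$ dominates the variable $y$.
\end{itemize}
So in every case some rule fires.

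For item 4, let two $2$-clauses share both their variables $x,y$. The clauses differ, and by \cref{rrule:taut} neither is a tautology, so they differ in polarity. Up to renaming, they are either $(x\vee y)$ and $(\nl x\vee\nl y)$, which triggers \cref{rrule:two-2cls-complementary}, or $(x\vee y)$ and $(\nl x\vee y)$, where \cref{rrule:subsumption-complementary-ltr} removes $\nl x$ (with $C=(y)$ and $D=\emptyset$).

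For item 5, take $\formula'$ with $|\var(\formula')|\le 10$, and let $\formula_2=\formula\setminus\formula'$. If $\formula_2$ is non-empty and shares $0$ or $1$ variables with $\formula'$, then \cref{rrule:small-subf-0} or \cref{rrule:small-subf-1} applies, contradicting reducedness.

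The edge cases are the delicate step. If $\formula'$ is empty, the claim is false as stated, and the same holds if $\formula'=\formula$, where the intersection is empty. I would read the statement as implicitly requiring $\formula'$ to be a proper non-empty subformula. Even then, \cref{rrule:small-subf-0} only ever fires when $\formula$ genuinely splits into two parts, so I will state item 5 for proper non-empty $\formula'$ and note this convention. The main obstacle overall is the case analysis in item 3, since the mixed-polarity case needs domination rather than twins.
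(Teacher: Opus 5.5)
Your proposal is correct and follows the same route as the paper. Each property is derived by showing that its failure would trigger a specific reduction rule, with the same twins/domination split in item 3 and the same rule assignments in items 4 and 5. Your remark that item 5 must be read for proper non-empty subformulas $\formula'$, so that both parts in \cref{rrule:small-subf-0} and \cref{rrule:small-subf-1} are non-empty, is a valid clarification of an assumption the paper's proof makes implicitly.
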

\begin{proof}
    {
    We prove each property by showing that its violation would trigger a reduction rule:
    \begin{enumerate}
        \item If there is a $0$-variable or a $1$-variable, \cref{rrule:0var} or \cref{rrule:1var} would apply, respectively.
        \item Any empty clause or $1$-clause would immediately trigger \cref{rrule:empty-cls} or \cref{rrule:1cls}.
        \item Assume two clauses $C$ and $D$ share two $2$-variables, $x$ and $y$. Note that $x$ and $y$ only appear in $C$ and $D$. We analyze their polarity configurations: (i) If they share the same polarity in both clauses (e.g., $x, y \in C$ and $x, y \in D$, or $x, y \in C$ and $\nl{x}, \nl{y} \in D$), they are twins by definition, triggering \cref{rrule:twins}. (ii) If one variable maintains polarity while the other flips (e.g., $x, y \in C$ and $\nl{x}, y \in D$), then the literal $y$ appears in every clause containing $x$, meaning $y$ dominates $x$, triggering \cref{rrule:dominate}.
        \item Assume two $2$-clauses $C$ and $D$ share two variables $x$ and $y$ and $C=(x\lor y)$. Their structures are restricted to combinations of $x$ and $y$. If $D=(x\lor y)$, \cref{rrule:subsumption} applies. If $D=(\nl{x} \lor y)$ or $D=(x \lor \nl{y})$), \cref{rrule:subsumption-complementary-ltr} is applicable. If $D=(\nl{x} \lor \nl{y})$), \cref{rrule:two-2cls-complementary} applies. 
        \item If the intersection size is $0$, \cref{rrule:small-subf-0} applies. If the intersection size is $1$, \cref{rrule:small-subf-1} applies. Therefore, the intersection must contain at least $2$ variables. \qedhere
    \end{enumerate}
    }
\end{proof}

\subsection{Branching Rules}
Let $x$ be a variable in formula $\formula$.
The simplest branching rule is to branch on $x=0$ and $x=1$, following from $\parity(\formula)=\parity(\formula[x=0])\oplus \parity(\formula[x=1])$; we refer to this as \emph{simple branching}.
We also utilize the following branching rule based on a variable:
\begin{lemma}[variable branching]\label{lem:variable-branching}
    Let $x$ be a $d$-variable in formula $\formula$ that is contained in clauses $(\literal_{x,1}\vee C_1), (\literal_{x,2}\vee C_2), ..., (\literal_{x,d}\vee C_d)$, where $\literal_{x,i}\in \{x,\nl{x}\}$ for each $i\in [d]$.
    Then
    \[
        \parity(\formula)=\bigoplus_{i=1}^{d}\parity(\formula[C_1=\cdots =C_{i-1}=1, C_i=0, \literal_{x,i}=1]).
    \]
\end{lemma}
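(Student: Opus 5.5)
The plan is to partition the set of satisfying assignments of $\formula$ into $d$ disjoint classes according to the \emph{first} clause (in the order $1,\dots,d$) among $(\literal_{x,i}\vee C_i)$ whose residual part $C_i$ is falsified. Since the parity of a disjoint union is the XOR of the parities of the parts, it then suffices to show two things: every satisfying assignment falls into exactly one class, and for each $i$ the $i$-th class is in parity-preserving correspondence with the satisfying assignments of $\formula_i:=\formula[C_1=\cdots=C_{i-1}=1, C_i=0, \literal_{x,i}=1]$.

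For coverage, I would argue by contradiction. Suppose a satisfying assignment $\sigma$ of $\formula$ satisfies every $C_i$. Flipping $x$ in $\sigma$ keeps all $d$ clauses containing $x$ satisfied, since each is satisfied through its $C_i$, and $x$ occurs nowhere else. So $\sigma$ and $\sigma$ with $x$ flipped are both satisfying, and these assignments pair up under the flip involution on the set of satisfying assignments that satisfy all $C_i$. This set therefore has even cardinality and contributes $0$ to the parity. Hence only assignments falsifying some $C_i$ matter modulo $2$. The remaining assignments are partitioned by the least $i$ with $C_i$ falsified: in that class $C_1,\dots,C_{i-1}$ are true and $C_i$ is false. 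Then $\literal_{x,i}$ must be $1$ for the clause $(\literal_{x,i}\vee C_i)$ to hold.

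For the correspondence, I would check that an assignment to $\var(\formula)$ lies in class $i$ and satisfies $\formula$ if and only if it satisfies $\formula\wedge C_1\wedge\cdots\wedge C_{i-1}$, sets all literals of $C_i$ to $0$, and sets $\literal_{x,i}=1$. This is exactly a satisfying assignment of $\formula_i$ extended by the forced values. The one subtle point, which I expect to be the main obstacle, is bookkeeping of variables. Assigning $C_i=0$ and $\literal_{x,i}=1$ fixes those variables, so they are removed rather than left free, and no spurious free variable is introduced. The added clauses $C_j$ for $j<i$ only restrict assignments. Variables that vanish from clauses stay in $\var(\formula_i)$ as unassigned, and correctly double counts, matching the original count. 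Also, if $C_i$ shares a variable with opposite sign with $\literal_{x,i}$, or with $C_j$ for $j<i$, the branch is simply empty or contains an empty clause, which is consistent on both sides.

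Summing over $i$ modulo $2$ yields $\parity(\formula)=\bigoplus_{i=1}^{d}\parity(\formula_i)$.
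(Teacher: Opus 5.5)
Your proof is correct and follows the paper's argument. You partition by the least $i$ with $C_i$ falsified, which forces $\literal_{x,i}=1$, and you discard the leftover class where all $C_i$ hold because $x$ is then free; you spell this out with the flip involution where the paper cites the $0$-variable rule. One side remark is off but harmless: a variable shared with opposite signs by $C_i$ and $C_j$ ($j<i$) makes the added clause $C_j$ satisfied, and an empty clause arises only when $C_j\subseteq C_i$, so your degenerate-case comment has the signs reversed, though nothing in your direct equivalence depends on it.
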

\begin{proof}
    Consider the partition of the satisfying assignments based on the smallest index $i \in [d]$ such that the sub-clause $C_i$ evaluates to $0$. 
    For a fixed $i$, the condition $(C_1=1, \dots, C_{i-1}=1, C_i=0)$ forces the literal $\literal_{x,i}$ to be $1$ to satisfy the $i$-th clause.
    The only assignments not covered by these $d$ cases are those where $C_1=\dots=C_d=1$. 
    In this remaining case, all clauses containing $x$ are satisfied, making $x$ a $0$-variable. 
    By \cref{rrule:0var}, we have $\parity(\formula[C_1=\cdots=C_d=1])=0$.
    This completes the proof.
\end{proof}

Apart from branching on variables, we can also perform branching based on clauses.
\begin{lemma}[clause branching]\label{lem:clause-branching}
    Let $C$ be a clause in formula $\formula$. Let $\formula \setminus \{C\}$ denote the formula obtained by removing $C$ from $\formula$. Then
    \[
        \parity(\formula) = \parity(\formula\setminus \{C\}) \oplus \parity((\formula\setminus\{C\})[C=0]).
    \]
\end{lemma}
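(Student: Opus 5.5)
The identity comes from counting the satisfying assignments of $\formula\setminus\{C\}$ and splitting them by the truth value of $C$. Let $V=\var(\formula)$. I will treat all formulas as ranging over the same variable set: variables that disappear from the clauses of $\formula\setminus\{C\}$ are still counted as unassigned, consistent with how \cref{rrule:0var} treats free variables. For a formula $\psi$ on $V$, let $\#\psi$ be its number of satisfying assignments.

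First, I partition the satisfying assignments of $\formula\setminus\{C\}$ into those that satisfy $C$ and those that falsify $C$. An assignment $\sigma$ of $V$ satisfies $\formula\setminus\{C\}$ and satisfies $C$ exactly when it satisfies $\formula$. This gives
\[
\#(\formula\setminus\{C\}) = \#\formula + \#\bigl\{\sigma : \sigma \models \formula\setminus\{C\},\ \sigma(C)=0\bigr\}.
\]

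Second, I identify the second term with $\#\bigl((\formula\setminus\{C\})[C=0]\bigr)$. The condition $\sigma(C)=0$ fixes every literal of $C$ to $0$, so the variables of $C$ are determined. The remaining assignments correspond bijectively to satisfying assignments of $(\formula\setminus\{C\})[C=0]$ over $V\setminus\var(C)$. Here I will check two points. If $C$ contains complementary literals, the set is empty and $\formula[C=0]$ is ill-defined; I resolve this by assuming $C$ is not a tautology (\cref{rrule:taut}), or by noting that the term is then $0$. For consistency, variables that occur only in $C$ must be counted as eliminated by the assignment, not as free.

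Taking the equation modulo $2$ then gives $\parity(\formula\setminus\{C\}) = \parity(\formula)\oplus\parity((\formula\setminus\{C\})[C=0])$. Adding $\parity((\formula\setminus\{C\})[C=0])$ to both sides mod $2$ yields the claim. The only real obstacle is the variable-set bookkeeping. If a variable $y$ occurs only in $C$, then $\formula\setminus\{C\}$ must still carry $y$ as a free variable, so its count doubles on $y$. This is correct and consistent, provided $\var(\formula\setminus\{C\})$ is taken to be $V$ rather than recomputed from the clauses. I would state this convention explicitly in the proof.
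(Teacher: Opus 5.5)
Your proof is correct and is the same argument as the paper's. You partition the satisfying assignments of $\formula\setminus\{C\}$ by whether they satisfy $C$, identify the two parts with the models of $\formula$ and of $(\formula\setminus\{C\})[C=0]$, and reduce modulo $2$. Your remarks on keeping the variable set fixed (so variables occurring only in $C$ remain free in $\formula\setminus\{C\}$) and on tautological $C$ are sound clarifications that the paper leaves implicit.
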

\begin{proof}
    Let $S$ be the set of satisfying assignments for the relaxed formula $\formula \setminus \{C\}$.
    We partition $S$ into two disjoint sets: $S_{\rm sat} = \{\sigma \in S \mid \sigma(C)=1\}$ and $S_{\rm unsat} = \{\sigma \in S \mid \sigma(C)=0\}$.
    Observe that $S_{\rm sat}$ is exactly the set of satisfying assignments for $\formula$, and $S_{\rm unsat}$ corresponds to the satisfying assignments for $(\formula\setminus\{C\})[C=0]$.
    Since $|S| = |S_{\rm sat}| + |S_{\rm unsat}|$, the lemma follows by taking modulo 2.
\end{proof}

{In what follows, we assume any branched formula $\formula$ is reduced; a formula $\formula'$ created in a sub-branch is not necessarily reduced, so we use $R(\formula')$ to denote its reduced one.}

%% file: sections/Parity-d-occ-SAT-m.tex
\section{Faster-than-$2^m$ Algorithms for \ParitydOccSAT{}}\label{sec:d-occ}
In this section, we present algorithms for \ParitydOccSAT{} that improve upon the $2^m$-barrier, where $m$ is the number of clauses. 
To this end, we first give a Turing reduction {from the general \ParitySAT{} (which includes \ParitydOccSAT{}) to its restricted version} on \emph{positive formulas}, i.e., formulas containing only positive literals. 
{We note that \ParitySAT{} on positive formulas cannot be solved in subexponential time under the randomized Exponential Time Hypothesis (see Appendix~\ref{sec:rETH-hard} for details).}

\begin{lemma}\label{lem:reduction-positive}
    Suppose there is an algorithm $\mathcal{A}_{\text{pos}}$ that solves {\ParitySAT{}} on positive formulas in time $\bigOstar{r^m}$. 
    Then, {\ParitySAT{}} can be solved in time
    $
        \bigOstar{\max(1.6181, r)^m}
    $.
\end{lemma}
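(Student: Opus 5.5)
The plan is a branch-and-search procedure measured by the number of clauses $m$. It repeatedly eliminates negative literals until the formula is positive, and then calls $\mathcal{A}_{\text{pos}}$ at each leaf. I first reduce with the basic rules (empty clauses, tautologies, $1$-clauses, $0$-variables; all of these are correct by \cref{lem:rules-correctness} and never increase $m$).

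\textbf{Pure variables are free.} If a variable $x$ occurs only negatively, I rename $\nl{x}$ to a fresh positive variable $x'$. This is a bijection on assignments, so the number of models, and hence the parity, is unchanged. It costs nothing in $m$. After handling all pure variables this way, the only obstacles to positivity are \emph{mixed} variables, i.e.\ $(a,b)$-variables with $a,b\ge 1$.

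\textbf{Branching on a mixed variable $x$.} I distinguish two cases.
\begin{itemize}
\item If $a+b\ge 3$, I use simple branching: $\parity(\formula)=\parity(\formula[x=1])\oplus\parity(\formula[x=0])$. The branch $x=1$ removes the $a$ clauses containing $x$, and the branch $x=0$ removes the $b$ clauses containing $\nl{x}$. So the branching vector is $(a,b)$ with $a+b\ge 3$ and $\min(a,b)\ge 1$. By \cref{lem:vec-dominate}, $\tau(a,b)\le\tau(2,1)<1.6181$.
\item If $x$ is a $(1,1)$-variable in clauses $(x\vee C)$ and $(\nl{x}\vee D)$, simple branching only gives $(1,1)$, which is not good enough. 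Instead I apply clause branching, \cref{lem:clause-branching}, to $(x\vee C)$.
\begin{itemize}
\item In $\formula\setminus\{x\vee C\}$, the variable $x$ becomes a $1$-variable. \cref{rrule:1var} then satisfies $(\nl{x}\vee D)$, so $2$ clauses disappear.
\item In $(\formula\setminus\{x\vee C\})[C=0,\,x=0]$ (the clause forces $x=0$ once $C$ is falsified), $(\nl{x}\vee D)$ is satisfied. So again $2$ clauses disappear.
\end{itemize}
This gives the vector $(2,2)$ with factor $\sqrt2<1.6181$. Here $\formula[C=0]$ can only shrink or delete other clauses, so $m$ never increases. If $C$ is empty, the $x=1$ forcing is just a $1$-clause, so nothing changes.
\end{itemize}

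\textbf{Leaves and running time.} At a leaf the formula has no mixed variables and, after flipping pure-negative variables, is positive with $m'\le m$ clauses. There I call $\mathcal{A}_{\text{pos}}$ in time $O^*(r^{m'})$ and XOR the results along the tree. Let $c=\max(1.6181,r)$ and $T(m)$ be the total running time. I prove $T(m)\le O^*(c^m)$ by induction using $T(m)\le T(m-a_1)+T(m-a_2)$ with $\tau(a_1,a_2)\le 1.6181\le c$; this gives $c^{m-a_1}+c^{m-a_2}\le c^m$. Each node does polynomial work, and the depth is at most $m$, so the polynomial overhead is harmless.

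The main obstacle I expect is the $(1,1)$ case. Naive variable branching gives factor $2$ there, and the argument depends on the clause-branching trick giving $(2,2)$. I must double-check that removing the clause and invoking \cref{rrule:1var} is valid; it is, because \cref{rrule:1var} preserves parity via the domination argument. I must also check that no step increases $m$. All remaining cases reduce to the golden-ratio bound $\tau(1,2)$ via \cref{lem:vec-dominate}.
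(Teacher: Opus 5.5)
Your proof is correct, but you use a different branching rule from the paper's. The paper uses a single rule for every mixed variable $x$: clause branching (\cref{lem:clause-branching}) on some clause $C$ containing the literal $x$. The first branch drops $C$, a decrease of $1$. The second falsifies $C$, which sets $x=0$ and satisfies a clause containing $\nl{x}$, a decrease of at least $2$. This gives $\tau(1,2)$ outright, with no reduction rule and no case split on $(a,b)$.

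You instead use simple branching when $a+b\ge 3$ (vector $(a,b)$, dominated by $(2,1)$ via \cref{lem:vec-dominate}). You reserve clause branching, boosted by \cref{rrule:1var}, for $(1,1)$-variables, which gives $(2,2)$. This yields better factors on some subcases, but the worst case is still $\tau(1,2)$, coming from $(1,2)$- and $(2,1)$-variables. So the extra case analysis buys nothing for this lemma. In particular, the $(1,1)$ case you flag as the main obstacle is already handled at $(1,2)$ by plain clause branching.

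The paper's version also uses only count-preserving steps. That is why the authors can remark in the conclusion that a \#SAT analogue of the lemma holds. Your $(2,2)$ step relies on the parity-specific \cref{rrule:1var}; this is harmless here and could be dropped.

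A small expository slip: in the second branch of clause branching on $(x\vee C)$, $x=0$ holds by definition because the whole clause is falsified, not because it is ``forced''. Also, your remark about an empty $C$ is moot, since a clause $(x)$ would already be removed by the $1$-clause rule.
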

\begin{proof}
    First, for any variable $x$ that appears solely as negative literals $\nl{x}$ in $\formula$, we flip the variable (i.e., replace every occurrence of $\nl{x}$ with $x$).
    If the formula $\formula$ {(with variables flipped)} is already a positive formula, we directly apply algorithm $\mathcal{A}_{\text{pos}}$ to solve it in $\bigOstar{r^m}$ time.

    Otherwise, there must exist a variable $x$ that has both positive and negative literals, i.e., $x$ is a $(a,b)$-variable with $a,b\ge 1$. Let $C$ be a clause containing the literal $x$. There exists at least one other clause $D$ containing the complementary literal $\nl{x}$. We apply the clause branching rule in \cref{lem:clause-branching} on $C$.
    In the first branch ($\formula\setminus \{C\}$), the clause $C$ is removed, so the number of clauses decreases by $1$.
    In the second branch ($\formula[C=0]$), all literals in $C$ are assigned to $0$. In particular, $x$ is assigned to $0$ and clause $D$ is satisfied. Thus, both $C$ and $D$ are removed. The number of clauses decreases by at least $2$.
    This yields a branching factor of $\tau(1,2)<1.6181$.
    The overall running time is therefore bounded by $\bigOstar{\max(1.6181, r)^m}$.
\end{proof}

The above lemma suggests that, to break the $2^m$-barrier, it suffices to solve this restricted version efficiently.
In our case, we will use the following corollary:
\begin{corollary}\label{coro:reduce-to-positive}
    Suppose there exists an algorithm $\mathcal{A}_{\text{pos}}$ that solves {\ParitydOccSAT{}} on positive formulas in time $\bigOstar{r_d^m}$. Then, {\ParitydOccSAT{}} can be solved in time
    $
        \bigOstar{\max(1.6181, r_d)^m}
    $.
\end{corollary}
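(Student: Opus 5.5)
The plan is to rerun the proof of \cref{lem:reduction-positive} and check that every operation in it stays inside the class of formulas in which each variable has degree at most $d$. The corollary then follows because the algorithm $\mathcal{A}_{\text{pos}}$ is only ever called on positive $d$-occ formulas.

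First, the preprocessing step flips every variable $x$ that occurs only as $\nl{x}$. Flipping changes polarities but not the number of occurrences of any variable. The flipped formula is therefore still a $d$-occ formula, and flipping is a bijection on assignments, so the parity is unchanged. If the result is positive, we call $\mathcal{A}_{\text{pos}}$, which takes $\bigOstar{r_d^m}$ time.

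Otherwise there is a variable $x$ occurring both as $x$, in a clause $C$, and as $\nl{x}$, in a clause $D$. We apply \cref{lem:clause-branching} to $C$.
\begin{itemize}
\item In the branch $\formula\setminus\{C\}$ one clause is deleted and nothing is added, so no degree increases.
\item In the branch $(\formula\setminus\{C\})[C=0]$ we only assign variables. This removes satisfied clauses and deletes falsified literals, and neither step increases any degree. Setting $x=0$ satisfies $D$, so the clause count drops by at least $2$.
\end{itemize}
Both children are therefore again instances of \ParitydOccSAT{}, possibly non-positive. We recurse on each child, re-flipping at the start of every call. Polarities of other variables may change along the way, but the measure $m$ strictly decreases in each branch.

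The branching vector is $(1,2)$ with $\tau(1,2)<1.6181$. Leaves of the search tree are either resolved trivially, for example when an empty clause appears, or handled by $\mathcal{A}_{\text{pos}}$ on a positive $d$-occ formula with $m'\le m$ clauses. The total time is at most $\bigOstar{1.6181^{m-m'}\cdot r_d^{m'}}$, which is at most $\bigOstar{\max(1.6181,r_d)^m}$.

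The only point needing care is that the restriction $m'\le m$ and the bounded-occurrence property are preserved. Both hold because clause removal and variable assignment never create new occurrences, so I expect no real obstacle. The corollary is essentially \cref{lem:reduction-positive} read within the class of $d$-occ formulas.
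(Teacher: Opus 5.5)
Your proposal is correct and matches the paper's reasoning. The paper states the corollary without a separate proof because flipping, clause removal and the assignment $C=0$ in the proof of \cref{lem:reduction-positive} never increase a variable's degree. The $(1,2)$-branching therefore stays inside \ParitydOccSAT{}, and $\mathcal{A}_{\text{pos}}$ is only called on positive $d$-occ formulas with at most $m$ clauses, which is what you verify.
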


Now we are left with solving \ParitydOccSAT{} on positive formulas.
We observe that applying the variable branching from \cref{lem:variable-branching} to positive formulas yields a running time of $\bigOstar{2^{m(1 - 1/\bigO{2^d})}}$.
Specifically, applying this rule on a variable of degree $d$ results in a measure decrease of $d - i + 1$ in the $i$-th branch (where $d$ clauses are satisfied and $i - 1$ clauses are added back). 
This generates a branching factor of $\phi_d=\tau(d, d-1, \dots, 1)$.
Here, $\phi_d$ is the unique real solution of $x^d(2-x)=1$ in the interval $(1,2)$, known as the $d$-th order Fibonacci constant, which is $2^{(1-1/\bigO{2^d})}$.
With this simple algorithm for positive formulas, we conclude that {\ParitydOccSAT{}} can be solved in time $\bigOstar{\phi_d^m} \subseteq \bigOstar{2^{m(1-1/\bigO{2^d})}}$.
That is, for any fixed $d$, the running time has a base strictly smaller than 2.

Next, we demonstrate that by leveraging existing results, we can improve the asymptotic bound to $\bigOstar{2^{m(1 - 1/O(d))}}$, as desired.
\begin{theorem}
    There is a randomized algorithm that solves \ParitydOccSAT{} with $m$ clauses in $\bigOstar{2^{m(1-1/O(d))}}$ time and polynomial space.
\end{theorem}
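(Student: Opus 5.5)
The plan is to combine \cref{coro:reduce-to-positive} with a parity-preserving duality. For positive formulas, this duality turns \ParitydOccSAT{} on $m$ clauses into a \#$d$-SAT instance with $m$ variables. We then run the randomized polynomial-space \#$k$-SAT algorithm of Impagliazzo, Matthews and Paturi~\cite{conf/soda/ImpagliazzoMP12}, which runs in $\bigOstar{2^{n(1-1/\bigO{k})}}$ time on $n$ variables, with $k=d$ and $n=m$.

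\textbf{Step 1 (inclusion--exclusion mod 2).} Let $\formula=\{C_1,\dots,C_m\}$ be a positive formula on $n$ variables. An assignment is the same thing as the set $F$ of variables it sets to $0$, and it falsifies $C_j$ exactly when $\var(C_j)\subseteq F$. By inclusion--exclusion over the set $S\subseteq[m]$ of falsified clauses,
\[
\#\mathrm{sat}(\formula)=\sum_{S\subseteq[m]}(-1)^{|S|}\,2^{\,n-\abs{\bigcup_{j\in S}\var(C_j)}} .
\]
Working modulo $2$, signs disappear. A term survives only when its power of two is $2^0$, that is, when $\bigcup_{j\in S}\var(C_j)=\var(\formula)$. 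Hence $\parity(\formula)$ equals the parity of the number of subfamilies of clauses that cover every variable. This is the step I expect to need the most care: the bookkeeping of the exponent, and the fact that every variable of $\var(\formula)$ must be covered. If some variable appears in no clause, the count is even, which is consistent with \cref{rrule:0var}.

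\textbf{Step 2 (dual formula).} Introduce a variable $y_j$ for each clause $C_j$. For each original variable $x$, add the positive clause $D_x=\bigvee_{j:\,x\in C_j} y_j$. Covering families $S$ correspond bijectively to satisfying assignments of $\psi=\bigwedge_x D_x$, via $y_j=1\iff j\in S$. Since each $x$ occurs in at most $d$ clauses, $\psi$ is a $d$-CNF on $m$ variables, so $\parity(\formula)=\parity(\psi)$. The construction takes polynomial time.

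\textbf{Step 3 (solve and assemble).} We count the models of $\psi$ with~\cite{conf/soda/ImpagliazzoMP12} and take the result mod $2$. This solves \ParitydOccSAT{} on positive formulas in $r_d^m$ time with $r_d=2^{1-1/\bigO{d}}$, randomized and in polynomial space. \cref{coro:reduce-to-positive} then gives time $\bigOstar{\max(1.6181,r_d)^m}$ for general formulas; its branching adds only polynomial space and keeps each branch within $d$-occ. Because $1.6181<2^{0.7}$, choosing the constant hidden in $\bigO{d}$ large enough (at least $\approx 4$ for small $d$) absorbs $1.6181$ into $2^{m(1-1/\bigO{d})}$ for every $d\ge1$.

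For a constant success probability, the randomized counting algorithm is assumed correct with high probability on each call. Since there are exponentially many leaves, we amplify by repetition to error $2^{-\Omega(m)}$ per call and take a union bound. This costs only a polynomial factor.
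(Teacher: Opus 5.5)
Your proof is correct and takes essentially the paper's route: reduce to positive formulas via \cref{coro:reduce-to-positive}, then combine hitting-set/set-cover duality with the mod-$2$ identity equating the numbers of hitting sets and set covers, and finally count models of the resulting $d$-CNF on $m$ variables with~\cite{conf/soda/ImpagliazzoMP12}. The only difference is the order: you prove the parity identity inline by inclusion--exclusion on the primal system and then dualize, whereas the paper dualizes first and cites Cygan et al.\ on the dual; both reach the same formula $\psi$, and your explicit absorption of the $1.6181^m$ term into $2^{m(1-1/\bigO{d})}$ and your error-amplification remark fill in details the paper leaves implicit.
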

\begin{proof}
    By \cref{coro:reduce-to-positive}, it suffices to give an $\bigOstar{2^{m(1-1/\bigO{d})}}$-time algorithm for {\ParitydOccSAT{}} on positive formulas. To this end, we reduce this problem with $m$ clauses to \#$d$-SAT with $N=m$ variables {via a chain of reductions}, and then apply the randomized algorithm in \cite{conf/soda/ImpagliazzoMP12} to solve the instance in $\bigOstar{2^{N(1-1/\bigO{d})}}$ time and polynomial space.

    We first formalize the intermediate problems that will be used in the chain of reduction. Given a universe $U$ and a family $\mathcal{F}$ of subsets of $U$, a \emph{hitting set} of $(U, \mathcal{F})$ is a subset $H \subseteq U$ such that for every $S \in \mathcal{F}$, $H \cap S \neq \emptyset$, while a \emph{set cover} of $(U, \mathcal{F})$ is a subcollection $\mathcal{C} \subseteq \mathcal{F}$ such that $\bigcup_{S \in \mathcal{C}} S = U$. 
    The {Parity-HS} (resp.\ {Parity-SC}) problem asks to compute the parity of the number of hitting sets (resp.\ set covers) of $(U, \mathcal{F})$. 
    We denote by {Parity-$d$-occ-HS} the restriction of {Parity-HS} where each element $u \in U$ occurs in at most $d$ sets of $\mathcal{F}$. 
    Furthermore, we define {Parity-$d$-HS} and {Parity-$d$-SC} as the restrictions of {Parity-HS} and {Parity-SC}, respectively, where each set $S \in \mathcal{F}$ has size at most $d$.
    
    {
    An overview of the chain of reductions is illustrated in \cref{fig:reduction-chain}. 
    \begin{figure}[htbp]
        \centering
        \input{figures/diagram_reduction.tex}
        \caption{The chain of reductions from \ParitydOccSAT{} on positive formulas to \#$d$-SAT.}
        \label{fig:reduction-chain}
    \end{figure}
    }

    Consider a positive formula $\formula$ with variable set $V$ and clause set $\mathcal{C}$ (where $|\mathcal{C}|=m$). We can view $\formula$ as a set system $(V, \mathcal{C})$. 
    Since each variable appears at most $d$ times, computing $\parity(\formula)$ is equivalent to solving {Parity-$d$-occ-HS} on $(V, \mathcal{C})$. 
    To solve this, we consider the \emph{dual} set system $(\mathcal{C}, V^*)$, where the universe is the set of clauses $\mathcal{C}$, and $V^*$ contains, for each variable $x \in V$, the set of clauses containing $x$. 
    Crucially, hitting sets in the primal system $(V, \mathcal{C})$ correspond bijectively to set covers in the dual system $(\mathcal{C}, V^*)$.
    Moreover, the frequency constraint in the primal system implies that every set in the dual family $V^*$ has size at most $d$. 
    Thus, the problem is equivalent to {Parity-$d$-SC} on a universe of size $m$.

    It was shown by Cygan et al.~\cite[Theorem 4.3]{journals/talg/CyganDLMNOPSW16} that for any set system, the parity of the number of set covers equals the parity of the number of hitting sets. 
    Applying this identity to our dual system, we are left with solving {Parity-$d$-HS} on the universe $\mathcal{C}$ of size $m$. 
    Finally, this naturally reduces to solving \#$d$-SAT with $m$ variables, which completes the proof.
\end{proof}

%% file: figures/diagram_reduction.tex
\begin{tikzpicture}[
            node distance=0.93cm, % 适当的节点间距
            box/.style={draw, rectangle, rounded corners, align=center, font=\scriptsize, inner sep=3pt, fill=gray!5},
            arrow/.style={-stealth, thick}
            ]
            
            % 节点定义
            \node[box] (step1) {Positive \\ \ParitydOccSAT{}};
            \node[box, right=of step1] (step2) {Primal \\ {Parity-$d$-occ-HS}};
            \node[box, right=of step2] (step3) {Dual \\ {Parity-$d$-SC}};
            \node[box, right=of step3] (step4) {Dual \\ {Parity-$d$-HS}};
            \node[box, right=of step4] (step5) {\#$d$-SAT};
            
            % 带有文字标注的箭头
            \draw[arrow] (step1) -- (step2);
            \draw[arrow] (step2) -- node[above, font=\tiny, inner sep=1pt] {Duality} (step3);
            
            % 分开放置在箭头上下
            \draw[arrow] (step3) -- 
                node[above, font=\tiny, inner sep=1pt] {Parity} 
                node[below, font=\tiny, inner sep=1pt] {identity} (step4);
                
            \draw[arrow] (step4) -- (step5);
            
        \end{tikzpicture}

%% file: sections/Parity-2-occ-SAT.tex
\section{An Algorithm for \ParityTwoOccSAT{}}\label{sec:2-occ}
In this section, we present a fast algorithm for \ParityTwoOccSAT{}. 
We analyze the time complexity of this algorithm with respect to two parameters: the number of variables $n$ and the number of clauses $m$, showing that \ParityTwoOccSAT{} can be solved in $\bigO{\ResultTwoOccN}$ time or $\bigOstar{\ResultTwoOccM}$ time.
{
We remark that \ParityTwoOccSAT{} typically lies in the domain where $m\le n$. Let $\bar{k}$ be the average clause length; then the formula length is $\bar{k}m\le 2n$. Since unit clauses can be eliminated (e.g., by our \cref{rrule:1cls}), we have $\bar{k}\ge 2$, which implies $m\le n$. Thus, a runtime bound of $\bigOstar{2^{cn}}$ for some constant $c$ is not as good as a bound of $\bigOstar{2^{c m}}$ for \ParityTwoOccSAT{}.
}
Besides, we note that this problem cannot be solved in subexponential time under the randomized Exponential Time Hypothesis (see Appendix~\ref{sec:rETH-hard} for details).

Our algorithm consists of two steps that utilize the clause branching in \cref{lem:clause-branching}.
In the first step, we deal with $4^+$-clauses, and directly branch on a clause with the largest size. 
In the second step, all clauses have a size of at most $3$, and we employ a branching strategy that aims to decompose the formula into disjoint subformulas so that we can efficiently solve it in a divide-and-conquer manner. Specifically, the selection of the clause we branch on is guided by a bisection (i.e., a balanced separator) of a specific graph representation of the formula (the definitions will be given later). We note that this paradigm of utilizing graph bisections to guide branching decisions has been successfully applied to design fast polynomial-space algorithms for various problems (e.g., \cite{journals/talg/GaspersS17,DBLP:conf/cp/HoiJS20,DBLP:conf/walcom/Hoi0S024}).

\subsection{Dealing with $4^+$-clauses}
In this step, we pick up a $4^+$-clause $C$ and branch on: (B1) remove clause $C$; (B2) remove clause $C$ and set all literals in $C$ to $0$.

Two clauses $C, D \in \formula$ are called \emph{neighbors} if they share at least one common variable, i.e., $\var(C) \cap \var(D) \neq \emptyset$. For a clause $C \in \formula$, we let $N_{\formula}(C)$ denote the set of its neighbors, and define $N_{\formula}(C, 2) \coloneqq \{ D\mid D \in N_{\formula}(C), |D| = 2 \}$ to be the set of $2$-clauses adjacent to $C$.
\begin{lemma}\label{lem:measure-decrease-2occ-primal}
    Let $C$ be a clause in formula $\formula$. Let $\formula_1=\formula\setminus \{C\}$ and $\formula_2=\formula[C=0]\setminus\{C\}$.
    It holds that
    \begin{itemize}
        \item $m(\formula) - m(R(\formula_1))\ge \abs{C}+1$ and $m(\formula) - m(R(\formula_2))\ge 1$.
        \item $n(\formula) - n(R(\formula_1))\ge \abs{\var(C\cup N_{\formula}(C))}$ and $n(\formula) - n(R(\formula_2))\ge \abs{C}+\abs{N_{\formula}(C,2)}$.
    \end{itemize}
\end{lemma}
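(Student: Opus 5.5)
We track the two branches separately. Throughout, we use that $\formula$ is reduced and every variable occurs at most twice. By \cref{lem:reduced-formula}, every variable of $\formula$ is then a $2$-variable. Property~3 of that lemma then says two clauses share at most one variable, and property~2 says every clause has length at least $2$. Since every reduction rule only assigns or deletes variables, deletes clauses, or shortens clauses, it suffices to exhibit, for each branch, a particular sequence of rule applications that already achieves the claimed decrease. Applying further rules to reach $R(\cdot)$ can only decrease $n$ and $m$ further. If some rule reports a value, the measure effectively drops to $0$.

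\textbf{Branch $\formula_1=\formula\setminus\{C\}$.} Each $x\in\var(C)$ occurs in exactly one other clause $D_x$. By property~3, the clauses $D_x$ for $x\in\var(C)$ are pairwise distinct: if $D_x=D_y$ for $x\neq y$, then $C$ and $D_x$ would share two variables. Moreover $N_{\formula}(C)=\{D_x : x\in\var(C)\}$. After $C$ is removed, each $x$ is a $1$-variable. Applying \cref{rrule:1var} to $x$ satisfies and removes $D_x$ and assigns every variable of $D_x$. We process the $x\in\var(C)$ one at a time. We must check that a later $y\in\var(C)$ is not assigned prematurely. This holds because $y$ could only have been assigned as a member of some earlier $D_x$, which would force $D_y=D_x$, contradicting distinctness. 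Any conflicting assignment creates an empty clause and ends the branch. Hence the clauses $C$ and $D_x$ ($x\in\var(C)$) disappear, giving a decrease in $m$ of at least $\abs{C}+1$. Likewise all variables of $\var(C\cup N_{\formula}(C))$ are assigned, giving the claimed decrease in $n$.

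\textbf{Branch $\formula_2=\formula[C=0]\setminus\{C\}$.} Clause $C$ is removed, so $m$ drops by at least $1$. All $\abs{C}$ variables of $C$ become assigned. Now take $D=(\literal\vee y)\in N_{\formula}(C,2)$ with $\var(\literal)\in\var(C)$. Then $y\notin\var(C)$ by property~3. If $\literal$ is falsified, $D$ becomes the $1$-clause $(y)$, and \cref{rrule:1cls} assigns $y$. If $\literal$ is satisfied, $D$ is deleted, so $y$ becomes a $1$- or $0$-variable, and \cref{rrule:1var} or \cref{rrule:0var} eliminates it. In either case $y$ is eliminated, in addition to the variables of $C$.

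\textbf{Main obstacle.} The delicate step is the counting in the second branch. Two $2$-clauses $D_1=(\literal_a\vee y)$ and $D_2=(\literal_b\vee y')$ in $N_{\formula}(C,2)$ might share their outside variable, i.e.\ $y=y'$, so the naive count yields only one extra variable instead of two. I would handle this by a case analysis on the polarities of $y$ in $D_1,D_2$ and of $\literal_a,\literal_b$ under $C=0$:
\begin{itemize}
\item If both $D_i$ become units on $y$ with opposite signs, we get an empty clause, which is fine.
\item If both become the same unit, $y$ is assigned and the configuration $C,D_1,D_2$ must have violated a domination or twin condition (\cref{rrule:dominate}, \cref{rrule:twins}) in $\formula$.
\item If one $D_i$ is satisfied, the other's unit propagation on $y$ must trigger a further elimination.
\end{itemize}
If this cannot be closed off, the fallback is to verify that \cref{rrule:dominate} already excludes this configuration in a reduced $\formula$. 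For instance, with $D_1=(a\vee y)$, $D_2=(b\vee y)$, the literal $y$ lies in every clause containing $y$; this alone is not a domination of another variable, so the exclusion may require the full polarity analysis. I would try the fallback argument first before doing the full case split.
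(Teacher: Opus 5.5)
Your analysis of $\formula_1$ and your per-clause reasoning for $\formula_2$ coincide with the paper's proof. The gap is exactly the step you flagged, and no case analysis can close it. When two $2$-clauses of $N_\formula(C,2)$ share their outside variable, the bound $n(\formula)-n(R(\formula_2))\ge\abs{C}+\abs{N_\formula(C,2)}$ is simply false. The paper's proof has the same hole: it just asserts that each $D\in N_\formula(C,2)$ contributes its own eliminated $y$.

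Here is a counterexample. Take $C=(a\vee b\vee c\vee d)$, $D_1=(a\vee y)$, $D_2=(b\vee y)$, $E=(c\vee e\vee g)$, $F=(d\vee f\vee g)$, $G=(e\vee f)$.
\begin{itemize}
\item Every variable occurs exactly twice, all clauses have length at least $2$, and no two clauses share two variables. This already rules out \cref{rrule:subsumption}, \cref{rrule:dominate}, \cref{rrule:twins}, \cref{rrule:subsumption-complementary-ltr} and \cref{rrule:two-2cls-complementary}.
\item Every variable is an edge of one of the triangles $CD_1D_2$, $CEF$, $EFG$ in the dual graph. Hence every partition of the clauses shares at least two variables, and neither isolation rule applies.
\end{itemize}
So $\formula$ is reduced, with $n(\formula)=8$ and $N_\formula(C,2)=\{D_1,D_2\}$. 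Setting $C=0$ turns both $D_1$ and $D_2$ into $(y)$. After $y=1$, what remains is $(e\vee g)\wedge(f\vee g)\wedge(e\vee f)$, which is again reduced. The drop is therefore $5<6=\abs{C}+\abs{N_\formula(C,2)}$. Lengthening $G$ into a chain of $2$-clauses gives arbitrarily large examples.

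This also shows where your two proposed repairs fail. The equal-unit case does not violate domination or twins. In the one-satisfied case (e.g.\ $D_1=(\nl{a}\vee y)$, $D_2=(b\vee y)$), the unit $(y)$ propagates nowhere, because $y$ has no third occurrence.

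What your argument does establish is
\[ n(\formula)-n(R(\formula_2))\ge\abs{C}+c_2', \qquad c_2'=\abs{\var(N_\formula(C,2))\setminus\var(C)}\ge\lceil\abs{N_\formula(C,2)}/2\rceil . \]
The count falls short only in the equal-unit and one-satisfied cases. The opposite-unit and both-satisfied cases end the branch with parity $0$.

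This weaker bound suffices where the lemma is used. In \cref{lem:2occ-4cls-branching-n}, substitute $c_2\le 2c_2'$ into $2\abs{C}\le c_2+k_1+2k_2$ and assume $c_2'+\abs{E}\le 4$. This gives $8\le 2\abs{C}\le 2(c_2'+k_1+k_2)-k_1\le 8-k_1$. Hence $k_1=0$ and $\abs{C}=4$, and the paper's Property~5 contradiction applies verbatim, so $\tau(9,4)$ still holds. You should state and prove the lemma in this corrected form rather than try to rescue the original count.
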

\begin{proof}
    We first note that the exhaustive application of the reduction rules introduces neither new clauses nor new variables. Thus, for any intermediate formula $\formula'$, it holds that $m(R(\formula')) \le m(\formula')$ and $n(R(\formula')) \le n(\formula')$. 

    In the first branch, let $\formula_1 = \formula \setminus \{C\}$. The explicit removal of $C$ decreases the number of clauses by $1$. Because $\formula$ is a reduced formula, every variable in $\var(C)$ is a $2$-variable and has exactly one other occurrence outside of $C$. By Property~3 of \cref{lem:reduced-formula}, any two clauses share at most one $2$-variable. Thus, the remaining $\abs{C}$ occurrences of the variables in $\var(C)$ belong to exactly $\abs{C}$ distinct clauses in $N_\formula(C)$. Upon the removal of $C$, all variables in $\var(C)$ become $1$-variables. According to \cref{rrule:1var}, each $1$-variable $x$ is assigned a value of $1$, which satisfies and removes its unique remaining clause in $N_\formula(C)$. Since these $\abs{C}$ clauses are mutually distinct, their removal yields $m(\formula) - m(R(\formula_1)) \ge \abs{C} + 1$. {Furthermore, by \cref{rrule:1var}, all other literals in the clause containing $x$ are set to $0$.} Since this process iterates over all $x \in \var(C)$, all variables in $\var(N_\formula(C))$ are assigned and eliminated. Therefore, $n(\formula) - n(R(\formula_1)) \ge \abs{\var(C \cup N_\formula(C))}$.

    In the second branch, let $\formula_2 = \formula[C=0] \setminus \{C\}$. The explicit removal of $C$ immediately yields $m(\formula) - m(R(\formula_2)) \ge 1$. For the number of variables, all literals in $C$ are assigned to $0$; thus, the $\abs{C}$ variables in $\var(C)$ are eliminated. We then consider the clauses in $N_\formula(C, 2)$. For each $2$-clause $D \in N_\formula(C, 2)$, let $\var(D) = \{x, y\}$ where $x \in \var(C)$ and $y \notin \var(C)$. The assignment to $x$ either satisfies $D$ or falsifies the literal of $x$ in $D$. If $D$ is satisfied, it is removed; the remaining $2$-variable $y$ loses an occurrence, becomes a $1$-variable, and is subsequently assigned and eliminated via \cref{rrule:1var}. If the literal of $x$ is falsified, $D$ shrinks to a $1$-clause, forcing $y$ to be assigned and eliminated via \cref{rrule:1cls}. In either case, the other variable $y$ in $D$ is eliminated. This guarantees an additional decrease of at least $\abs{N_\formula(C, 2)}$ variables. Consequently, $n(\formula) - n(R(\formula_2)) \ge \abs{C} + \abs{N_\formula(C, 2)}$.
\end{proof}

\begin{lemma}\label{lem:2occ-4cls-branching-m}
    Let $C$ be a clause with $\abs{C} \ge 4$ in a reduced formula $\formula$. Branching on $C$ yields a branching factor of at most $\tau(5, 1)<1.3248$ with respect to the number of clauses.
\end{lemma}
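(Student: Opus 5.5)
We read the claim off \cref{lem:measure-decrease-2occ-primal} and then compare branching vectors. Branching on $C$ uses the clause branching rule (\cref{lem:clause-branching}), which is exactly the pair of subinstances $\formula_1=\formula\setminus\{C\}$ and $\formula_2=\formula[C=0]\setminus\{C\}$ treated in that lemma. Since $\formula$ is reduced, each variable of $\formula$ is a $2$-variable in the \ParityTwoOccSAT{} setting. Property~1 of \cref{lem:reduced-formula} gives degree at least $2$, and the occurrence bound gives degree at most $2$. So the hypotheses used in the proof of \cref{lem:measure-decrease-2occ-primal} hold.

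First, in branch (B1) we recurse on $R(\formula_1)$. By \cref{lem:measure-decrease-2occ-primal}, the number of clauses drops by at least $\abs{C}+1\ge 5$, because $\abs{C}\ge 4$. In branch (B2) we recurse on $R(\formula_2)$, and the number of clauses drops by at least $1$. This gives a branching vector dominated by $(5,1)$. Since $\tau$ is monotonically nonincreasing in each entry, the branching factor is at most $\tau(5,1)$.

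Second, we bound $\tau(5,1)$ numerically. It is the positive root of $1-x^{-5}-x^{-1}=0$, that is, of $x^5-x^4-1=0$. This polynomial factors as $(x^2-x+1)(x^3-x-1)$. The quadratic factor has no real roots, so $\tau(5,1)$ is the real root of $x^3=x+1$, the plastic number $\approx 1.32472<1.3248$. To confirm the bound we evaluate $x^3-x-1$ at $1.3248$: since $1.3248^3\approx 2.3252>2.3248$, the value is positive, and the root lies below $1.3248$.

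The only step needing care is already handled by \cref{lem:measure-decrease-2occ-primal}. In (B1) the $\abs{C}$ other occurrences of the variables of $C$ lie in pairwise distinct clauses, by Property~3 of \cref{lem:reduced-formula}. Each of these clauses is then removed via \cref{rrule:1var}. Given that lemma, the present statement follows with no further work.
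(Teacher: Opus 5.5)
Your proposal is correct and matches the paper's proof: both read off $\abs{C}+1\ge 5$ clauses removed in the first branch and at least $1$ in the second from \cref{lem:measure-decrease-2occ-primal}, giving the bound $\tau(5,1)$. Your factorization $x^5-x^4-1=(x^2-x+1)(x^3-x-1)$, which identifies $\tau(5,1)$ as the plastic number $\approx 1.32472$, is a valid numerical check that the paper simply states without verification.
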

\begin{proof}
    {
    Let $\formula_1 = \formula \setminus \{C\}$ and $\formula_2 = \formula[C=0] \setminus \{C\}$. Let $\Delta_{m, 1}$ and $\Delta_{m, 2}$ be the decreases in the number of clauses in the respective branches. By \cref{lem:measure-decrease-2occ-primal}, we have $\Delta_{m, 1} \ge \abs{C} + 1\ge 5$ and $\Delta_{m, 2} \ge 1$, which yields a branching factor of at most $\tau(5, 1)<1.3248$.
    }
\end{proof}

\begin{lemma}\label{lem:2occ-4cls-branching-n}
    Let $C$ be a clause with $\abs{C} \ge 4$ in a reduced formula $\formula$. Branching on $C$ yields a branching factor of at most $\tau(9, 4)<1.1193$ with respect to the number of variables.
\end{lemma}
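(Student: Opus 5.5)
We apply \cref{lem:measure-decrease-2occ-primal} to the two branches of the clause branching on $C$, lower-bound both right-hand sides by counting variables around $C$, and then use \cref{lem:vec-dominate} with $p=13$ and $q=4$. That lemma gives $\tau(a_1,a_2)\le\tau(9,4)$ whenever $a_1+a_2\ge 13$ and $\min(a_1,a_2)\ge 4$.

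\textbf{Setup.} Let $k=\abs{C}\ge 4$ and $t=\abs{N_\formula(C,2)}$. Since $\formula$ is a reduced $2$-occ formula, \cref{lem:reduced-formula} says every variable is a $2$-variable. So, exactly as in the proof of \cref{lem:measure-decrease-2occ-primal}, the $k$ variables $x_1,\dots,x_k$ of $C$ have their second occurrences in $k$ distinct clauses $D_1,\dots,D_k$. By Property~3 of \cref{lem:reduced-formula}, each $D_i$ contains no variable of $C$ other than $x_i$. Hence $D_i$ contributes $\abs{D_i}-1\ge 1$ occurrences of variables outside $\var(C)$, and at least $2$ when $D_i$ is not a $2$-clause.

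\textbf{Counting.} Each such outside variable $y$ has degree $2$, so it is counted in at most two of the $D_i$. Therefore
\[
\abs{\var(N_\formula(C))\setminus\var(C)}\ \ge\ \tfrac12\sum_i(\abs{D_i}-1)\ \ge\ \tfrac12\bigl(t+2(k-t)\bigr)=k-\tfrac t2 .
\]
By \cref{lem:measure-decrease-2occ-primal}, the first branch decreases $n$ by $a_1\ge 2k-t/2$ and the second by $a_2\ge k+t$. Thus $\min(a_1,a_2)\ge 4$ (using $t\le k$), and $a_1+a_2\ge 3k+t/2$. This is at least $13$ whenever $k\ge 5$, or $k=4$ and $t\ge 2$. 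If $k=4$ and $t=1$, integrality of $a_1$ gives $a_1\ge 8$, so $a_1+a_2\ge 8+5=13$.

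\textbf{The remaining case and main obstacle.} The only case left is $k=4$, $t=0$, where the naive count gives just $8+4$. Here I would argue that the count is not tight. If $a_1\le 8$, the outside count must equal exactly $k-t/2=4$. That forces every $D_i$ to be a $3$-clause and every outside variable to occur twice among the $D_i$. Then the subformula $\formula'=\{C,D_1,\dots,D_4\}$ has only $8\le 10$ variables, and all their occurrences lie inside $\formula'$, so $\var(\formula')\cap\var(\formula\setminus\formula')=\emptyset$. Either $\formula'$ is a proper subformula, in which case \cref{rrule:small-subf-0} applies, contradicting that $\formula$ is reduced (Property~\ref{prop:reduced-formula-subformula} of \cref{lem:reduced-formula}); or $\formula'=\formula$ has constant size and is solved in $\bigO{1}$ time without branching. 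Hence $a_1\ge 9$ and $a_1+a_2\ge 13$.

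In all cases \cref{lem:vec-dominate} yields a branching factor of at most $\tau(9,4)<1.1193$.
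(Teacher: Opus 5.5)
Your proof is correct and follows essentially the paper's route. Both arguments combine the bounds of \cref{lem:measure-decrease-2occ-primal} with an occurrence count over the $\abs{C}$ neighbouring clauses, isolate the single tight configuration ($\abs{C}=4$, no adjacent $2$-clauses, four external variables each occurring twice), rule it out via the isolate rule (Property~5), and finish with \cref{lem:vec-dominate}. Your direct bound $\abs{E}\ge k-t/2$ plus a case split stands in for the paper's contradiction argument from $c_2+\abs{E}\le 4$, and your explicit treatment of the case $\formula'=\formula$ is slightly more careful than the paper, which invokes Property~5 without noting that it only applies to proper subformulas.
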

\begin{proof}
    Let $\formula_1 = \formula \setminus \{C\}$ and $\formula_2 = \formula[C=0] \setminus \{C\}$. Let $\Delta_{n, 1}$ and $\Delta_{n, 2}$ be the decreases in the number of variables in the respective branches. By \cref{lem:measure-decrease-2occ-primal}, we have $\Delta_{n, 1} \ge \abs{\var(C \cup N_\formula(C))}$ and $\Delta_{n, 2} \ge \abs{C} + \abs{N_\formula(C, 2)}$.  

    Let $E = \var(N_\formula(C)) \setminus \var(C)$ be the set of external variables, and $c_2 = \abs{N_\formula(C, 2)}$. Since $\abs{C} \ge 4$, {we have $\Delta_{n, 2} \ge \abs{C} + c_2 \ge 4$ and $\Delta_{n, 1} \ge \abs{C} + \abs{E} \ge 4$.}
    To get the desired branching factor $\tau(9, 4)$ via \cref{lem:vec-dominate}, it suffices to show that their sum $\Delta_{n, 1} + \Delta_{n, 2} \ge 13$. Since $\Delta_{n, 1} + \Delta_{n, 2} \ge 2\abs{C} + c_2 + \abs{E} \ge 8 + c_2 + \abs{E}$, it remains to prove $c_2 + \abs{E} \ge 5$.

    Assume for contradiction that $c_2 + \abs{E} \le 4$. We partition $E$ into two sets based on their occurrences in $N_\formula(C)$: $k_1$ variables appearing exactly once, and $k_2$ variables appearing exactly twice. Thus, $\abs{E} = k_1 + k_2 \le 4$. 
    Counting the literal occurrences in $N_\formula(C)$, the $\abs{C}$ clauses require at least $2c_2 + 3(\abs{C} - c_2) = 3\abs{C} - c_2$ occurrences. These are supplied by $\var(C)$ (exactly $\abs{C}$ occurrences) and $E$ (exactly $k_1 + 2k_2$ occurrences). This yields $ 3\abs{C} - c_2 \le \abs{C} + k_1 + 2k_2$, which simplifies to $ 2\abs{C} \le c_2 + k_1 + 2k_2$.
    
    {Recall that we assume for contraidction that $c_2 + \abs{E} = c_2 + k_1+k_2 \le 4$.}
    Because $\abs{C} \ge 4$, we have $8 \le 2\abs{C} \le (c_2 + k_1 + k_2) + k_2 \le 4 + k_2$, which implies $k_2 \ge 4$. 
    With $k_1 + k_2 = \abs{E} \le 4$, we have $k_2 = 4$, $k_1 = 0$, and $c_2 = 0$ (along with $\abs{C} = 4$). However, $k_1$ represents the number of variables shared between the induced subformula $C \cup N_\formula(C)$ and the rest of $\formula$. If $k_1 = 0$, this subformula, which contains exactly $\abs{C} + \abs{E} = 8 \le 10$ variables, shares $0$ variables with the rest of the formula. This contradicts Property 5 of \cref{lem:reduced-formula}, which states that any small subformula must share at least $2$ variables with the rest. 
    
    Therefore, $c_2 + \abs{E} \ge 5$. Consequently, $\Delta_{n, 1} + \Delta_{n, 2} \ge 13$. 
    With $\min(\Delta_{n,1}, \Delta_{n,2}) \ge 4$, the branching factor is at most $\tau(13-4, 4) = \tau(9, 4) < 1.1193$.
\end{proof}

\subsection{Solving 3-CNF instances via a Bisection Approach}
Now there are only $3$-clauses and $2$-clauses.
We adopt a graph-theoretic approach to design and analyze the algorithm. 
In what follows, we first introduce the specific graph representation used throughout the algorithm and then present the algorithm. 

\subparagraph*{Graph Representation.}
The \emph{dual graph} $G_{\formula}$ of a formula $\formula$ is the graph where each vertex corresponds to a clause in $\formula$, and two distinct vertices $C_i$ and $C_j$ are connected by an edge if and only if $\var(C_i)\cap \var(C_j)\neq \emptyset$. \cref{lem:reduced-formula} ensures that any pair of clauses shares at most one $2$-variable. Since every variable appears exactly twice, each variable corresponds to a unique edge between two clauses.

For any clause $C$, consider the effects of either removing it from the formula or assigning values to its variables. Let $D$ be a neighboring clause of $C$ and $x$ be the shared variable between $C$ and $D$.
If $C$ is removed, the shared variable $x$ becomes a 1-variable, triggering \cref{rrule:1var} to assign values to other variables in $D$ to satisfy and remove $D$. 
If the shared variable $x$ is assigned a value, $D$ is either satisfied or reduced in length by at least one; in particular, if $D$ is a 2-clause, shrinking to a 1-clause triggers \cref{rrule:1cls} to assign a value to its remaining variable.
Thus, if $D$ is a 2-clause, it is eliminated in both scenarios and affects its other neighbor in the same manner. Ultimately, this results in the removal of a chain of 2-clauses (i.e., a path of degree-2 vertices in $G_\formula$). 

Given the above chain reaction, we construct a \emph{multigraph} $G_{\formula}^*=(V^*, E^*)$ by smoothing out all degree-2 vertices in $G_\formula$, where
\begin{itemize}
    \item the vertex set $V^*$ consists of the $m_3(\formula)$ degree-3 vertices in $G_\formula$ (i.e., all 3-clauses in $\formula$);
    \item for two vertices $C_i, C_j \in V^*$, each edge between them in $E^*$ corresponds to a distinct connection in $G_\formula$, which is either a direct edge (i.e., sharing a variable) or a path of degree-2 vertices (i.e., a chain of 2-clauses).
\end{itemize}
{See \cref{fig:example-multigraph} for an illustrative example.} 

Note that any connected component in $G_{\formula}$ consisting entirely of degree-2 vertices is absent from $G_{\formula}^*$. This is harmless, as it induces an isolated 2-CNF formula, which can be solved in polynomial time by the following lemma.

\begin{figure}
    \centering
    \input{figures/example_graph}
    \caption{An illustrative example of the dual graph $G_{\formula}$ (left) and the multigraph $G_{\formula}^*$ (right) for $\formula$, which consists of four 3-clauses $C_1=(x_1 \lor \nl{x_4} \lor y_1), C_2=(\nl{x_1} \lor x_2 \lor y_2), C_3=(x_2 \lor x_3 \lor y_3), C_4=(x_3 \lor x_4 \lor y_4)$ and four 2-clauses $D_1=(\nl{y_1} \lor z_1), D_2=(z_1 \lor y_2), D_3=(y_3 \lor z_2), D_4=(\nl{z_2} \lor y_4)$. Each edge in $G_{\formula}$ corresponds to a single variable; in $G_{\formula}^*$, chains of 2-clauses are replaced with (potentially parallel) edges by smoothing out all degree-2 vertices $D_j$.}
    \label{fig:example-multigraph}
\end{figure}

\begin{lemma}\label{lem:2occ-2CNF-polytime}
    \ParityTwoOccSAT{} on $2$-CNF formulas can be solved in polynomial time.
\end{lemma}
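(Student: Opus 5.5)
Apply the reduction rules and then count directly, using the fact that a reduced $2$-CNF instance of \ParityTwoOccSAT{} has a very rigid shape. By \cref{lem:rules-correctness} and \cref{lem:poly-time-reduce}, we first replace $\formula$ by $R(\formula)$ in polynomial time. If some rule reports the parity along the way, we are done. Otherwise, by \cref{lem:reduced-formula}, every clause is exactly a $2$-clause and every variable is a $2$-variable. In the dual graph $G_\formula$ each clause then has degree exactly $2$: its two variables lead to two distinct other clauses, by Property~4 of \cref{lem:reduced-formula}. Hence $G_\formula$ is a disjoint union of cycles.

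Since the satisfying assignments of variable-disjoint subformulas form a Cartesian product, the parity is the product (over $\mathrm{GF}(2)$) of the parities of the components. It therefore suffices to handle a single cycle $D_1,\dots,D_k$ with variables $x_1,\dots,x_k$, where $x_i$ is shared by $D_i$ and $D_{i+1}$ (indices mod $k$).

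Such a cycle is a cyclic chain of binary constraints. For each clause $D_{i+1}$ on $(x_i,x_{i+1})$ we take its $2\times 2$ $0/1$ transfer matrix $M_{i+1}$, with entry $(a,b)$ equal to $1$ iff $D_{i+1}$ is satisfied by $x_i=a$, $x_{i+1}=b$. The number of satisfying assignments of the cycle is then $\mathrm{tr}(M_1M_2\cdots M_k)$, and its parity is this trace computed over $\mathrm{GF}(2)$. This takes $O(k)$ arithmetic operations.

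Alternatively, a small cycle is already removed by \cref{rrule:small-subf-0}, since it has at most $10$ variables. For a long cycle, we branch once on $x_1$: both branches then become paths, and a simple left-to-right dynamic program over $\mathrm{GF}(2)$ computes their parities in linear time. The transfer-matrix view handles both cases uniformly, so I would present that one.

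The only point needing care is justifying the cycle structure. If we prefer to avoid relying on reducedness, we can skip it and argue directly: every variable has at most $2$ occurrences and every clause at most $2$ literals, so after removing $1$-clauses by \cref{rrule:1cls} (and empty clauses), the graph whose vertices are clauses and whose edges are shared variables has maximum degree $2$. Hence it is a union of paths and cycles. Variables of degree $1$ become pendant ends of paths, which the same transfer-matrix or DP computation handles. Free variables give parity $0$ by \cref{rrule:0var}.

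Either way, the whole procedure runs in polynomial time, which proves \cref{lem:2occ-2CNF-polytime}. I expect no real obstacle; the main work is bookkeeping to make sure the components are exactly paths and cycles.
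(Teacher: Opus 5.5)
Your proof is correct. It shares the paper's skeleton: the dual graph of a $2$-CNF formula with at most two occurrences per variable has maximum degree $2$, so it splits into paths and cycles. Where you differ is in how each component is evaluated.

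The paper does not reduce first. It peels path components from an endpoint with \cref{rrule:1cls} or \cref{rrule:1var}, and turns each cycle into a path by a single clause branching (\cref{lem:clause-branching}). It therefore needs nothing beyond the rules already in place.

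You instead pass to $R(\formula)$ and use Properties~1, 2 and~4 of \cref{lem:reduced-formula} to get the sharper picture that only cycles survive. You then evaluate each cycle as the trace of a product of $2\times 2$ transfer matrices. This has three advantages:
\begin{itemize}
\item It is a branching-free $O(k)$ computation per cycle.
\item It returns the exact number of models, so it also shows that \#$2$-occ-SAT on $2$-CNF formulas is polynomial.
\item Your explicit multiplication of parities over variable-disjoint components makes precise a step that the paper's ``process each connected component'' uses tacitly. Branching once per cycle inside a single recursion would otherwise give $2^{c}$ leaves for $c$ cycles.
\end{itemize}

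One thing to keep straight: your $R$ must be the plain rule set of \cref{sec:rules}, which is polynomial by \cref{lem:poly-time-reduce}. It must not be the variant of \cref{rrule:small-subf-1} used in \cref{sec:2-occ} to remove self-loops, since that variant itself calls this lemma. With that reading there is no circularity.

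Your fallback (branch once per long cycle, then a left-to-right DP on paths) is essentially the paper's proof.
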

\begin{proof}Given a $2$-CNF formula $\phi$, consider its dual graph $G_{\formula} = (V, E)$, where each vertex $v \in V$ corresponds to a clause in $\phi$. Two vertices are connected by an edge if their corresponding clauses share a common variable. Since each variable in $\phi$ occurs at most twice (2-occurrence) and each clause contains at most two literals, the maximum degree of $G$ is $\Delta(G) \le 2$, implying that $G$ is a disjoint union of {paths} and {cycles}. We process each connected component of $G$ as follows:
\begin{itemize}
    \item {Paths:} Let $P$ be a path in $G$. Consider an endpoint vertex of $P$. This vertex corresponds to a clause that either contains a literal whose variable appears only once in $\phi$, or is a unit clause. In either case, we can apply \cref{rrule:1cls} or \cref{rrule:1var} to eliminate the clause and the associated variable. 
    \item {Cycles:} For a cycle, we perform clause branching on an arbitrary node (i.e., clause) in it. This branching step generates two instances, each containing a path-structured component. Following the same reduction procedure for paths, the cycle is eliminated. 
\end{itemize}
Since each reduction rule and branching step can be executed in polynomial time, the total running time is polynomial. Thus, \ParityTwoOccSAT{} on 2-CNF formulas is polynomial-time solvable.
\end{proof}

Furthermore, $G_{\formula}^*$ might contain self-loops if a 3-clause $C = (x \lor y \lor z)$ connects to itself via a chain of 2-clauses with endpoints $y$ and $z$. To eliminate such loops, let $\formula_1$ be the subformula formed by $C$ and this chain, and $\formula_2 = \formula \setminus \formula_1$. Note that $\var(\formula_1) \cap \var(\formula_2) = \{x\}$, and both $\formula_1[x=0]$ and $\formula_1[x=1]$ are 2-CNF formulas. This allows us to apply \cref{rrule:small-subf-1}, with the only modification that $\parity(\formula_1[x=0])$ and $\parity(\formula_1[x=1])$ are computed in polynomial time via \cref{lem:2occ-2CNF-polytime}. Thus, there is no self-loop in $G_{\formula}^*$ for a reduced formula $\formula$.

\subparagraph*{Bisection.} 
A \emph{bisection} of a graph $G = (V, E)$ is a partition $(V_1, V_2)$ of the vertex set $V$ such that their sizes are as equal as possible, i.e., $V_1\cup V_2=V$, $V_1\cap V_2=\emptyset$ and $\left| |V_1| - |V_2| \right| \leq 1$. The \emph{size} of a bisection is the number of edges with one endpoint in $V_1$ and the other in $V_2$. The minimum size over all possible bisections of $G$ is called the \emph{bisection width} of the graph.
We rely on the following upper bound on the bisection width of cubic graphs.

\begin{theorem}[Monien and Preis~\cite{journals/jda/MonienP06}]\label{lem:bisection-width}
   For any $\varepsilon>0$, there exists an integer $n_{\varepsilon}$ such that for any cubic graph $G$ with $|V(G)| \geq n_{\varepsilon}$, the bisection width of $G$ is at most $(\frac{1}{6}+\varepsilon)|V(G)|$.
\end{theorem}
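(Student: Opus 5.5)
Since this theorem is quoted from Monien and Preis and is used only as a black box later in the paper, we would not reprove it in the paper itself; we would only cite it. If a self-contained argument were required, we would follow the \emph{helpful set} method of Monien and Preis. The idea is to take a minimum bisection $(V_1,V_2)$ of a cubic graph $G$ with cut size $s$, and suppose for contradiction that $s>(\frac16+\varepsilon)|V(G)|$. From this we derive a strictly smaller bisection.

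\textbf{Step 1 (local optimality).} Define the gain of a vertex $v\in V_i$ as (number of neighbours of $v$ in $V_{3-i}$) minus (number of neighbours of $v$ in $V_i$). By the minimality of $(V_1,V_2)$, no swap of two vertices from opposite sides decreases the cut. With degree $3$, the gains lie in $\{-3,-1,1,3\}$, so this already restricts how cut edges can be arranged around each vertex.

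\textbf{Step 2 (finding a helpful set).} A set $S\subseteq V_i$ is \emph{$h$-helpful} if moving $S$ to the other side reduces the cut by at least $h$. We would show that if the cut density exceeds $\frac16+\varepsilon$, then some $V_i$ contains a $4$-helpful set $S$ of size $O(1/\varepsilon)$. To find it, we would look at the subgraph of $G[V_i]$ induced by vertices incident to cut edges. We would classify the structures these vertices form (paths and cycles of vertices with gain $1$, together with their attachments) and use a counting or averaging argument. Averaging over $V_i$, a cut density above $1/6$ forces some bounded-radius neighbourhood in which more than a threshold fraction of the edges are cut, and that neighbourhood yields the helpful set.

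\textbf{Step 3 (rebalancing).} Moving $S$ unbalances the partition by $|S|$. So we must then move a set $S'$ of the same size from $V_{3-i}\cup S$ back across, increasing the cut by at most $3$. For this we would grow $S'$ greedily: choose vertices one by one so that each chosen vertex has at least as many neighbours already on the destination side as elsewhere, with a small additive slack. This is possible because $G[V_{3-i}]$ has average degree less than $3$ and $|V_{3-i}|$ is large compared with $|S|$; the assumption $|V(G)|\ge n_\varepsilon$ is what provides room to do it. The net change in the cut is then at most $-4+3<0$, which contradicts minimality.

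The main obstacle is Step 2. Pinning down the exact constant $\frac16$ requires a careful case analysis of local configurations of gain-$1$ and gain-$(-1)$ vertices. A cruder averaging argument only gives bounds such as $\frac14 n$ or $\frac13 n$. In practice we would defer to \cite{journals/jda/MonienP06} for this part.
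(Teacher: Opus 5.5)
Your proposal matches what the paper does: the theorem is not proved there but imported from Monien and Preis as a black box, so citing it is all that is required. If you ever did need your self-contained sketch, note that Step~3 would fail as written. If the slack is per vertex, the balancing cost grows like $|S|=O(1/\varepsilon)$ rather than staying at most $3$. If the slack is global, you would need a steady supply of vertices with two neighbours on the destination side, and the fact that $G[V_{3-i}]$ has average degree below $3$ does not provide one. So the balancing step would also have to come from Monien and Preis.
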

As noted in \cite{journals/ipl/FominH06,journals/talg/GaspersS17}, there is a polynomial-time algorithm (denoted by \texttt{Monien-Preis}$(G)$) that computes the corresponding bisection. 
We note that this result also applies to multigraphs {(see Appendix~\ref{appendix:bisection-multigraphs} for a proof)}.

\subparagraph*{The Algorithm.}
\input{Algs/alg-bisection}
Now we are ready to describe our algorithm, presented in \cref{alg:2occ-bisection}.
The algorithm follows a divide-and-conquer strategy where the branching process is guided by a partition initialized as a bisection of the multigraph $G_{\formula}^*$. Specifically, it maintains this disjoint partition $(A, B)$ of the $3$-clauses, aiming to eliminate variables shared between these sets (i.e., edges between them) through branching. The algorithm is initially called with $A$ being the set of all $3$-clauses and $B = \emptyset$. Thus, initially or whenever the partition becomes trivial (i.e., $B = \emptyset$), the \texttt{Monien-Preis} procedure is invoked on $G_{\formula}^*$ to compute a bisection of guaranteed size. If no edges exist between $A$ and $B$, the formula decomposes into two disjoint subformulas $\formula_A$ and $\formula_B$, and the algorithm recursively solves them as independent instances. Otherwise, the algorithm selects an edge connecting $A$ and $B$, and applies clause branching on an endpoint $3$-clause $C$ of this edge. Here, we alternate the selection of $C$ between $A$ and $B$ across recursion levels to better balance the number of $3$-clauses eliminated in both sides, which is crucial for bounding the overall worst-case running time.

\subparagraph*{The Analysis.}
Let $m_3(\formula)$ be the number of $3$-clauses in $\formula$. {Then we have $|V(G_{\formula}^*)|=m_3(\formula)$} {and $L(\formula)\ge 3m_3(\formula)$. Since we are considering \ParityTwoOccSAT{}, $L(\formula)\leq 2n(\formula)$.} Thus, $m_3(\formula)\leq \frac{2}{3}n(\formula)$. We will first obtain a running time bound in terms of $m_3(\formula)$ and then transform it to a running time bound in terms of $m(\formula)$ or $n(\formula)$.
\begin{lemma}\label{lem:2occ-3CNF-m3}
    \ParityTwoOccSAT{} on 3-CNF formulas can be solved in $\bigOstar{1.1487^{m_3}}$ time, where $m_3$ is the number of 3-clauses.
\end{lemma}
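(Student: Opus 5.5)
Note that $1.1487\approx 2^{1/5}$, so the target is a bound of the form $\bigOstar{2^{m_3/5}}$, i.e.\ roughly $2^{(1/6+\varepsilon)m_3}$ with slack. The plan is to analyse \cref{alg:2occ-bisection} by counting the branchings performed along any root-to-leaf path of the recursion. Every branching is a clause branching (\cref{lem:clause-branching}) on a $3$-clause $C$ that is an endpoint of a cut edge of the current partition $(A,B)$. Both branches delete $C$. In the branch $\formula\setminus\{C\}$, each neighbour of $C$ in $G_\formula^*$ loses an edge. In the branch $C=0$, the chains of $2$-clauses leaving $C$ collapse as in the proof of \cref{lem:measure-decrease-2occ-primal}, and the other endpoint $3$-clauses either disappear or shrink to $2$-clauses. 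In either branch, the cut edge incident to $C$ is eliminated. So each branching removes at least one cut edge and at least one $3$-clause, and the reduced instance remains a cubic multigraph after the reduction rules (including self-loop removal) have smoothed degree-$2$ vertices.

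First, I bound the cost of one bisection phase. Call \texttt{Monien-Preis} on $G_\formula^*$ with $m_3$ vertices; by \cref{lem:bisection-width} (extended to multigraphs in the appendix) the cut has at most $(1/6+\varepsilon)m_3$ edges for $m_3\ge n_\varepsilon$, and smaller instances are handled in constant time. Since each branching removes at least one cut edge and reductions never create new cut edges between $A$ and $B$, after at most $(1/6+\varepsilon)m_3$ levels of binary branching the cut is empty. This yields at most $2^{(1/6+\varepsilon)m_3}$ leaves, at each of which $\formula$ splits into independent $\formula_A$ and $\formula_B$. Because $C$ is chosen alternately from $A$ and $B$, and each branching also removes $3$-clauses on the side of $C$, each side retains at most about $m_3/2$ $3$-clauses.

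Second, I set up the recurrence. Let $T(m_3)$ be the number of leaves. Since the two halves are solved independently and sequentially, their costs add rather than multiply, giving roughly
\[T(m_3)\le 2^{(1/6+\varepsilon)m_3}\cdot 2\,T(m_3/2)+\mathrm{poly}.\]
Unrolling this gives $2^{(1/6+\varepsilon)m_3(1+1/2+1/4+\cdots)}=2^{(1/3+2\varepsilon)m_3}$, which is too weak. The fix is to use the decrease more sharply: each branching deletes a $3$-clause, and the chosen $C$ is tied to a cut edge. So after $k$ branchings the sides have sizes at most $m_3/2-k/2$ (by the alternation), and the cut decreases by one per branching. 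I would therefore charge the recursion with a potential like $\mu=\alpha\cdot(\text{cut size})+\beta\cdot\max(|A|,|B|)$. A single branching then gives vector $(\alpha+\beta/2,\alpha+\beta/2)$, and a decomposition step turns $\mu$ into the fresh bisection potential of one side. Choosing $\alpha,\beta$ so that a fresh bisection of $m$ vertices costs at most $\beta m/2+\alpha(1/6+\varepsilon)m\le$ the side's potential $\beta m$ and so that branching factor $2^{1/(\alpha+\beta/2)}$ times the measure scale gives $2^{m_3/5}$ should close the induction.

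The main obstacle is this accounting across the recursive bisection levels: making sure the geometric loss from re-bisecting halves is absorbed. I expect the alternation, together with the guaranteed removal of at least one $3$-clause per side per two branchings, to be exactly what makes the exponent drop from $1/3$ to $1/5$. Tuning the potential weights, and checking that the $C=0$ branch never increases the cut, is where the care goes.
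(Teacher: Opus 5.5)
Your overall framework matches the paper's. You use a potential $\beta\max(\abs{A},\abs{B})+\alpha\abs{S}$ that the Monien--Preis step must not increase, and the divide-and-conquer step costs only a polynomial factor because the two halves are solved one after the other.

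The gap is in the branching analysis. The only per-branching progress you establish is ``at least one cut edge and at least one $3$-clause on the side of $C$'', and this cannot give $2^{1/5}$. The bisection step forces $\alpha(\frac16+\varepsilon)<\beta/2$, i.e.\ $\alpha<3\beta$. So your drop per level is $\alpha+\beta/2<3.5\beta$ against an initial measure $\beta m_3$, which yields at best $\bigOstar{2^{2m_3/7}}\approx\bigOstar{1.219^{m_3}}$. No tuning of $\alpha,\beta$ fixes this: the two requirements in your last paragraph are incompatible. To reach $2^{m_3/5}$ you need an amortized drop of $5\beta$ per branching, i.e.\ $10\beta$ over each pair of alternating levels. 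Alternation alone does not supply this.

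The missing ingredient is a local case analysis on the branched clause $C$. It uses two facts: $G_\formula^*$ is a loopless cubic multigraph, and in \emph{both} branches every neighbour of $C$ in $G_\formula^*$ ceases to be a $3$-clause (it is satisfied or shrinks; it does not merely ``lose an edge'').
\begin{itemize}
\item If $C$ has at least two incident cut edges, $\abs{S}$ drops by at least $2$.
\item Otherwise its other two edges lead into its own side. If they reach two distinct clauses, the own side loses at least $3$ clauses.
\item If both reach the same clause $C'$, then $C'$ is removed or reduced to a unit clause. The reaction then continues along the third edge of $C'$, killing either a third clause on the own side or a second cut edge.
\item In addition, the endpoint of the single cut edge on the other side is destroyed.
\end{itemize}
Hence each branch satisfies either $\Delta_S\ge 2$, or $\Delta_S\ge1$ with $\Delta_{\mathrm{own}}\ge3$ and $\Delta_{\mathrm{other}}\ge1$. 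Combine two alternating levels with $\beta=1$ and $\alpha=3-\varepsilon'$. Every one of the four grandchildren then has measure drop at least $\min\{4\alpha,\,3\alpha+1,\,2\alpha+4\}=10-3\varepsilon'$. This gives $\tau(10-3\varepsilon',10-3\varepsilon',10-3\varepsilon',10-3\varepsilon')<1.1487$. This case analysis is the actual content of the lemma, and your proposal defers it rather than supplying it.
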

\begin{proof}
    Let us fix a sufficiently small constant $\varepsilon > 0$ (e.g., $\varepsilon = 10^{-9}$) and let $n_\varepsilon$ be as defined in \cref{lem:bisection-width}. If the formula $\formula$ has $m_3(\formula) \le n_\varepsilon$, we can solve it in polynomial time by exhaustively applying clause branching on this constant-sized set of $3$-clauses (in each branch, $m_3(\formula)$ decreases by at least one) and solving the resulting $2$-CNF instances by \cref{lem:2occ-2CNF-polytime}.

    % For $m_3(\formula) > n_\varepsilon$, we proceed as described in \cref{alg:2occ-bisection}. 
    {When $m_3(\varphi) > n_{\epsilon}$, the algorithm executes Line~\ref{line:bisection-step} onwards. Before delving into the detailed proof, we provide a high-level outline. We will first define a new measure $\rho$ that is initially bounded by $m_3(\formula)$. Thus, establishing a runtime bound of $O^*(\gamma^\rho)$ for some $\gamma>1$ yields a runtime bound of $O^*(\gamma^{m_3(\formula)})$. We then demonstrate that (i) the bisection step (Line~\ref{line:bisection-step}) does not increase the measure;  (ii) the divide-and-conquer step (Line~\ref{line:dc-step}) does not affect the exponential part of the time complexity; and (iii) the branching step yields a branching factor of at most $1.1487$.}

    We adopt the following measure for any valid input $(\formula, A, B)$ of the algorithm:
    \[
        \rho(\formula, A, B) = \max(\abs{A}, \abs{B}) + (3 - \varepsilon')\abs{S},
    \]
    where $\abs{A}$ and $\abs{B}$ are the number of $3$-clauses in the respective partitions, and $\abs{S}$ is the number of edges between $A$ and $B$. We choose $\varepsilon'$ 
    such that $(3 - \varepsilon')(\frac{1}{6}+\varepsilon) \le \frac{1}{2}-\frac{1}{n_{\epsilon}}<\frac{1}{2}-\frac{1}{m_3(\formula)}$. 
    Crucially, since the algorithm is initially invoked with $B = \emptyset$ and $A = V(G_{\formula}^*)$, the initial measure is exactly $\rho(\formula, A, \emptyset) = \abs{A} = m_3(\formula)$. {Furthermore, all reduction rules do not introduce new variables or clauses; thus, applying them does not increase the measure $\rho$.}
    
    {We first show that the bisection step does not increase the measure.} When $B=\emptyset$, the algorithm applies the \texttt{Monien-Preis} procedure on $G_{\formula}^*$, generating a new partition $(A', B')$ and the corresponding $S'$ with $\abs{A'}, \abs{B'} \le \frac{\abs{A}}{2}+1$ and $\abs{S'} \le (\frac{1}{6} + \varepsilon)\abs{A}$. Since $(3 - \varepsilon')(\frac{1}{6} + \varepsilon) \le \frac{1}{2}-\frac{1}{m_3(\formula)}=\frac{1}{2}-\frac{1}{\abs{A}}$, the new measure $\rho(\formula, A', B')$ is bounded by:
    \[
        \rho(\formula, A', B') \le \left(\frac{\abs{A}}{2}+1\right) + (3-\varepsilon')\left(\frac{1}{6} + \varepsilon\right)\abs{A} \le \frac{\abs{A}}{2}+1 + \frac{\abs{A}}{2}-1 = \abs{A} = \rho(\formula, A, B).
    \]

    Let $T(\rho)$ be the maximum number of leaves in the search tree generated by the algorithm on a state with measure $\rho$, and let $\gamma>1$ be the largest branching factor generated by the subsequent branching step. {We will prove that $T(\rho) \in \bigOstar{\gamma^{\rho}}$.}
    
    {We show that the divide-and-conquer step does not affect the exponential part of the time complexity in terms of our measure $\rho$.}
    To this end, we prove by induction on the measure $\rho$ that $T(\rho) \le m_3(\formula) \gamma^\rho\in \bigOstar{\gamma^{\rho}}$.
    In this step, $S = \emptyset$, and the formula decomposes into disjoint subformulas $\formula_A$ and $\formula_B$. Let $\rho_A := \rho(\formula_A, A, \emptyset)$ and $\rho_B := \rho(\formula_B, B, \emptyset)$ be the measures of the two subproblems, respectively. Since $\rho = \max(\abs{A}, \abs{B})$, we have $\rho_A = \abs{A} \le \rho$ and $\rho_B = \abs{B} \le \rho$. By the inductive hypothesis, the total cost of solving these two subformulas independently is:
    \[ 
    T(\rho) \le T(\rho_A) + T(\rho_B) \le \abs{A}\gamma^{\rho_A} + \abs{B}\gamma^{\rho_B} \le (\abs{A} + \abs{B})\gamma^\rho = m_3(\formula) \gamma^\rho. 
    \]
    
    {It remains to show that the branching step generates a branching factor of at most $\gamma<1.1487$.
    As established in the paragraph ``Graph Representation'', when a clause is removed from the formula or its variables are assigned, the operation triggers a chain reaction that propagates through its shared variables. In the multigraph $G_\formula^*$, this corresponds to the deletion of the corresponding vertex and the removal of all its incident edges.}
    
    When $S \neq \emptyset$, the algorithm selects an edge in $S$ connecting $C_A \in A$ and $C_B \in B$. Assume that we branch on $C_A$. Let $\formula_1=\formula\setminus\{C_A\}$ and $\formula_2=\formula[C_A=0]\setminus\{C_A\}$. For each branch $i \in \{1, 2\}$, let $A_{i}$ and $B_{i}$ be the subsets of clauses that remain as $3$-clauses in $R(\formula_i)$. To formalize the measure drop, let $\Delta_{i,A} = \abs{A} - \abs{A_i}$, $\Delta_{i,B} = \abs{B} - \abs{B_i}$, and $\Delta_{i,S} = \abs{S} - \abs{S_i}$ denote the reductions in the respective parts. We show that for each branch $i\in\{1,2\}$, either $\Delta_{i,S} \ge 2$, or $\Delta_{i,S} = 1$, $\Delta_{i,A}\ge 3$ and $\Delta_{i,B}\ge 1$.

If $C_A$ is incident to two or more edges in $S$, the chain reaction in both $\formula_1$ and $\formula_2$ removes at least two edges from $S$, guaranteeing $\Delta_{i,S} \ge 2$ for $i \in \{1, 2\}$. 
% This already decreases the measure $\rho$ by at least $2(3-\varepsilon') = 6-2\varepsilon'$ in both branches, yielding a branching factor of at most $\tau(6-2\varepsilon',6-2\varepsilon') < 1.1225$, which is sufficient. Since the branching factor is good enough if we have $\Delta_{i,S} \ge 2$ for $i\in \{1,2\}$, in what follows, we can further assume that $\Delta_{i,S} \le 1$ for $i\in \{1,2\}$. In particular, $C_A$ and $C_B$ are each incident to exactly one edge in $S$.

If $C_A$ is incident to exactly one edge in $S$, since $C_A$ is a $3$-clause, its two remaining incident edges must connect to other clauses within $A$. If they connect to distinct clauses, the chain reaction in either $\formula_i$ propagates along both edges, satisfying or shrinking at least two additional clauses in $A$, yielding $\Delta_{i,A} \ge 3$. Alternatively, if the two internal edges connect to the same clause $C'_A$, the chain reaction either removes $C'_A$ or shrinks $C'_A$ to a $1$-clause. In the latter case, the $1$-clause will also be removed. Because $C'_A$ is a $3$-clause, its removal triggers a further chain reaction along its third incident edge and removes or shrinks a third $3$-clause. If this third $3$-clause belongs to $A$, we have $\Delta_{i,A} \ge 3$; otherwise, we remove one more edge in $S$ and have $\Delta_{i,S} \ge 2$. Furthermore, the chain reaction propagates along the single edge in $S$ connecting $C_A$ to $C_B$. This eliminates the edge in $S$, yielding $\Delta_{i,S} \ge 1$, and eventually satisfies or shrinks $C_B$, ensuring $\Delta_{i,B} \ge 1$. Consequently, for both $i \in \{1, 2\}$, we guarantee that either $\Delta_{i,S} \ge 2$, or $\Delta_{i,S} = 1$, $\Delta_{i,A}\ge 3$ and $\Delta_{i,B}\ge 1$.

Recall that the algorithm employs a two-level alternating branching strategy. It first branches on $C_A \in A$ to obtain $\formula_1$ and $\formula_2$. Then, in each resulting subformula $\formula_i$, it selects a clause in $B_i$ incident to $S$ and branches on it to generate subformulas $\formula_{i,1}$ and $\formula_{i,2}$. Let $\Delta_{i,j,A}$, $\Delta_{i,j,B}$, and $\Delta_{i,j,S}$ denote the respective component reductions in this second step for $j \in \{1, 2\}$. By the same argument as above, the second level guarantees that either $\Delta_{i,j,S} \ge 2$, or $\Delta_{i,j,S} \ge 1$, $\Delta_{i,j,B} \ge 3$, and $\Delta_{i,j,A} \ge 1$. Combining the two levels across any of the four branches in the search tree, it holds for the cumulative reduction for $A$, $B$, $S$ that either (a) $\Delta_{i,S} + \Delta_{i,j,S} \ge 4$; or (b) $\Delta_{i,A} + \Delta_{i,j,A} \ge 1$, $\Delta_{i,B} + \Delta_{i,j,B} \ge 1$, and $\Delta_{i,S} + \Delta_{i,j,S} \ge 3$; or (c) $\Delta_{i,A} + \Delta_{i,j,A} \ge 4$, $\Delta_{i,B} + \Delta_{i,j,B} \ge 4$, and $\Delta_{i,S} + \Delta_{i,j,S} \ge 2$.
The total measure drop in terms of $\rho$ is at least $\min\{4(3-\varepsilon'), 1+3(3-\varepsilon'), 4 + 2(3-\varepsilon')\} = 10 - 3\varepsilon'$. This uniform drop across all four branches yields a branching factor of at most $\tau(10-3\varepsilon', 10-3\varepsilon', 10-3\varepsilon', 10-3\varepsilon')<1.1487$.
\end{proof}

Since $m_3(\formula)\le m(\formula)$ and $m_3(\formula)\le \frac{2}{3}n(\formula)$, by \cref{lem:2occ-3CNF-m3} we have
\begin{lemma}\label{lem:2occ-3CNF-m-n}
    \ParityTwoOccSAT{} on 3-CNFs with $m$ clauses and $n$ variables can be solved using polynomial space in $\bigOstar{1.1487^m}$ time or $\bigOstar{1.0969^n}$ time.
\end{lemma}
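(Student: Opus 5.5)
This is a direct corollary of \cref{lem:2occ-3CNF-m3}, so the plan is a change of measure. I will run the algorithm of \cref{lem:2occ-3CNF-m3} unchanged. It solves any 3-CNF instance of \ParityTwoOccSAT{} in $\bigOstar{1.1487^{m_3}}$ time, where $m_3$ is the number of $3$-clauses. All that remains is to express $m_3$ in terms of $m$ and of $n$, and to confirm that only polynomial space is used.

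For the clause parameter, every $3$-clause is a clause, so $m_3(\formula)\le m(\formula)$. Since $1.1487>1$, we get $1.1487^{m_3}\le 1.1487^{m}$, which gives the $\bigOstar{1.1487^m}$ bound at once.

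For the variable parameter, I will use the counting argument stated just before \cref{lem:2occ-3CNF-m3}. Each $3$-clause contributes three literal occurrences, so $L(\formula)\ge 3m_3(\formula)$. The $2$-occurrence restriction gives $L(\formula)\le 2n(\formula)$. Together these yield $m_3\le \tfrac{2}{3}n$, and hence
\[
1.1487^{m_3}\le \left(1.1487^{2/3}\right)^{n}.
\]
I then need the numerical check $1.1487^{2/3}<1.0969$. Taking logarithms, $\ln 1.1487\approx 0.1386$, two thirds of that is $\approx 0.0924$, and $e^{0.0924}\approx 1.0968$, which is just below $1.0969$. This arithmetic is the only point needing care. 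If the margin is too tight, I would note that the true branching factor $\tau(10-3\varepsilon',\dots)$ is $4^{1/(10-3\varepsilon')}\approx 2^{1/5}\approx 1.148698$, whose $2/3$ power is $2^{2/15}\approx 1.09682<1.0969$.

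For the space bound, the algorithm is a depth-first branch-and-search. The reduction rules run in polynomial space by \cref{lem:poly-time-reduce}, and the \texttt{Monien-Preis} bisection and the $2$-CNF base case of \cref{lem:2occ-2CNF-polytime} are polynomial-time. The recursion depth is polynomial, since each branch strictly decreases $m_3$, or decreases $\rho$ through its $\abs{S}$ component. Each stack frame stores only a formula and a partition, so the total space is polynomial.
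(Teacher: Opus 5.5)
Your proposal is correct and follows the paper's argument: the paper also reads the lemma off the $O^*(1.1487^{m_3})$ bound of the preceding lemma, using $m_3\le m$ and $m_3\le \frac{2}{3}n$ (the latter from $3m_3\le L\le 2n$). Your numerical check that $1.1487^{2/3}\approx 1.09683<1.0969$ and your recursion-depth argument for polynomial space fill in details the paper leaves implicit, and both are sound.
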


\subsection{The Final Results}
By \cref{lem:2occ-4cls-branching-m}, branching on $4^+$-clauses yields a branching factor of at most $\tau(5, 1)<1.3248$ with respect to the number of clauses $m$, which dominates the $\bigO{1.1487^m}$ time bound for solving the resulting $3$-CNF instances given by \cref{lem:2occ-3CNF-m-n}. Thus, we have
\begin{theorem}\label{thm:result-2occ-m}
    \ParityTwoOccSAT{} with $m$ clauses can be solved in $\bigOstar{\ResultTwoOccM}$ time and polynomial space.
\end{theorem}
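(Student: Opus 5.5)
We combine the two phases of the algorithm and bound the total number of leaves of the search tree by $\bigOstar{1.3248^m}$. The measure throughout is $m(\formula)$. We use two facts already established: reduction rules never increase $m$ (\cref{lem:rules-correctness,lem:poly-time-reduce}), and every branching node produces subformulas with strictly fewer clauses. The overall algorithm runs as follows on a reduced formula $\formula$. If $\formula$ contains a $4^+$-clause $C$, it branches on $C$ via \cref{lem:clause-branching} into $R(\formula\setminus\{C\})$ and $R(\formula[C=0]\setminus\{C\})$. Otherwise $\formula$ is a 3-CNF formula and it calls the bisection algorithm of \cref{lem:2occ-3CNF-m-n}.

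Let $\gamma=\tau(5,1)<1.3248$, which is the positive root of $x^5=x^4+1$. We prove by induction on $m$ that the number of leaves satisfies $T(m)\le c\cdot \gamma^{m}$ for a suitable constant $c$, up to polynomial factors. There are two cases.

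\begin{enumerate}
\item \emph{Leaf case: $\formula$ is a 3-CNF formula.} \cref{lem:2occ-3CNF-m-n} gives cost $\bigOstar{1.1487^{m}}$. Since $1.1487<\gamma$, this is at most $\gamma^{m}$ up to the polynomial factor.
\item \emph{Branching on a $4^+$-clause.} By \cref{lem:2occ-4cls-branching-m}, the two children have measures at most $m-5$ and $m-1$. The induction gives $T(m)\le c\gamma^{m-5}+c\gamma^{m-1}=c\gamma^{m}(\gamma^{-5}+\gamma^{-1})=c\gamma^m$, by the definition of $\tau$.
\end{enumerate}

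This is a standard branching-vector argument. Its one subtle point is that \cref{lem:2occ-4cls-branching-m} assumes the formula being branched on is reduced. This holds because the algorithm applies $R(\cdot)$ before every step, and the measure drops are stated for $R(\formula_i)$.

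The remaining bookkeeping is cost. Each node does polynomial work, by \cref{lem:poly-time-reduce} and because computing bisections is polynomial. Hence the running time is $\bigOstar{\gamma^m}=\bigOstar{1.3248^m}$.

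Polynomial space holds because the recursion has depth at most $m$ and each frame stores a formula of polynomial size. The 3-CNF subroutine itself uses polynomial space by \cref{lem:2occ-3CNF-m-n}. Its divide-and-conquer step processes the two independent subformulas sequentially and combines them as $\parity(\formula_A)\cdot\parity(\formula_B)$, so it also stays in polynomial space.

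The main obstacle is mild. The 3-CNF subroutine's bound was proved in terms of $m_3$, with an internal measure $\rho$, whereas the outer argument is stated in terms of $m$. We only need that subroutine's cost as a black box, bounded by $\bigOstar{1.1487^{m_3}}$. Since $m_3\le m$, this is dominated by $\gamma^{m}$, so mixing the two measures causes no difficulty.
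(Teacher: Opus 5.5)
Your proposal is correct and follows the paper's argument: branch on $4^+$-clauses with branching vector $(5,1)$ from \cref{lem:2occ-4cls-branching-m}, and note that the $\bigOstar{1.1487^m}$ bound of \cref{lem:2occ-3CNF-m-n} for the remaining 3-CNF leaves is dominated by $\tau(5,1)<1.3248$. The paper states this in one line; your explicit induction on $m$, the remark that reduction keeps the instance reduced and never increases $m$, and the space accounting only fill in the standard details behind that line.
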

Similarly, the branching factor of $\tau(9, 4)<1.1193$ in terms of the number of variables $n$ from \cref{lem:2occ-4cls-branching-n} dominates the $\bigO{1.0969^n}$ bound for $3$-CNF instances. Thus, we obtain 
\begin{theorem}\label{thm:result-2occ-n}
    \ParityTwoOccSAT{} with $n$ variables can be solved in $\bigO{\ResultTwoOccN}$ time and polynomial space.
\end{theorem}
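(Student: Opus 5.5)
The plan is to combine the two phases of the algorithm into a single measure-based analysis in terms of $n$, in the same way \cref{thm:result-2occ-m} was obtained for $m$.

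\emph{Phase 1.} While the reduced formula $R(\formula)$ contains a $4^+$-clause, we apply clause branching (\cref{lem:clause-branching}) on such a clause $C$. By \cref{lem:2occ-4cls-branching-n}, each such step decreases $n$ by amounts whose branching factor is at most $\tau(9,4)<1.1193$. After branching we re-apply the reduction rules, which is polynomial by \cref{lem:poly-time-reduce} and correct by \cref{lem:rules-correctness}. These rules never increase $n$, so the recurrence stays valid.

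\emph{Phase 2.} Once no $4^+$-clause remains, every leaf of the phase-1 search tree is a 2-occ 3-CNF formula $\formula'$ with $n(\formula')\le n$ variables. We hand $\formula'$ to the bisection procedure of \cref{lem:2occ-3CNF-m-n}, which runs in $\bigOstar{1.0969^{n(\formula')}}$ time.

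\emph{Combining the phases.} Let $T(n)$ bound the total running time, with $\gamma=1.1193$. We prove $T(n)\le c\cdot \mathrm{poly}(n)\,\gamma^{n}$ by induction on $n$:
\begin{itemize}
\item If the instance is a 3-CNF, the bound holds because $1.0969<\gamma$.
\item Otherwise, the branching step gives
\[
T(n)\le T(n-a_1)+T(n-a_2)+\mathrm{poly}(n),
\]
where $\tau(a_1,a_2)\le\gamma$, and the induction closes.
\end{itemize}
Equivalently, the search tree of phase 1 has at most $\gamma^{n-n'}$ leaves reaching a given residual size $n'$. Each such leaf costs $1.0969^{n'}\le\gamma^{n'}$, so the total is $\bigOstar{\gamma^n}$.

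The $O^*$ can be tightened to the stated $\bigO{1.1193^n}$ because $\tau(9,4)$ is strictly below $1.1193$. This slack absorbs the polynomial factors.

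\emph{Space.} Polynomial space follows because both phases are depth-first recursions of depth $O(n)$ with polynomial work per node. The bisection algorithm of Monien and Preis runs in polynomial time and space.

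\emph{Main obstacle.} The one point needing care is that the measure at the start of phase 2 equals the actual variable count $n(\formula')$, which we use to bound $m_3(\formula')\le\frac23 n(\formula')$. This holds since reductions only delete variables and clauses, and it is exactly what \cref{lem:2occ-3CNF-m-n} uses. Beyond that, the argument is a routine gluing of the two lemmas.
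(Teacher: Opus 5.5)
Your proposal is correct and follows essentially the same route as the paper. You branch on $4^+$-clauses with factor $\tau(9,4)<1.1193$ via \cref{lem:2occ-4cls-branching-n}, then solve the residual reduced 3-CNF leaves with the $\bigOstar{1.0969^n}$ bisection algorithm, whose bound is dominated. The paper's proof is the single sentence that the first factor dominates the second; your leaf-weight induction, the use of the slack below $1.1193$ to absorb polynomial factors, and the space argument only make that gluing explicit.
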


%% file: figures/example_graph.tex
\begin{tikzpicture}[
    % ==========================================
    % 1. Parameter Settings
    % ==========================================
    /utils/exec={
        \def\nSize{6mm}      
        \def\hDist{2.8}      % Slightly wider to prevent crowding of long path labels
        \def\vDist{1.0}      % Increased height to provide space for internal labels
        \def\chainOff{0.8}   
        \def\graphGap{6.0}   
        \def\labelDist{0.2cm}
    },
    % ==========================================
    % 2. Style Definitions
    % ==========================================
    clause_node/.style={circle, draw=black, thick, fill=white, minimum size=\nSize, font=\scriptsize\bfseries, inner sep=0pt},
    % Variable/sequence labels: uniform gray, no background fill
    var_label/.style={text=black!50, font=\fontsize{7}{8}\selectfont, inner sep=2pt, midway},
    all_edge/.style={-, thick, color=black}
]

% =====================================================================
% Left side: Dual Graph G_\formula
% =====================================================================
\begin{scope}[local bounding box=G_left]
    \node[clause_node] (C1) at (0, \vDist) {$C_1$};
    \node[clause_node] (C2) at (\hDist, \vDist) {$C_2$};
    \node[clause_node] (C3) at (\hDist, 0) {$C_3$};
    \node[clause_node] (C4) at (0, 0) {$C_4$};

    \node[clause_node] (D1) at (0.3*\hDist, \vDist+\chainOff) {$D_1$};
    \node[clause_node] (D2) at (0.7*\hDist, \vDist+\chainOff) {$D_2$};
    \node[clause_node] (D3) at (0.7*\hDist, -\chainOff) {$D_3$};
    \node[clause_node] (D4) at (0.3*\hDist, -\chainOff) {$D_4$};

    % Direct connections
    \draw[all_edge] (C1) -- node[var_label, above] {$x_1$} (C2);
    \draw[all_edge] (C2) -- node[var_label, right] {$x_2$} (C3);
    \draw[all_edge] (C3) -- node[var_label, below] {$x_3$} (C4);
    \draw[all_edge] (C4) -- node[var_label, left]  {$x_4$} (C1);

    % Chain connections
    \draw[all_edge] (C1) -- node[var_label, left=1pt]  {$y_1$} (D1);
    \draw[all_edge] (D1) -- node[var_label, above]     {$z_1$} (D2);
    \draw[all_edge] (D2) -- node[var_label, right=1pt] {$y_2$} (C2);

    \draw[all_edge] (C3) -- node[var_label, right=1pt] {$y_3$} (D3);
    \draw[all_edge] (D3) -- node[var_label, below]     {$z_2$} (D4);
    \draw[all_edge] (D4) -- node[var_label, left=1pt]  {$y_4$} (C4);
\end{scope}

% =====================================================================
% Right side: Multigraph G_\formula^*
% =====================================================================
\begin{scope}[xshift=\graphGap cm, local bounding box=G_right]
    \node[clause_node] (C1s) at (0, \vDist) {$C_1$};
    \node[clause_node] (C2s) at (\hDist, \vDist) {$C_2$};
    \node[clause_node] (C3s) at (\hDist, 0) {$C_3$};
    \node[clause_node] (C4s) at (0, 0) {$C_4$};

    % 1. Horizontal direct edges (Middle lines)
    \draw[all_edge] (C1s) -- node[var_label, above] {$x_1$} (C2s);
    \draw[all_edge] (C4s) -- node[var_label, below] {$x_3$} (C3s);
    
    % 2. Vertical direct edges
    \draw[all_edge] (C2s) -- node[var_label, right] {$x_2$} (C3s);
    \draw[all_edge] (C4s) -- node[var_label, left]  {$x_4$} (C1s);

    % 3. Compressed path edges (Outer bent edges)
    % Upper arc: Label placed above, pointing towards the rectangle interior
    \draw[all_edge] (C1s) to[bend left=35] 
        node[var_label, above=2pt] {$D_1,D_2$} (C2s);

    % Lower arc: Label placed below, pointing towards the rectangle interior
    \draw[all_edge] (C4s) to[bend right=35] 
        node[var_label, below=2pt] {$D_4,D_3$} (C3s);
\end{scope}

% =====================================================================
% Transition arrow and alignment labels
% =====================================================================
\draw[-{Stealth[length=4mm]}, line width=2pt, color=black!10] 
    (\hDist+0.8, \vDist*0.5) -- (\graphGap-0.8, \vDist*0.5);

\node[anchor=north, font=\small\bfseries] at ([yshift=-\labelDist] G_left.center |- G_left.south) {Dual Graph $G_{\formula}$};
\node[anchor=north, font=\small\bfseries] at ([yshift=-\labelDist] G_right.center |- G_left.south) {Multigraph $G_{\formula}^*$};

\end{tikzpicture}

%% file: Algs/alg-bisection.tex
\renewcommand{\algorithmicrequire}{\textbf{Input:}}
\renewcommand{\algorithmicensure}{\textbf{Output:}}
\renewcommand{\algorithmiccomment}[1]{\quad  \textcolor{gray}{$\triangleright$ #1}}
\begin{algorithm}[t]
\caption{\texttt{Bisection-Solve}($\formula, A, B$)}\label{alg:2occ-bisection}
\begin{algorithmic}[1]
\Require A reduced formula $\formula$, and a disjoint partition $(A, B)$ of the $3$-clauses in $\formula$.
\Ensure The parity of the number of satisfying assignments of $\formula$ ($1$ for odd, $0$ for even).

\LeftComment{Let $\varepsilon>0$ be a fixed small constant and $n_{\varepsilon}$ be the integer defined in \cref{lem:bisection-width}.}
\If{{$|V(G_{\formula}^*)|\leq n_{\varepsilon}$}}
\State Solve the instance by iteratively branching on $3$-clauses and solving the resulting $2$-CNF instances using \cref{lem:2occ-2CNF-polytime}, and \Return the result.
\EndIf

\If{$B = \emptyset$} \mycomment{bisection step}\label{line:bisection-step}
    \State $(A',B')\gets \texttt{Monien-Preis}(G_\formula^*)$;
    \State \Return \texttt{Bisection-Solve}($\formula, A', B'$);
\EndIf

\If{there are no edges between $A$ and $B$} \mycomment{divide-and-conquer step}\label{line:dc-step}
    \State Let $\formula_A$ and $\formula_B$ be the disjoint subformulas induced by $A$ and $B$, respectively.
    \State \Return \texttt{Bisection-Solve}($\formula_A, A, \emptyset$) $\land$ \texttt{Bisection-Solve}($\formula_B, B, \emptyset$).
\EndIf

\LeftComment{branching step}
\State Select an edge between $A$ and $B$, and pick its endpoint $3$-clause $C \in A \cup B$, alternating the selection of $C$ between $A$ and $B$ at each successive level of the recursion. 
\State Let $\formula_1 = \formula \setminus \{C\}$ and $\formula_2 = \formula[C=0] \setminus \{C\}$.
\For{$i \in \{1, 2\}$}
    \State Exhaustively apply reduction rules on $\formula_i$.
    \State Let $A_i \subseteq A$ and $B_i \subseteq B$ be the subsets of clauses that remain as $3$-clauses in $\formula_i$.
\EndFor
\State \Return \texttt{Bisection-Solve}($\formula_1, A_1, B_1$) $\oplus$ \texttt{Bisection-Solve}($\formula_2, A_2, B_2$).
\end{algorithmic}
\end{algorithm}

%% file: sections/Parity-SAT-L.tex
\section{An Algorithm for Parity-SAT in Terms of $L$}\label{sec:L}
In this section, we present an algorithm for the general \ParitySAT{} problem that runs in $\bigO{1.1052^L}$ time and polynomial space.
Our algorithm processes variables in decreasing order of their degrees. By iteratively branching on and eliminating variables of high degrees, the algorithm ultimately reduces the given instance to \ParityTwoOccSAT{}. At this stage, we invoke the algorithm developed in the \cref{sec:2-occ} to solve the remaining part. 
In our analysis, we employ the measure-and-conquer method~\cite{journals/jacm/FominGK09}.
{In what follows, we first introduce the measure used in our analysis, and then present what each step of the algorithm handles and its complexity.}

\subsection{The Measure}\label{subsec:L-measure}
{For a formula $\formula$, let $n_i(\formula)$ be the number of $i$-variables.} We adopt the following measure:
\begin{align*}
    \mu(\formula):= \sum_{i\geq 1} w_i\cdot n_i(\formula), \text{where } w_1=0, w_2=1.5, \text{and } w_i=i\text{~for~$i\geq 3$}.
\end{align*}
Since $w_i \le i$ for all $i \ge 1$, it holds that
$
    \mu(\formula)\leq \sum_{i\ge 1} i\cdot n_i(\formula)\leq L(\formula).
$
Thus, a running-time bound of $\bigOstar{c^{\mu(\formula)}}$ for some $c>1$ implies a running-time bound of $\bigOstar{c^{L(\formula)}}$. 

Intuitively, since our algorithm processes variables in decreasing order of their degrees and relies on an efficient subroutine for \ParityTwoOccSAT{}, this measure allows us to smoothly propagate the efficiency from the low-degree base case up to higher degrees. 

For every integer $i\geq 1$, let $\delta_i \coloneqq w_i - w_{i-1}$ denote the marginal weight, i.e., the decrease in the measure when a variable's degree drops from $i$ to $i-1$. 
By the definition of the weights, we have $\delta_2 = \delta_3 = 1.5$ and $\delta_i = 1$ for all $i \ge 4$. Furthermore, the following properties hold:
\begin{equation*}
    w_3 = 2w_2, \quad w_2 = \delta_3, \quad \text{and} \quad \delta_i \leq \delta_{i-1} \text{ for every } i \geq 3.
\end{equation*}

\begin{lemma}\label{lem:mu-reduction-measure}
    For any formula $\formula$, applying any reduction rule does not increase $\mu(\formula)$.
\end{lemma}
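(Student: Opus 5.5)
Since $w_i$ is non-decreasing in $i$ ($w_1=0\le w_2=1.5\le w_3=3\le\cdots$) and every marginal weight $\delta_i$ is non-negative, I will argue that $\mu$ is monotone under the two elementary changes that reduction rules cause. \emph{First}, deleting a literal occurrence lowers the degree of a variable by one, which changes its weight by $-\delta_i\le 0$. \emph{Second}, deleting a variable entirely, by assigning it, merging it with another, or removing it from the variable set, drops its weight $w_i\ge 0$. The plan is to check, rule by rule, that each R-Rule can be written as a sequence of such deletions, together with at most one operation that needs separate handling.

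For R-Rules~\ref{rrule:empty-cls}, \ref{rrule:duplicated}, \ref{rrule:taut}, \ref{rrule:subsumption} and \ref{rrule:subsumption-complementary-ltr}, the rule only removes clauses or literal occurrences. So degrees of the remaining variables never increase and $\mu$ does not increase; R-Rule~\ref{rrule:empty-cls} and R-Rule~\ref{rrule:0var} simply terminate. R-Rules~\ref{rrule:1cls}, \ref{rrule:1var} and \ref{rrule:dominate} assign literals. An assignment removes the assigned variable, deletes satisfied clauses, which only lowers the degrees of other variables, and deletes falsified occurrences, so again $\mu$ cannot grow. R-Rule~\ref{rrule:twins} deletes $\var(\literal_b)$ and its occurrences while leaving every other degree unchanged. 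R-Rules~\ref{rrule:small-subf-0} and \ref{rrule:small-subf-1} remove $\formula_1$, possibly assign $x$, or leave $x$ unassigned. Removing $\formula_1$ deletes its variables other than $x$, and it only lowers the degree of the shared variable $x$; the brute-force computation itself has no effect on the formula.

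The one rule that is not purely deletive is R-Rule~\ref{rrule:two-2cls-complementary}, which sets $x=\nl{y}$, and I expect this to be the main obstacle. Let $x$ have degree $a$ and $y$ have degree $b$. After substitution, the clauses $(x\vee y)$ and $(\nl{x}\vee\nl{y})$ become tautologies $(\nl{y}\vee y)$ and $(y\vee\nl{y})$ and are removed by R-Rule~\ref{rrule:taut}. The merged variable $y$ therefore has degree at most $(a-2)+(b-2)=a+b-4$; other clauses containing both variables could create further duplicates or tautologies, which only lower it further. The goal is to show $w_{a+b-4}\le w_a+w_b$ for all $a,b\ge 2$ (with $w_i:=0$ for $i\le 1$). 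Since $w_i\le i$ for every $i$, we get $w_{a+b-4}\le a+b-4$. If $a,b\ge 3$, then $w_a+w_b=a+b$ and the inequality holds. If $a=2$, then $w_{b-2}\le w_b$ by monotonicity, and the case $b=2$ is symmetric. Hence $\mu$ does not increase under this rule either.

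Finally, the rules are composed from these monotone steps, so their exhaustive application never increases $\mu$ either, which proves the lemma.
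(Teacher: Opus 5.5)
Your proposal is correct and follows the same route as the paper. Every rule other than R-Rule~\ref{rrule:two-2cls-complementary} only deletes clauses, literal occurrences or variables, so it cannot increase $\mu$ because $w_i$ is monotone. For the merging rule, the degree of the new variable is at most $a+b-4$ once the two tautologies are removed, and then $w_{a+b-4}\le w_a+w_b$. The paper only says this last inequality ``can be verified'', whereas you check it explicitly using $w_i\le i$ and the case split $a=2$ versus $a,b\ge 3$.
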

\begin{proof}
    Except for \cref{rrule:two-2cls-complementary}, all other reduction rules only involve removing clauses, deleting literal occurrences, or assigning values to variables. These rules never introduce new variables or literal occurrences. Consequently, the degree of any remaining variable does not increase. By the monotonicity of $w_i$, these rules cannot increase $\mu(\formula)$.

    \cref{rrule:two-2cls-complementary} sets $x = \nl{y}$ for two $2$-clauses $(x \lor y)$ and $(\nl{x} \lor \nl{y})$, merging two variables into one. Let $a$ and $b$ be the original degrees of $x$ and $y$, respectively. Since $x$ and $y$ co-occur in these two clauses, we have $a \ge 2$ and $b \ge 2$. After the replacement, both clauses become tautologies and are immediately removed by \cref{rrule:taut}. Thus, the degree $d'$ of the new variable is bounded by $d' \le a + b - 4$. It can be verified that $w_{d'}\le w_{a}+w_{b}$ for all $d'\le a+b-4$ with $a,b\ge 2$. Thus, the application of \cref{rrule:two-2cls-complementary} does not increase the measure.
\end{proof}

\subsection{The Algorithm}\label{subsec:alg-L}
\begin{table}[htpb]
    \centering
    \input{tables/result-L-overview}
    \caption{Overview of branching factors for each step in the algorithm.}\label{tab:result-L-overview}
\end{table}

Our algorithm consists of six steps. An overview of the branching objects, vectors, and the resulting factors for each step is summarized in \cref{tab:result-L-overview}. 

In the algorithm, Step 1 first applies simple branching to eliminate all $4^+$-variables. 
Next, Steps 2--5 deal with $3$-variables, which is the core of the algorithm, by progressively purifying their local structures. 
Specifically, Step 2 applies clause branching (\cref{lem:clause-branching}) to eliminate $4^+$-clauses. Step 3 handles variables with both positive and negative occurrences, applying clause branching if the negative literal appears in a $3$-clause (Step 3.1) and simple branching if it is in a $2$-clause (Step 3.2). Step 4 applies simple branching to $3$-variables contained in $2$-clauses. 
Finally, Step 5 resolves the remaining well-structured cases, where all $3$-variables are positive and appear only in $3$-clauses. Step 5.1 applies simple branching to any {$3$-variable} whose clauses either contain a $2$-variable or share another $3$-variable. For the remaining variables, whose three clauses contain only distinct $3$-variables, Step 5.2 applies a tailored variable branching (\cref{lem:variable-branching}). 
After eliminating all $3$-variables, Step 6 reduces the instance to \ParityTwoOccSAT{}. At this point, the remaining formula consists entirely of $2$-variables, meaning $\mu(\formula) = w_2n(\formula) =1.5n(\formula)$. Consequently, it can be solved in $\bigOstar{1.1193^{\frac{\mu(\formula)}{1.5}}}\subseteq \bigOstar{1.0781^{\mu(\formula)}}$ time. 

The bottleneck of the algorithm arises in Step 5.1, which yields a worst-case branching factor of $1.1052$. 

At a high level, our analysis quantifies the measure decrease in each branch. We trace which variables are assigned or experience a degree drop to compute the measure decrease. Recall that dealing with $3$-variables constitutes the core part of our algorithm. Setting $\delta_3 = \delta_2 = w_2 = 1.5$ significantly simplifies the analysis in these steps: in many cases, we do not need to distinguish whether a variable's degree drops from $3$ to $2$, or from $2$ to $1$, because both cases yield the same measure decrease of $w_2=1.5$.
We primarily exploit the cascading measure drops from two specific scenarios: (1) when a clause shrinks into a $1$-clause, triggering \cref{rrule:1cls} to assign its variable; and (2) when a variable becomes a $1$-variable (e.g., when a $2$-variable loses one occurrence due to the removal of clause containing it, or a $3$-variable loses two occurrences simultaneously), triggering \cref{rrule:1var} to force assignments on variables in its remaining clause. Other reduction rules are mainly used to guarantee certain structural properties of the reduced formula, enabling us to deduce tighter bounds.

\subsubsection{Step 1: Dealing with $4^+$-variables} 

We first apply simple branching to eliminate $4^+$-variables if any.

\begin{lemma}\label{lem:L-step1-branching}
    Let $x$ be a variable in $\formula$ with the highest degree $d\geq 4$. Branching on (B1) $x=1$; (B2) $x=0$ generates a branching factor of at most $ \tau(12,4)<1.1003$.
\end{lemma}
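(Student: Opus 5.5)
The plan is to count measure decreases in each branch and then invoke \cref{lem:vec-dominate} with $p=16$ and $q=4$. The target is to show that each branch reduces $\mu$ by at least $4$, and that the two branches together reduce $\mu$ by at least $16$. That gives $\tau(a_1,a_2)\le\tau(12,4)$. The measure is only decreased by reductions, by \cref{lem:mu-reduction-measure}, so it suffices to exhibit the drop before $R(\cdot)$ is applied.

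In each branch, assigning $x$ removes $x$ itself, contributing $w_d=d\ge 4$. This settles $\min(a_1,a_2)\ge 4$. For the sum, let $x$ be an $(a,b)$-variable with $a+b=d$. We may assume $a,b\ge 1$: otherwise, say $b=0$, the branch $x=1$ satisfies every clause containing $x$, so every co-occurring variable loses degree, and domination-type arguments apply.

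Next we account for the other variables. Every clause containing $x$ is a $2^+$-clause by \cref{lem:reduced-formula}, so it has at least one other literal. In branch $x=1$, each of the $a$ clauses containing $x$ is removed; in branch $x=0$, each of the $b$ clauses containing $\nl{x}$ is removed. Every other literal occurrence $y$ in a removed clause lowers the degree of $\var(y)$ by one. Since $\delta_i\ge 1$ for all $i\ge 2$, each such occurrence contributes at least $1$, and at least $1.5$ if the degree falls to $1$ or from $3$. Hence, over both branches, the contribution from neighbouring occurrences is at least the total number of other-literal occurrences in the clauses of $x$, which is at least $d$. Together with $2w_d=2d$, the sum is at least $3d\ge 12$.

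To reach $16$ we need $4$ more; I expect this to be the main obstacle, since the case $d=4$ with all clauses of $x$ being $2$-clauses whose other variables are $4$-variables is tight. The first idea is the clauses that shrink: in branch $x=0$, the $a$ clauses containing $x$ lose a literal, and $2$-clauses become $1$-clauses, forcing further assignments via \cref{rrule:1cls}. Each such forced assignment removes a variable of weight at least $w_2=1.5$, on top of the $\delta$ already counted. Second, a $4^+$-variable $y$ adjacent to $x$ loses occurrences in both branches. The clean case split is by whether $x$ lies in some $2$-clause. If it does, the forced assignment of the partner, a variable of degree at most $d$ but removed entirely, gains at least $w_2$. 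If it does not, all clauses of $x$ are $3^+$-clauses, so there are at least $2d\ge 8$ other occurrences, and the sum is already at least $2d+2d=16$.

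In the $2$-clause case, say $(x\vee y)$ or $(\nl{x}\vee y)$, in the branch falsifying $x$'s literal $y$ gets assigned. This yields $w_{\deg y}$ in place of the $\delta$ counted once. For $\deg y\ge 2$ that is an extra gain of at least $w_{\deg y}-\delta_{\deg y}\ge 1.5$, and more if $y$ has high degree, so one or two such clauses must be combined. Property 4 of \cref{lem:reduced-formula} and subsumption rules prevent repeated partners. I would therefore finish by a short case analysis on the number $c$ of $2$-clauses containing $x$. With $c\ge 1$, the gain from $y$'s assignment is $w_{\deg y}$. Because $d$ is the maximum degree, the partner's other occurrences in other clauses also decrease degrees of their companions. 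This supplies the remaining slack needed to reach $\ge 16$.
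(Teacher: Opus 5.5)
Your overall frame is fine: $\min(\Delta_1,\Delta_0)\ge w_d=d\ge 4$, and the unit-per-occurrence baseline holds, since every other literal in a clause of $x$ is lost in the branch satisfying that clause and $\delta_i\ge 1$ for $i\ge 2$. The case with no $2$-clause through $x$, where at least $2d$ other occurrences give $4d\ge 16$, is also fine. The gap is the case with $c\ge 1$ two-clauses through $x$, which you never close. You start from the crude total $3d=12$ and look for $4$ more, but the per-partner gain you claim is wrong. The forced assignment of the partner $y$ happens in the branch that falsifies $x$'s literal, so it does not replace the $\delta$ counted in the other branch. Instead it can absorb occurrences of $y$ already counted in the same branch. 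For instance, a $2$-variable $y$ in $(x\vee y)$ and $(\nl{x}\vee\nl{y}\vee z)$ is allowed in a reduced formula. It contributes exactly $w_2+\delta_2=3$ over both branches, only one unit above its two occurrences; indeed $w_2-\delta_2=0$, not $\ge 1.5$. Even granting $1.5$ per partner, one or two $2$-clauses leave you short of $16$ when $d=4$. The remaining slack you attribute to the partner's occurrences outside the clauses of $x$ is unquantified and not available in general, because clauses in which $y$'s literal is falsified merely shrink. Likewise, a $4^+$-neighbour need not lose occurrences in both branches.

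The missing step is to refine the baseline rather than look for cascades. A $3^+$-clause has at least two other literals, so there are at least $2d-c$ other occurrences, and the deficit to $4d$ is exactly $c$, one unit per $2$-clause. Then do the accounting per variable. Let $k(y)$ be the number of clauses of $x$ containing $y$; distinct $2$-clauses through $x$ have distinct partners by Property~4 of \cref{lem:reduced-formula}. A partner $y$ is eliminated in one branch and loses an occurrence in the other, so it contributes at least $w_{\deg y}+\delta_{\deg y}$. This is at least $\deg y+1\ge k(y)+1$ if $\deg y\ge 3$, and equals $3\ge k(y)+1$ if $\deg y=2$. Hence $\Delta_1+\Delta_0\ge 2d+(2d-c)+c=4d\ge 16$. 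This is precisely the paper's argument, namely $Q(y)\ge (k(y)+t(y))\delta_d$ together with $\sum_y(k(y)+t(y))\ge 2d$. Your preliminary reduction to $a,b\ge 1$ is unnecessary and too vaguely justified to stand; nothing in the count uses it.
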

\begin{proof}
    Let $\Delta_1 = \mu(\formula) - \mu(R(\formula[x=1]))$ and $\Delta_0 = \mu(\formula) - \mu(R(\formula[x=0]))$ be the measure decrease in these two branches. Note that $\Delta_1, \Delta_0 \ge w_d$ since $x$ is assigned in both branches.
    To prove the lemma, it suffices to show $\Delta_1 + \Delta_0\ge 2w_d + 2d\delta_d$ since then 
    the worst case is $\tau(2w_d + 2d\delta_d - w_d, w_d)=\tau(w_d + 2d\delta_d, w_d) = \tau(3d, d)\le \tau(12, 4) < 1.1003$.
    
    Let $\mathcal{C}_x$ be the set of the $d$ clauses containing $x$ or $\nl{x}$. We partition $\mathcal{C}_x$ into $d_2$ clauses of length $2$, and $d_{\ge 3}$ clauses of length at least $3$.
    For every variable $y \in \var(\mathcal{C}_x) \setminus \{x\}$, let $k(y)$ be the number of clauses in $\mathcal{C}_x$ containing $y$ or $\nl{y}$, and let $t(y)$ be the number of $2$-clauses in $\mathcal{C}_x$ containing $y$ or $\nl{y}$. We have
    \[
        \sum_{y \in \var(\mathcal{C}_x) \setminus \{x\}} (k(y) + t(y)) \ge (2d_{\ge 3} + d_2 + d_2) = 2d
    \]
    since each $3^+$-clause contributes at least $2$ to $\sum k(y)$, and each $2$-clause contributes $1$ to $\sum k(y)$ and $1$ to $\sum t(y)$.
    
    To avoid double-counting, we analyze the measure decrease contributed by each individual variable. The variable $x$ is assigned a value in both branches, so its total contribution to $\Delta_1 + \Delta_0$ is exactly $2w_d$. For every other variable $y \in \var(\mathcal{C}_x) \setminus \{x\}$, let $Q_1(y)$ and $Q_0(y)$ be its measure drop in branch $x=1$ and $x=0$ respectively, and let its total contribution be $Q(y) = Q_1(y) + Q_0(y)$.
    Our goal is to prove that for every $y \in \var(\mathcal{C}_x) \setminus \{x\}$, its contribution satisfies $Q(y) \ge (k(y) + t(y))\delta_d$. If this holds, we obtain the desired bound:
    \[
        \Delta_1 + \Delta_0 \ge 2w_d + \sum_{y \in \var(\mathcal{C}_x) \setminus \{x\}} Q(y)\geq 2w_d+\sum_{y \in \var(\mathcal{C}_x) \setminus \{x\}} (k(y) + t(y))\delta_d\ge 2w_d+ 2d\delta_d.
    \]

    It remains to prove $Q(y) \ge (k(y) + t(y))\delta_d$ for every $y \in \var(\mathcal{C}_x) \setminus \{x\}$.
    Recall that $w_2 = 1.5$ and $w_i = i$ for $i \ge 3$. Because variables of degree $1$ are eliminated, we have $w_1 = 0$, giving marginal weights $\delta_2 = w_2 - w_1 = 1.5$ and $\delta_i = w_i - w_{i-1} \ge 1$ for $i \ge 3$. Thus, for any variable $y$, losing one occurrence drops the measure by at least $\delta_{\deg(y)}$, and losing $k$ occurrences reduces its measure by at least $k \cdot \delta_d$ (noting that $\delta_i \ge \delta_d = 1$ since $d \ge 4$). 
    
    Let $k_1(y)$ and $k_0(y)$ be the number of clauses in $\mathcal{C}_x$ containing $y$ that are satisfied by $x=1$ and $x=0$, respectively. We prove the lower bound for $Q(y)$ case by case:
    \begin{itemize}
        \item \textbf{Case 1: $t(y) = 0$.} We need $Q(y) \ge k(y)\delta_d$.
        \begin{itemize}
            \item If $\deg(y) \ge 3$: $y$ loses at least $k_1(y)$ occurrences in branch $x=1$ and $k_0(y)$ in $x=0$. Clearly, $Q(y) = Q_1(y) + Q_0(y)\ge k_1(y)\delta_d + k_0(y)\delta_d = k(y)\delta_d$.
            \item If $\deg(y) = 2$: We have $k(y) \in \{1, 2\}$. If $k(y) = 1$, exactly one branch satisfies the clause containing $y$, so $Q(y) \ge w_2 \ge \delta_d = k(y)\delta_d$. If $k(y) = 2$, both literals of $y$ appear in $\mathcal{C}_x$. If $k_1(y)=2$, every clause containing $y$ would also contain literal $x$, meaning $x$ dominates $y$. This implies \cref{rrule:dominate} is applicable, contradicting that $\formula$ is reduced. Thus, $k_1(y)\le 1$. Similarly, $k_0(y)\le 1$. Since $k(y)=k_0(y)+k_1(y)=2$, we have $k_1(y) = k_0(y) = 1$ and $Q_1(y), Q_0(y) \ge w_2$, yielding $Q(y) \ge 2w_2 = 3 \ge 2\delta_d = k(y)\delta_d$.
        \end{itemize}
        
        \item \textbf{Case 2: $t(y) \ge 1$.} By \cref{lem:reduced-formula}, any two $2$-clauses have at most one common variable. Thus, $t(y)\leq 1$. Consequently, $t(y)=1$ and we need $Q(y) \ge (k(y) + 1)\delta_d$. 
        
        Let $C \in \mathcal{C}_x$ be the unique $2$-clause containing $y$. Without loss of generality, assume the literal of $x$ in $C$ is falsified in branch $x=0$. Consequently, in branch $x=0$, $C$ becomes a $1$-clause, which triggers \cref{rrule:1cls} and completely eliminates $y$, yielding a measure drop $Q_0(y) \ge w_{\deg(y)}$. In branch $x=1$, $C$ is satisfied, so $y$ loses at least this occurrence; thus, $Q_1(y) \ge \delta_{\deg(y)}$. We analyze the total drop $Q(y)$ based on $\deg(y)$:
        \begin{itemize}
            \item If $\deg(y) \ge 3$: We have $Q_0(y) \ge w_{\deg(y)} \ge \deg(y)\delta_d \ge k(y)\delta_d$ and $Q_1(y) \ge \delta_{\deg(y)} \ge \delta_d$, which gives $Q(y) = Q_0(y) + Q_1(y) \ge (k(y)+1)\delta_d$. 
            \item If $\deg(y) = 2$: We have $Q_0(y) \geq w_2$ and $Q_1(y) \ge \delta_2 = w_2$. Since $k(y) \le \deg(y) = 2$, we obtain $Q(y) = Q_0(y) + Q_1(y) \ge 2w_2 = 3 \ge (k(y) + 1)\delta_d$.
        \end{itemize}
    \end{itemize}
    This completes the proof.
\end{proof}

\subsubsection{Step 2--5: Dealing with $3$-variables} 
We will apply several branching rules to deal with $3$-variables.
Our branching operations frequently involve the removal of clauses. 
We first analyze the measure decrease caused by this process when a single clause is removed.

\begin{lemma}\label{lem:measure-drop-remove-clause}
    Let $(\literal \vee C)$ be a clause in the reduced formula $\formula$, where $\var(\literal)$ is a $3$-variable. Then 
    $
        \mu(\formula) - \mu(R(\formula\setminus\{\literal \vee C\})) \ge (1 + \abs{C})w_2.
    $ 
    Furthermore, if $C$ contains at least one $2$-variable, then
    $
        \mu(\formula) - \mu(R(\formula\setminus\{\literal \vee C\})) \ge (2 + \abs{C})w_2.
    $
\end{lemma}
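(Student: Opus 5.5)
The plan is a direct accounting of how much each variable's weight drops. Throughout I use that in Steps 2--5 every variable has degree at most $3$ (Step 1 has already been applied), so $\delta_2=\delta_3=w_2=1.5$. By \cref{lem:mu-reduction-measure}, the reduction rules applied after the removal never increase $\mu$. So it suffices to exhibit enough weight loss from the explicit removal of the clause plus the particular reduction steps we choose to trace. Since $\formula$ is reduced, the clause $(\literal\vee C)$ has no duplicated or complementary literals (R-Rules~\ref{rrule:duplicated} and~\ref{rrule:taut}). Hence $\var(\literal)$ and the $\abs{C}$ variables of $C$ are pairwise distinct, and I can charge each of them separately.

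\emph{First bound.} Removing $(\literal\vee C)$ lowers the degree of $x=\var(\literal)$ from $3$ to $2$, a drop of $\delta_3=w_2$. Each $y\in\var(C)$ has degree $2$ or $3$ by \cref{lem:reduced-formula}, and it loses exactly one occurrence. This gives a drop of $\delta_{\deg(y)}=w_2$; in the case $\deg(y)=2$, $y$ becomes a $1$-variable with weight $w_1=0$. Summing over these distinct variables gives $(1+\abs{C})w_2$ before any further reduction. The remaining reductions are non-increasing, which gives the first inequality.

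\emph{Second bound.} Let $y\in\var(C)$ be a $2$-variable and let $D$ be its other clause; $D\neq(\literal\vee C)$. After the removal, $y$ is a $1$-variable, so \cref{rrule:1var} sets its literal in $D$ to $1$ and removes $D$. By property~2 of \cref{lem:reduced-formula}, $D$ contains some other variable $z\neq y$, and $z$ loses an occurrence when $D$ is removed. I need to show this yields an extra $w_2$ not already counted, and I split according to $z$:
\begin{itemize}
\item If $z\notin\var(\literal\vee C)$, then $z$ has not been touched yet, and it drops by $\delta_{\deg(z)}=w_2$.
\item If $z=x$, then $x$ goes from degree $2$ to degree $1$, an additional $\delta_2=w_2$.
\item If $z\in\var(C)$ is a $3$-variable, the same applies: it goes from $2$ to $1$.
\item If $z\in\var(C)$ is a $2$-variable, this case is the key obstacle, because its weight was already exhausted at $w_1=0$. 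It cannot occur: $y$ and $z$ would then be two $2$-variables shared by the clauses $(\literal\vee C)$ and $D$, contradicting property~3 of \cref{lem:reduced-formula}.
\end{itemize}

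The main care point is this double-counting check, together with making sure $D$ actually has a second variable. Both are supplied by \cref{lem:reduced-formula}, after which the extra $w_2$ gives $(2+\abs{C})w_2$.
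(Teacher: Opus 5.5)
Your proof is correct and follows essentially the same route as the paper. The baseline $(1+\abs{C})w_2$ comes from each variable of the clause losing one occurrence. The extra $w_2$ comes from a variable $z\neq y$ in the other clause $D$ of a $2$-variable $y\in C$, and property~3 of \cref{lem:reduced-formula} rules out $z$ being a $2$-variable of $C$; the paper phrases this as $N_{ext}\neq\emptyset$. One minor imprecision: \cref{rrule:1var} also sets the other literals of $D$ to $0$, so $z$ is eliminated outright rather than merely losing an occurrence, which only strengthens your bound.
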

\begin{proof}
    Let $V_2$ and $V_3$ be the sets of $2$-variables and $3$-variables in $C$, respectively. Since $\formula$ is reduced and has maximum degree $3$, we have $\abs{V_2} + \abs{V_3} = \abs{C}$. Removing $(\literal \vee C)$ reduces the degree of its constituent variables. This provides a baseline measure decrease of $\delta_3 + \abs{V_2}\delta_2 + \abs{V_3}\delta_3 = (1 + \abs{C})w_2$.
    
    Assume $C$ contains at least one $2$-variable ($\abs{V_2} \ge 1$). The removal drops the degree of every variable in $V_2$ to exactly $1$, triggering the $1$-variable reduction rule (\cref{rrule:1var}). Let $\mathcal{E}$ be the set of clauses in $\formula\setminus \{(\ell \vee C)\}$ containing at least one variable from $V_2$. Since every $2$-variable in $V_2$ has exactly one other occurrence outside $(\literal \vee C)$, $\mathcal{E}$ is non-empty. 
    
    Let $N_{ext} = \var(\mathcal{E}) \setminus V_2$ be the set of external neighbors. If $N_{ext} = \emptyset$, then $\var(\mathcal{E}) \subseteq V_2$. For any clause $E \in \mathcal{E}$, all its variables must be $2$-variables that also appear in $(\literal \vee C)$. By \cref{lem:reduced-formula}, two distinct clauses share at most one $2$-variable. Thus, $\abs{\var(E)} \le 1$, meaning $E$ is a $1$-clause, which contradicts that $\formula$ is reduced. Therefore, $\abs{N_{ext}} \ge 1$.
    
    Let $z\in N_{ext}$. The application of \cref{rrule:1var} on $V_2$ also assigns values to variables in $N_{ext}$ (including $z$). Since $z \notin V_2$, its degree after removing $(\literal \vee C)$ remains at least $2$ (if $z \in V_3 \cup \{\var(\literal)\}$, it drops from $3$ to $2$). Thus, the elimination of $z$ contributes an additional measure drop of at least $w_2$, yielding a total decrease of at least $(2 + \abs{C})w_2$.
\end{proof}

\paragraph{Step 2: Dealing with $4^+$-clauses}
We first eliminate any $4^+$-clauses containing $3$-variables. If there exists a clause $(x \vee C)$ such that $x$ is a $3$-variable and $\abs{C} \ge 3$, we apply clause branching on $(x \vee C)$: (B1) remove clause $(x\vee C)$; (B2) remove clause $(x\vee C)$ and set $x=0$ and $C=0$.

\begin{lemma}\label{lem:L-step2-result}
    The branching factor generated by Step~2 is at most $\tau(7.5, 7.5) < 1.0969$.
\end{lemma}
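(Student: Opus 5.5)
We need to show that both branches of Step~2 reduce $\mu$ by at least $7.5=5w_2$. At this point Step~1 has been applied exhaustively, so every variable has degree at most $3$. The formula is reduced, so every variable is a $2$- or $3$-variable and the branched clause $(x\vee C)$ has $\abs{C}\ge 3$.

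\textbf{Branch (B1).} We remove the clause $(x\vee C)$ and apply \cref{lem:measure-drop-remove-clause} directly. Since $x$ is a $3$-variable, the measure drops by at least $(1+\abs{C})w_2\ge 4w_2=6$. If $C$ contains a $2$-variable, the second part of the lemma gives $(2+\abs{C})w_2\ge 5w_2=7.5$. The remaining case is that every variable of $C$ is a $3$-variable. Then $\abs{C}+1$ variables of degree $3$ each lose one occurrence, for a drop of $(\abs{C}+1)\delta_3$. This is at least $7.5$ as soon as $\abs{C}\ge 4$. The genuinely tight sub-case is therefore $\abs{C}=3$ with all four variables of degree $3$, where the baseline is only $6$.

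\textbf{The tight sub-case.} Here we must find one more unit of $w_2$ in B1. My plan is to argue that after deletion, $x$ and the three variables of $C$ are $2$-variables, and that the reduced formula cannot keep all of them untouched. I would try the following routes in order.
\begin{itemize}
\item Look for any further degree drop, or a shrinkage to a $1$-clause, caused by the deletion. Each such event contributes another $\delta_2=1.5$.
\item Look for a new domination, twin, or subsumption triggered by the deletion, which \cref{lem:reduced-formula} rules out beforehand but not afterwards.
\end{itemize}
If neither route works in general, I fall back on the balancing argument. By \cref{lem:vec-dominate}, it suffices to show that the two drops sum to at least $15$ and each is at least $6$, since (B2) will turn out to be large.

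\textbf{Branch (B2).} We set $x=0$ and $C=0$, so all $1+\abs{C}\ge 4$ variables are assigned. This removes at least $w_3+\abs{C}w_2\ge 3+4.5=7.5$ already, since each variable has weight at least $w_2$. For the balancing I want more than this. The other two clauses containing $x$ lose their literal $x$, or are satisfied if they contain $\nl{x}$; clauses containing variables of $C$ are affected in the same way. These effects produce further degree drops in neighboring variables, each worth at least $\delta_3=1.5$, unless the neighbors are themselves among the assigned variables. If some clause becomes a $1$-clause, the forced assignment adds more. With a total drop of $7.5+2\cdot 1.5=10.5$ in B2 and $6$ in B1, the sum is $16.5$. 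I would still verify that $\tau(6,10.5)\le\tau(7.5,7.5)$; this should hold, since $\tau(7.5,7.5)=2^{1/7.5}\approx1.0968$.

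\textbf{Main obstacle.} The hard part is B2: we need at least two neighbor occurrences outside the assigned set. This requires excluding the case where the other clauses of $x$ consist only of variables from $C$. That situation would make a small subformula attached by at most one variable, so Property~\ref{prop:reduced-formula-subformula} of \cref{lem:reduced-formula} forbids it. It could also produce dominations. I expect to carry out this small case analysis by cases on the polarity of $x$ and on which clauses share variables.
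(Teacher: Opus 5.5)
\textbf{There is a genuine gap, in branch (B2) of your tight sub-case.} Your treatment of every other case is correct and matches the paper's split on whether $C$ contains a $2$-variable. The problem is (B2) when $\abs{C}=3$ and $x$ and all variables of $C$ have degree $3$. There you charge each variable of $C$ only $w_2$, which gives $w_3+3w_2=7.5$. You then try to recover the shortfall from neighbors and leave this as an unresolved \emph{main obstacle}. The missing observation is that in this sub-case all four assigned variables are $3$-variables. So (B2) removes $4w_3=12=8w_2$ on its own. Together with (B1) $\ge 4w_2=6$, this gives $\tau(6,12)<1.0836$, and no further argument is needed. This is exactly how the paper closes the case, via $\Delta_2\ge w_3+c_2w_2+c_3w_3=(2+c_2+2c_3)w_2$, where $c_2,c_3$ count the $2$- and $3$-variables of $C$.

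\textbf{The neighbor-based route you propose would not go through as planned.} A clause that merely loses a falsified literal does not lower the degree of any other variable. Only clauses that are satisfied and removed, or that shrink to $1$-clauses and force an assignment, produce extra drops. Suppose all variables are positive $3$-variables in $4$-clauses and no two clauses share two variables. Then (B2) satisfies no clause and creates no $1$-clause, so nothing forces any neighbor to lose an occurrence. No case analysis via Property~\ref{prop:reduced-formula-subformula} of \cref{lem:reduced-formula} or via domination can supply the two extra occurrences you want.

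\textbf{Your fallback criterion is also insufficient.} \Cref{lem:vec-dominate} with sum at least $15$ and minimum at least $6$ only yields $\tau(9,6)\approx 1.0983$. This exceeds $\tau(7.5,7.5)\approx 1.0968$. With (B1) equal to $6$, you need (B2) to be at least about $9.24$. Both issues disappear once the assigned variables are charged their true weight $w_3$.
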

\begin{proof}
    Let $\formula_1 = \formula \setminus \{(x \vee C)\}$ and $\formula_2 = \formula[x=0, C=0]\setminus \{(x \vee C)\}$. Let $\Delta_i = \mu(\formula) - \mu(R(\formula_i))$ for $i \in \{1, 2\}$. Let $c_2$ and $c_3$ be the number of $2$-variables and $3$-variables in $C$, respectively. Note that $c_2+c_3=\abs{C}$.
    By \cref{lem:measure-drop-remove-clause}, we have $\Delta_1 \ge (1 + \abs{C})w_2 \ge (1 + 3)w_2 = 4w_2$. Furthermore, if $C$ contains at least one $2$-variable (i.e., $c_2\ge 1$), then $\Delta_1 \ge (2 + \abs{C})w_2 \ge (2 + 3)w_2 = 5w_2$.
    In $\formula_2$, $x$ and all variables in $C$ are assigned and eliminated. This gives $\Delta_2 \ge w_3 + c_2w_2 + c_3w_3 = (2+c_2+2c_3)w_2$. 
    
    If $c_2=0$, then $c_3=\abs{C}-c_2\ge 3$. We have $\Delta_1\ge 4w_2$ and $\Delta_2\ge (2+0+6)w_2=8w_2$. This gives $\tau(4w_2, 8w_2)=\tau(6, 12) < 1.0836$.
    
    If $c_2\ge 1$, then $\Delta_1\ge 5w_2$. Since $c_2+2c_3 = \abs{C}+c_3 \ge \abs{C}$, we obtain $\Delta_2 \ge (2+\abs{C})w_2 \ge (2+3)w_2 = 5w_2$. This gives $\tau(5w_2, 5w_2)=\tau(7.5, 7.5) < 1.0969$. 
\end{proof}

\paragraph{Step 3: Dealing with mixed variables}
We handle mixed variables, i.e., variables with both positive and negative literals. Without loss of generality, we assume that for each variable, the number of its positive literals is not less than its negative literals; otherwise, we simply flip this variable to achieve so. Let $x$ be a mixed $3$-variable and the clauses containing it are $(x\vee C_1)$, $(x\vee C_2)$, and $(\nl{x}\vee D)$. Since there are no $1$-clauses and $4^+$-clauses, $\abs{C_1}, \abs{C_2}, \abs{D} \in \{1,2\}$.

\subparagraph*{Step 3.1: Apply clause branching on $(\nl{x}\vee D)$ if $\abs{D} = 2$.}
If there is a $3$-variable $x$ such that $|D| = 2$, we apply clause branching on $(\nl{x}\vee D)$: (B1) remove clause $(\nl{x}\vee D)$; (B2) remove clause $(\nl{x}\vee D)$ and set $x=1$ and $D=0$.

\begin{lemma}\label{lem:L-step3.1-result}
    The branching factor generated by Step~3.1 is at most $\tau(6,9) < 1.0983$.
\end{lemma}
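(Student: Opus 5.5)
Let $\formula_1=\formula\setminus\{(\nl{x}\vee D)\}$ and $\formula_2=\formula[x=1,D=0]\setminus\{(\nl{x}\vee D)\}$, and write $\Delta_i=\mu(\formula)-\mu(R(\formula_i))$. By Lemma~\ref{lem:vec-dominate} it suffices to prove $\Delta_1\ge 4w_2=6$ and $\Delta_2\ge 6w_2=9$, which gives $\tau(6,9)$ directly. We use that after Step~1 every variable has degree at most $3$, and after Step~2 every clause containing a $3$-variable has length at most $3$. Hence $(\nl{x}\vee D)$ is a $3$-clause, and each $C_i$ has size $1$ or $2$.

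\textbf{Branch (B1).} Apply Lemma~\ref{lem:measure-drop-remove-clause} to $(\nl{x}\vee D)$ with $\abs{D}=2$; this gives $\Delta_1\ge 3w_2=4.5$. We need one more $w_2$. Removing the clause turns $x$ into a $(2,0)$-variable of degree $2$. If $D$ contains a $2$-variable, the stronger part of Lemma~\ref{lem:measure-drop-remove-clause} already gives $4w_2$. Otherwise both variables of $D$ are $3$-variables, and we examine the two clauses $(x\vee C_1)$, $(x\vee C_2)$, where $x$ now lives as a pure literal. Since $x$ now appears only positively, the domination rule (R-Rule~\ref{rrule:dominate}) applies whenever a literal of $C_1$ or $C_2$ recurs, and the twin rule (R-Rule~\ref{rrule:twins}) applies if some variable accompanies $x$ in both clauses with the same sign. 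If neither triggers, $x$ is a pure $2$-variable. I would argue directly that this leaves a strictly positive additional drop. The weights are $w_3-w_2=1.5$ per variable of $D$ and $w_3-w_2$ for $x$, so the baseline $3\cdot1.5$ is exactly $(1+\abs{D})w_2$.

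The extra $w_2$ will be the main obstacle. My first attempt is to use the reduced-formula property that two clauses share at most one $2$-variable, together with the fact that a pure $2$-variable $x$ appearing in $(x\vee C_1)$ and $(x\vee C_2)$ would be handled by a later step. Failing that, I would refine the count: when $\abs{C_1}=1$, the clause $(x\vee C_1)$ is a $2$-clause, and its other variable, if a $2$-variable, is covered by the stronger bound. If instead it is a $3$-variable, falling back on Step~3.1's symmetry with the other variables of $D$ should still deliver the missing $w_2$.

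\textbf{Branch (B2).} Set $x=1$, which satisfies $(x\vee C_1)$ and $(x\vee C_2)$, and set both literals of $D$ to $0$. Removing $x$ gives $w_3=2w_2$. Each variable $y\in\var(D)$ is eliminated, contributing $w_{\deg(y)}\ge w_2$, with $w_3=2w_2$ when $\deg(y)=3$; this already yields at least $4w_2$. The remaining $2w_2$ comes from the variables of $C_1\cup C_2$: each loses at least one occurrence, worth at least $\delta\ge w_2$ since $\delta_2=\delta_3=w_2$. Here $C_1\cup C_2$ contains at least two distinct variables other than those of $D$, because if $\var(C_1)=\var(C_2)=\{y\}$ with the same literal then $y$ dominates or is a twin of $x$, contradicting reducedness. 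Any overlap between $\var(C_i)$ and $\var(D)$ is already counted, so I would handle it separately. If $y\in\var(D)\cap\var(C_1)$, then $y$ is assigned anyway. In that case the shrinking of $(x\vee C_1)$ is irrelevant, but then the other variable of $D$, or a $2$-clause cascade via R-Rule~\ref{rrule:1cls}, supplies the missing weight. Case-checking these overlaps, using Lemma~\ref{lem:reduced-formula} to exclude degenerate configurations, should give $\Delta_2\ge 6w_2$.
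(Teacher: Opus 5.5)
Your argument fails at the step you flag as the main obstacle: in (B1) the bound $\Delta_1\ge 4w_2$ is false when both variables of $D$ are $3$-variables. Removing $(\nl{x}\vee D)$ then only lowers $x$ and the two variables of $D$ from degree $3$ to $2$. That is a drop of exactly $3w_2$, and in general no further rule fires:
\begin{itemize}
\item there is no pure-literal rule for parity;
\item domination or twins for the now-pure $2$-variable $x$ would need a literal common to $C_1$ and $C_2$, and nothing forces one (take $C_1,C_2$ built from four distinct fresh $3$-variables);
\item neither the ``at most one common $2$-variable'' property nor what a later step does adds measure to this branch.
\end{itemize}
So the uniform pair $\Delta_1\ge 6$, $\Delta_2\ge 9$ cannot be established. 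The missing idea is the trade-off the paper uses: split on whether $D$ contains a $2$-variable. If it does, the second part of \cref{lem:measure-drop-remove-clause} gives $\Delta_1\ge 4w_2$, together with $\Delta_2\ge 6w_2$. If it does not, accept $\Delta_1\ge 3w_2$ and use that $\Delta_2$ is then large. Here $x$ and both variables of $D$ are eliminated $3$-variables, contributing $3w_3=6w_2$, and the occurrences lost from $C_1,C_2$ are credited with another $2w_2$. This gives $\tau(4.5,12)<1.0952$.

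Your worry about overlaps between $\var(C_1\cup C_2)$ and $\var(D)$ in (B2) is justified. However, ``the other variable of $D$ or a $2$-clause cascade supplies the missing weight'' is not an argument. The paper's proof has the same blind spot: it credits $2w_2$ for the literals of $C_1,C_2$ even when they belong to variables of $D$ already counted as eliminated.

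This matters. Take clauses $(x\vee\nl{y})$, $(x\vee\nl{z})$, $(\nl{x}\vee y\vee z)$, where $y,z$ are $3$-variables whose third occurrences are positive literals in two different $3$-clauses, and the rest of the formula is generic. This formula is reduced, Step~3.1 applies to $x$, and no rule fires after either branch. So $\Delta_1=3w_2$ and $\Delta_2=3w_3=6w_2$ exactly; the paper's claimed $8w_2$ for this case is off by $2w_2$. The vector is $(4.5,9)$ with $\tau(4.5,9)\approx 1.113$, which exceeds $1.0983$ and even the paper's overall bottleneck $1.1052$.

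Hence in this configuration no accounting yields $\tau(6,9)$. The configuration must be excluded or branched on differently, and a correct proof (yours or the paper's) has to treat the case $\var(C_1\cup C_2)\cap\var(D)\neq\emptyset$ explicitly rather than by a double count.
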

\begin{proof}
    Let $\formula_1=\formula\setminus \{(\nl{x} \vee D)\}$ and $\formula_2=\formula[x=1, D=0]\setminus\{(\nl{x} \vee D)\}$. Let $\Delta_{i}=\mu(\formula)-\mu(R(\formula_i))$ for $i\in \{1,2\}$. Let $c_2$ and $c_3$ be the number of $2$-variables and $3$-variables in $D$, respectively. Note that $c_2+c_3=\abs{D}=2$.
    
    In $\formula_1$, the clause $(\nl{x} \vee D)$ is removed. By \cref{lem:measure-drop-remove-clause}, the baseline decrease is $\Delta_1 \ge (1 + \abs{D})w_2 = 3w_2$. Furthermore, if $c_2 \ge 1$, the lemma guarantees $\Delta_1 \ge (2 + \abs{D})w_2 = 4w_2$.
    
    In $\formula_2$, $x$ and variables in $D$ are assigned. This gives a baseline measure decrease of $w_3 + c_2w_2 + c_3w_3 = (2 + c_2 + 2c_3)w_2$. Crucially, the clauses $(x \vee C_1)$ and $(x \vee C_2)$ are satisfied and removed. Since the formula is reduced, $\abs{C_1}, \abs{C_2} \ge 1$, meaning at least two literal occurrences are lost. If these literals belong to distinct variables, the measure drops by $2w_2$. If they belong to the same $3$-variable, its degree drops from $3$ to $1$, yielding $\delta_3 + \delta_2 = 2w_2$. If they belong to the same $2$-variable, this variable is dominated by $x$ (\cref{rrule:dominate}), contradicting that $\formula$ is reduced. Thus, this cascading part strictly contributes at least $2w_2$. 
    We obtain $\Delta_2 \ge (2 + c_2 + 2c_3)w_2 + 2w_2 = (4 + c_2 + 2c_3)w_2$.
    
    If $c_2 = 0$, then $c_3 = 2$. We have $\Delta_1 \ge 3w_2$ and $\Delta_2 \ge (4 + 0 + 4)w_2 = 8w_2$. This gives the branching factor $\tau(3w_2, 8w_2)=\tau(4.5, 12) < 1.0952$.
    If $c_2 \ge 1$, we have $\Delta_1 \ge 4w_2$. Since $c_2+c_3=2$, we have $c_2+2c_3 = 2+c_3 \ge 2$, yielding $\Delta_2 \ge 6w_2$. This yields the branching factor $\tau(4w_2, 6w_2)=\tau(6, 9) < 1.0983$.
\end{proof}

\subparagraph*{Step 3.2: Apply simple branching on $x$ if $\abs{D} = 1$.}
Let $D = \{y\}$. The clauses containing $x$ are $(x \vee C_1)$, $(x \vee C_2)$, and $(\nl{x} \vee y)$. 
We apply simple branching on $x$: (B1) $x=0$; (B2) $x=1$.

Before calculating the measure decrease, we establish a strict structural property. 

\begin{lemma}\label{lem:L-step3.2-extra}
    In the reduced formula $\formula$, if $D=\{y\}$, then $y \notin \var(C_1)$ or $y \notin \var(C_2)$. 
    Furthermore, for any $i \in \{1, 2\}$, if $\abs{C_i} = 1$, then $y \notin \var(C_i)$.
\end{lemma}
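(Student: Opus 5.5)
Both claims will be proved by contradiction: if either one failed, some reduction rule would still apply, contradicting that $\formula$ is reduced (\cref{lem:reduced-formula}). Throughout we use the setting of Step~3. The formula has maximum degree $3$. All clauses are $2^+$-clauses. The clauses containing $x$ are exactly $(x\vee C_1)$, $(x\vee C_2)$ and $(\nl{x}\vee y)$, with $\abs{C_1},\abs{C_2}\in\{1,2\}$.

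We begin with the second statement, since the first then follows easily. Suppose $\abs{C_i}=1$ and $y\in\var(C_i)$. Then $C_i=(y)$ or $C_i=(\nl{y})$.
\begin{itemize}
\item If $C_i=(y)$, the formula contains the two clauses $(x\vee y)$ and $(\nl{x}\vee y)$. \cref{rrule:subsumption-complementary-ltr}, applied with $\literal=x$ and $C=(y)$, removes $\nl{x}$ from the second clause and leaves the $1$-clause $(y)$. So the formula was not reduced.
\item If $C_i=(\nl{y})$, the formula contains both $(x\vee\nl{y})$ and $(\nl{x}\vee y)$. After renaming $y$ to $\nl{y'}$ these are $(x\vee y')$ and $(\nl{x}\vee\nl{y'})$, so \cref{rrule:two-2cls-complementary} applies. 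Alternatively, Property~4 of \cref{lem:reduced-formula} says two $2$-clauses share at most one variable, and here they share both $x$ and $y$.
\end{itemize}
In both cases we reach a contradiction.

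For the first statement, suppose $y\in\var(C_1)$ and $y\in\var(C_2)$. Then $y$ occurs in $(\nl{x}\vee y)$, $(x\vee C_1)$ and $(x\vee C_2)$. Since the maximum degree is $3$, these are all of its occurrences, so $y$ is a $3$-variable. By the second statement, $\abs{C_1}=\abs{C_2}=2$, so both are $3$-clauses. We now split on the signs of $y$ in $C_1$ and $C_2$.
\begin{itemize}
\item If $y\in C_j$ as a positive literal for some $j$, then $(\nl{x}\vee y)$ and $(x\vee y\vee z)$ are present. \cref{rrule:subsumption-complementary-ltr}, applied with $\literal=\nl{x}$ and $C=(y)$, deletes $x$ from the second clause. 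Hence the formula was not reduced.
\item Otherwise $\nl{y}\in C_1$ and $\nl{y}\in C_2$. In this case $x$ and $y$ are both $3$-variables. Every clause containing $x$ also contains $y$, with polarity pattern $x\leftrightarrow\nl{y}$ and $\nl{x}\leftrightarrow y$. So $x$ and $\nl{y}$ are twins, and \cref{rrule:twins} applies. This is again a contradiction.
\end{itemize}

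The step most likely to need care is this last sign case analysis. We must be sure that each sign pattern is covered by an existing rule, and that the twin definition matches exactly, including the requirement that the occurrence sets coincide. The degree bound of $3$ ensures this: $y$ has no occurrences outside the three clauses containing $x$, so the twin condition holds in both directions.
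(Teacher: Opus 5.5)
Your proof is correct and is essentially the paper's argument: the same case split, with \cref{rrule:subsumption-complementary-ltr} handling a positive occurrence of $y$ in some $C_j$ (and the case $C_i=(y)$), \cref{rrule:twins} handling $\nl{y}$ in both $C_1$ and $C_2$, and \cref{rrule:two-2cls-complementary} handling $C_i=(\nl{y})$. The only difference is that you prove the second claim first and use it to get $\abs{C_1}=\abs{C_2}=2$. The paper instead writes the clause as $(x\vee y\vee C_1')$ with $C_1'$ possibly empty, so your reordering is harmless and not actually needed.
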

\begin{proof}
    Suppose for contradiction that $y \in \var(C_1)$ and $y \in \var(C_2)$. 
    If $y$ appears as a positive literal in $C_1$, the formula contains $(x \vee y \vee C_1')$ and $(\nl{x} \vee y)$. By \cref{rrule:subsumption-complementary-ltr}, literal $x$ can be resolved and removed from the first clause, contradicting that $\formula$ is reduced. Thus, $y$ must appear as $\nl{y}$ in both $C_1$ and $C_2$.
    Since its maximum degree is $3$, these three occurrences ($y$ in $D$, $\nl{y}$ in $C_1$ and $C_2$) are all its occurrences. Notice that $x$ and $\nl{y}$ always appear together with the exact same polarity. Thus, $x$ and $\nl{y}$ are twins. By \cref{rrule:twins}, $y$ can be removed, again contradicting that $\formula$ is reduced. Thus, $y \notin \var(C_1)$ or $y \notin \var(C_2)$.
    
    Now suppose $\abs{C_i} = 1$, say $C_1 = \{y_1\}$, and $y \in \var(C_1)$. This means $y_1 = y$ or $y_1 = \nl{y}$.
    {If $y_1 = y$, the formula contains $(x \vee y)$ and $(\nl{x} \vee y)$, where \cref{rrule:subsumption-complementary-ltr} can be applied, contradicting that $\formula$ is reduced.
    If $y_1 = \nl{y}$, the formula contains $(x \vee \nl{y})$ and $(\nl{x} \vee y)$, which triggers \cref{rrule:two-2cls-complementary}.} Thus, $y \notin \var(C_i)$ if $\abs{C_i}=1$.
\end{proof}

\begin{lemma}\label{lem:L-step3.2-result}
    The branching factor generated by Step~3.2 is at most $\tau(10.5, 4.5) < 1.1031$.
\end{lemma}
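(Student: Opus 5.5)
The plan is to show that the two branches together lose at least $15 = 10w_2$ in measure, and that each branch loses at least $4.5 = 3w_2$. \cref{lem:vec-dominate} with $p=15$ and $q=4.5$ then gives the factor $\tau(10.5,4.5)$. We recall the setting: the maximum degree is $3$, and after Step~2 no clause containing a $3$-variable has length $4$ or more. The clauses containing $x$ are $(x\vee C_1)$, $(x\vee C_2)$ and $(\nl{x}\vee y)$ with $\abs{C_i}\in\{1,2\}$. As in Step~1, I will count the drop variable by variable so that no occurrence is charged twice, and I will use $\delta_2=\delta_3=w_2$ throughout. This lets me ignore whether a lost occurrence takes a variable from degree $3$ to $2$ or from $2$ to $1$.

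\textbf{Baseline for $x=0$.} In branch (B1) $x=0$, $x$ is eliminated, contributing $w_3=2w_2$. The clause $(\nl{x}\vee y)$ is satisfied, so $y$ loses an occurrence, contributing at least $w_2$. This already gives $\Delta_0\ge 3w_2=4.5$. Extra gains in this branch come from two sources. If some $\abs{C_i}=1$, then $(x\vee C_i)$ shrinks to a $1$-clause and \cref{rrule:1cls} eliminates its variable. By \cref{lem:L-step3.2-extra} that variable is not $y$, so the gain is at least $w_2$ more. If $y$ is a $2$-variable, it becomes a $1$-variable, and \cref{rrule:1var} assigns the variables in its other clause. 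By \cref{lem:reduced-formula}(5) this should give at least one further eliminated variable, in the spirit of \cref{lem:measure-drop-remove-clause}.

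\textbf{Baseline for $x=1$.} In branch (B2) $x=1$, $x$ is eliminated ($2w_2$). The clause $(\nl{x}\vee y)$ becomes the unit $(y)$, so $y$ is assigned, contributing $w_{\deg(y)}$: this is $w_2$ if $y$ is a $2$-variable and $2w_2$ if $y$ is a $3$-variable. Both $(x\vee C_1)$ and $(x\vee C_2)$ are satisfied, so every literal occurrence in $C_1\cup C_2$ is lost. As in the proof of \cref{lem:L-step3.1-result}, a $2$-variable occurring in both $C_1$ and $C_2$ would be dominated by $x$, so each occurrence is worth at least $w_2$. The exception is an occurrence of $y$ itself, which is already fully charged. \cref{lem:L-step3.2-extra} guarantees that $y$ is absent from at least one $C_i$, and absent from any $C_i$ of length $1$. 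Moreover, assigning $y$ satisfies or shrinks its other clauses, which can cascade. If $y$ is a $2$-variable, its other clause either becomes shorter, possibly a $1$-clause when it is a $2$-clause, or is satisfied and leaves $1$-variables behind.

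\textbf{Case analysis.} I will split on $(\abs{C_1},\abs{C_2})\in\{(1,1),(1,2),(2,2)\}$ and on $\deg(y)\in\{2,3\}$. In each case I check that the sum of the two drops is at least $10w_2$.
\begin{itemize}
\item When both $\abs{C_i}=2$, branch $x=1$ loses at least three occurrences in $C_1\cup C_2$ not belonging to $y$. This gives $\Delta_1\ge 2w_2+w_{\deg(y)}+3w_2$. Combined with $\Delta_0\ge 3w_2$, the sum is at least $9w_2$ before cascades. I expect the last $w_2$ to come from $y$: either $y$ has degree $3$ (an extra $w_2$ in the $x=1$ branch), or $y$ has degree $2$ and the \cref{rrule:1var} cascade in the $x=0$ branch supplies it.
\item When some $\abs{C_i}=1$, the unit propagation in branch $x=0$ supplies the extra weight instead.
\end{itemize}

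\textbf{Main obstacle.} The hard part is the bookkeeping in the tight subcases, where the variables of $C_1$, $C_2$ and $y$'s other clause overlap. Examples are a $3$-variable $z$ appearing in both $C_1$ and $C_2$, or $y$ sharing a clause with $C_i$'s variables. Here I would invoke the reducedness properties (domination, twins, \cref{rrule:subsumption-complementary-ltr}, and Property~5 of \cref{lem:reduced-formula}) to exclude degenerate overlaps. I would then argue, as in \cref{lem:measure-drop-remove-clause}, that some external variable must be eliminated, which provides the missing $w_2$.
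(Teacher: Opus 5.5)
Your plan follows the paper's proof. You charge the measure drop variable by variable with $\delta_2=\delta_3=w_2$, use \cref{lem:L-step3.2-extra} to keep $y$ out of one $C_i$ and out of any length-one $C_i$, show $\Delta_0+\Delta_1\ge 10w_2$ and $\min(\Delta_0,\Delta_1)\ge 3w_2$, and conclude with \cref{lem:vec-dominate}. The paper organizes this differently: it treats $\abs{C_1}=\abs{C_2}=1$ separately (getting $(7.5,7.5)$), and otherwise bounds the contributions of $(x,C_1)$ and of $(y,C_2)$ separately. Its Subcase~2.2 is your $1$-variable cascade on $y$.

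One step in your case bullets is wrong as stated. Take $(\abs{C_1},\abs{C_2})=(1,2)$ with $y$ a $2$-variable occurring in $C_2$. Branch $x=1$ then loses only two non-$y$ occurrences, so $\Delta_1\ge 5w_2$. Branch $x=0$, with the unit clause from $C_1$, gives $\Delta_0\ge 4w_2$. The sum is $9w_2$, so unit propagation does not supply the missing weight: it only offsets the shorter $C_1$.

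The last $w_2$ comes from the same cascade you invoke in the $(2,2)$ case. First, $y$ occurs as $\nl{y}$ in $C_2$, since otherwise \cref{rrule:subsumption-complementary-ltr} applies. So under $x=0$ that clause becomes $(\nl{y}\vee\literal_z)$, and \cref{rrule:1var} eliminates $z$. If $z$ is also the variable of $C_1$, it must be a $3$-variable, because a $2$-variable there would be dominated by $x$; then $w_3=2w_2$ covers both gains.

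In general, $\Delta_0+\Delta_1\ge 9w_2+w_{\deg(y)}$, minus $w_2$ when $y\in\var(C_1\cup C_2)$, plus this cascade when $y$ is a $2$-variable in $C_2$. So every case closes with ingredients you already listed. You need no appeal to Property~5 of \cref{lem:reduced-formula} and no external variables. Overlaps such as a $3$-variable in both $C_1$ and $C_2$ are already absorbed by charging $w_2$ per lost occurrence.
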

\begin{proof}
    Let $\formula_1 = \formula[x=1]$ and $\formula_0 = \formula[x=0]$. Let $\Delta_i = \mu(\formula) - \mu(R(\formula_i))$ for $i \in \{0, 1\}$. When $x=1$, the clauses $(x \vee C_1)$ and $(x \vee C_2)$ are satisfied, resulting in the removal of $C_1$ and $C_2$. When $x=0$, the clause $(\nl{x} \vee y)$ is satisfied and removed. Since $x$ is a $3$-variable, assigning a value to $x$ contributes exactly $w_3 = 2w_2$ to the measure decrease in both branches. Next, we analyze the measure decrease contributed by clauses $C_1, C_2$ and variable $y$. We consider two cases based on the sizes of $C_1$ and $C_2$.

    \textbf{Case 1: $\abs{C_1} = 1$ and $\abs{C_2} = 1$.} 
    Let $C_1=\{y_1\}$ and $C_2=\{y_2\}$. By \cref{lem:L-step3.2-extra}, $y_1 \neq y$ and $y_2 \neq y$. When $x=1$, the removal of $C_1$ and $C_2$ causes both $y_1$ and $y_2$ to lose occurrences, contributing at least $2w_2$ to the measure decrease. The clause $(\nl{x} \vee y)$ shrinks to $(y)$, forcing $y=1$ via \cref{rrule:1cls} and contributing at least $w_2$. Summing the contributions from $x, y_1, y_2$, and $y$, we obtain $\Delta_1 \ge w_3 + 2w_2 + w_2 = 5w_2$. Similarly, when $x=0$, {variable $y$ decreases its degree and thus contributes at least $w_2$ to $\Delta_0$}; both variables $y_1$ and $y_2$ are assigned via \cref{rrule:1cls}. Thus, $\Delta_0 \ge w_3 + w_2 + 2w_2 = 5w_2$. This yields a branching factor of $\tau(5w_2, 5w_2) = \tau(7.5, 7.5) < 1.0968$.

    \textbf{Case 2: At least one of $C_1, C_2$ has size $2$.} 
    We assume w.l.o.g.\ that $\abs{C_2} = 2$. By \cref{lem:L-step3.2-extra}, $y$ can belong to at most one of the sub-clauses $C_1, C_2$; furthermore, if $\abs{C_1}=1$, $y \notin \var(C_1)$. Thus, we can assume w.l.o.g. that $y \notin \var(C_1)$. Because $y \notin \var(C_1)$, the measure decreases jointly contributed by variable $y$ and clause $C_2$ are independent from that by variable $x$ and clause $C_1$. 
    For $i\in \{0,1\}$, let $\Delta_{i}'$ be the contribution by variable $x$ and clause $C_1$ to $\Delta_i$, and let $\Delta_{i}''$ be the contribution by variable $y$ and clause $C_2$ to $\Delta_i$. We have $\Delta_{i} \ge \Delta_{i}' + \Delta_{i}''$.
    
    We first analyze $\Delta_{1}'$ and $\Delta_{0}'$. When $x=1$, the removal of $C_1$ causes its variables to lose occurrences, contributing at least $\abs{C_1}w_2$ to $\Delta_{1}'$. When $x=0$, the clause $(x \vee C_1)$ shrinks to $C_1$. If $\abs{C_1} = 1$, \cref{rrule:1cls} can be applied, contributing at least $w_2$ to $\Delta_{0}'$. Thus, $C_1$ always contributes at least $(2-\abs{C_1})w_2$ when $x=0$. Including the contribution $w_3 = 2w_2$ from $x$, we have $\Delta_1' \ge 2w_2 + \abs{C_1}w_2$ and $\Delta_0' \ge 2w_2 + (2-\abs{C_1})w_2$.

    Next, we analyze $\Delta_{1}''$ and $\Delta_{0}''$ under two subcases. Let $z \in \var(C_2) \setminus \{y\}$.
    
    \textbf{Subcase 2.1: $y$ is a $3$-variable, or $y \notin \var(C_2)$.} When $x=1$, $y$ is forced to $1$. If $y$ is a $3$-variable or $y \notin \var(C_2)$, the combined effect of assigning $y$ and removing $C_2$ yields $\Delta_1'' \ge 3w_2$. When $x=0$, $y$ drops its degree due to the removal of $(\nl{x} \vee y)$, yielding $\Delta_0'' \ge w_2$.

    \textbf{Subcase 2.2: $y$ is a $2$-variable and $y \in \var(C_2)$.} When $x=1$, $y$ is eliminated and $z$ loses an occurrence due to the removal of $C_2$, yielding $\Delta_1'' \ge 2w_2$. When $x=0$, the $2$-variable $y$ loses an occurrence and becomes a $1$-variable. Its only remaining occurrence is $\nl{y}$ in $C_2$. By \cref{rrule:1var}, $C_2$ is falsified, and $z$ will also be assigned, yielding $\Delta_0'' \ge 2w_2$.

    Now we evaluate the total measure decrease $ \Delta_1 + \Delta_0$. The sum of the contributions from $x$ and $C_1$ is $\Delta_1' + \Delta_0' \ge (2+\abs{C_1})w_2 + (4-\abs{C_1})w_2 = 6w_2$. In both Subcase 2.1 and Subcase 2.2, the joint contributions from $y$ and $C_2$ satisfy $\Delta_1'' + \Delta_0'' \ge 4w_2$. By $\Delta_i \ge \Delta_i' + \Delta_i''$ for $i\in \{0, 1\}$, we have $\Delta_1 + \Delta_0 \ge 6w_2+4w_2= 10w_2$. 
    
    Furthermore, observing the individual bounds, $\min(\Delta_1, \Delta_0) \ge 3w_2$ holds in every case (which is tight when $\abs{C_1}=2$ and we are in Subcase 2.1). Together with $\Delta_1 + \Delta_0 \ge 10w_2$, the branching factor is at most $\tau(10w_2-3w_2, 3w_2)=\tau(7w_2, 3w_2) = \tau(10.5, 4.5) < 1.1031$.
\end{proof}

\paragraph{Step 4: Dealing with 2-clauses}
After Step 3, the formula is a positive formula where all {$3$-variables} have only positive occurrences. In the subsequent steps, we assume that a $3$-variable $x$ is contained in clauses $(x \vee C_1), (x \vee C_2)$, and $(x \vee C_3)$.

In this step, if at least one clause containing $x$ is a $2$-clause (i.e., $\min(\abs{C_1}, \abs{C_2}, \abs{C_3}) = 1$), we apply simple branching on $x$: (B1) assign $x=1$; (B2) assign $x=0$.

\begin{lemma}\label{lem:L-step4}
    The branching factor generated by Step~4 is at most $\tau(10.5, 4.5) < 1.1031$.
\end{lemma}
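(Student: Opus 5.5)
The goal is to show that in both branches $\Delta_b \ge 3w_2$, and that $\Delta_1+\Delta_0 \ge 10w_2$. Here $\Delta_b=\mu(\formula)-\mu(R(\formula[x=b]))$. \cref{lem:vec-dominate} then bounds the branching factor by $\tau(7w_2,3w_2)=\tau(10.5,4.5)$.

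\emph{Setting.} The formula is reduced and has maximum degree $3$. By Step~2 it contains no $4^+$-clause with a $3$-variable, so $\abs{C_i}\in\{1,2\}$. By Step~3 the $3$-variable $x$ occurs only positively. W.l.o.g.\ $C_1=\{y\}$. Write $s=\abs{C_1}+\abs{C_2}+\abs{C_3}$ and let $t\ge 1$ be the number of indices with $\abs{C_i}=1$, so $s=6-t$. In both branches $x$ is eliminated, which contributes $w_3=2w_2$.

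\textbf{Branch $x=1$.} All three clauses $(x\vee C_i)$ are removed, so every literal occurrence in $C_1,C_2,C_3$ disappears. I charge each lost occurrence at least $w_2$.
\begin{itemize}
\item A $3$-variable losing $k$ of these occurrences drops by at least $k\delta_3=kw_2$; losing two already costs $\delta_3+\delta_2=2w_2$.
\item A $2$-variable losing one occurrence becomes a $1$-variable and is eliminated, which costs $w_2$.
\item A $2$-variable cannot occur in two of the $C_i$: it would then be dominated by $x$, so \cref{rrule:dominate} would apply.
\end{itemize}
Hence $\Delta_1\ge (2+s)w_2=(8-t)w_2$.

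\textbf{Branch $x=0$.} Each clause $(x\vee C_i)$ with $\abs{C_i}=1$ becomes a $1$-clause, so its variable is assigned by \cref{rrule:1cls}, contributing at least $w_2$. These unit variables are pairwise distinct. Indeed, $(x\vee y)$ and $(x\vee \nl{y})$ together would trigger \cref{rrule:subsumption-complementary-ltr}, and identical clauses do not occur in a set of clauses. Hence $\Delta_0\ge (2+t)w_2$.

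\emph{Combining the branches.} Summing gives $\Delta_1+\Delta_0\ge 10w_2$ for every $t$, and $\min(\Delta_1,\Delta_0)\ge \min(8-t,2+t)\,w_2\ge 3w_2$ for $t\in\{1,2,3\}$. The extreme case is $t=1$, giving exactly the vector $(7w_2,3w_2)$. As a check, $t=2$ gives $(6w_2,4w_2)$ and $t=3$ gives $(5w_2,5w_2)$, both dominated by the $t=1$ case.

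\emph{Main obstacle.} The delicate step is the bookkeeping in branch $x=1$: the per-occurrence charges must not double count, and $y$ (or another unit variable) may reappear inside some $C_j$ with $\abs{C_j}=2$.
\begin{itemize}
\item If $y$ is a $3$-variable, its repeated occurrences each still contribute $\delta_3$ or $\delta_2$, both equal to $w_2$.
\item If $y$ is a $2$-variable, the reappearance is impossible by domination, as noted above.
\end{itemize}
The branch $x=0$ needs only the $t$ forced assignments, so no interaction issue arises there. I would take care to verify both points explicitly before concluding.
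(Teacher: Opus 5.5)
Your proof is correct and follows the paper's argument essentially step by step. You charge each lost occurrence in $C_1\cup C_2\cup C_3$ at least $w_2$ in branch $x=1$ and collect the $t$ forced unit assignments in branch $x=0$. You then apply \cref{lem:vec-dominate} with sum $10w_2$ and minimum $3w_2$. Your distinctness argument via \cref{rrule:subsumption-complementary-ltr} is just the unpacked form of the paper's appeal to Property~4 of \cref{lem:reduced-formula}.

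The one point the paper makes explicit and you should too is that a $3$-variable cannot occur in all three $C_i$, since it would then be dominated by $x$ (\cref{rrule:dominate}). Without this, your bound $k\delta_3$ fails for $k=3$: that variable would contribute only $w_3=2w_2<3w_2$.
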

\begin{proof}
    Let $\formula_1 = \formula[x=1]$ and $\formula_0 = \formula[x=0]$. Let $\Delta_i = \mu(\formula) - \mu(R(\formula_i))$ for $i \in \{0, 1\}$.
    In both branches, assigning a value to the $3$-variable $x$ drops the measure by exactly $w_3 = 2w_2$. Let $c \ge 1$ be the number of clauses $C_i$ with $\abs{C_i}=1$ for $i\in \{1,2,3\}$.
    
    In branch (B1), $x=1$. The clauses $(x \vee C_1), (x \vee C_2)$, and $(x \vee C_3)$ are satisfied and removed. Consequently, the variables in $C_1, C_2, C_3$ lose a total of $\sum_{i=1}^3 \abs{C_i}$ occurrences. Each lost occurrence independently contributes at least $w_2$ to the measure drop. To verify this, consider any variable $z \in \var(C_1 \cup C_2 \cup C_3)$. If $z$ is a $3$-variable, it appears at most twice in these sub-clauses; otherwise, $z$ is dominated by $x$ and \cref{rrule:dominate} is applicable, contradicting that $\formula$ is reduced. Thus, each of its lost occurrences contributes $\delta_3 = w_2$. If $z$ is a $2$-variable, it appears at most once in these sub-clauses; otherwise, it is dominated by $x$. Thus, its single lost occurrence contributes $\delta_2 = w_2$. Therefore, $\Delta_1 \ge w_3 + (\sum_{i=1}^3 \abs{C_i}) w_2$.
    
    In branch (B2), $x=0$. The $c$ clauses of the form $(x \vee y_j)$ shrink to $1$-clauses $(y_j)$. By \cref{lem:reduced-formula}, any two $2$-clauses can share at most one common variable. Since these $c$ clauses already share the variable $x$, their remaining variables $y_j$ must be pairwise distinct. By \cref{rrule:1cls}, these $c$ distinct variables are assigned and eliminated, each contributing at least $w_2$. This yields $\Delta_0 \ge w_3 + c w_2$.
    
    Summing the bounds gives $\Delta_1+\Delta_0 \ge 2w_3 + (\sum_{i=1}^3 \abs{C_i} + c)w_2$. Since $c$ of the three sub-clauses $C_1,C_2,C_3$ have length $1$ and the remaining $3-c$ have length $2$, we have $\sum_{i=1}^3 \abs{C_i} = c + 2(3-c) = 6 - c$. This implies $\sum_{i=1}^3 \abs{C_i} + c = 6$. With $w_3 = 2w_2$, we get $\Delta_1 + \Delta_0 \ge 10w_2$. Moreover, the equality $\sum_{i=1}^3 \abs{C_i} = 6 - c \ge 3$ (since $c \le 3$) and $c \ge 1$ guarantee that $\Delta_1 \ge 5w_2$ and $\Delta_0 \ge 3w_2$. Since $\Delta_1+\Delta_0 \ge 10w_2$ and $\min(\Delta_1, \Delta_0) \ge 3w_2$, the branching factor generated by this step is at most $\tau(10w_2-3w_2, 3w_2) = \tau(7w_2, 3w_2) = \tau(10.5, 4.5) < 1.1031$.
\end{proof}

\paragraph{Step 5: Dealing with 3-clauses}
After Step 4, all $3$-variables appear positively, and any clause containing a $3$-variable has a length of exactly $3$. 
Before presenting the specific branching steps, we first analyze the measure decrease after assigning $x=1$. Let $x$ be a $3$-variable appearing in $(x \vee C_1), (x \vee C_2)$, and $(x \vee C_3)$.
We define $Y_x \subseteq \var(C_1 \cup C_2 \cup C_3)$ as the set of variables that have exactly one occurrence outside these three clauses. That is, $Y_x$ consists of $2$-variables that appear exactly once in $C_1 \cup C_2 \cup C_3$, and $3$-variables that appear exactly twice in $C_1 \cup C_2 \cup C_3$. 
Let $\mathcal{R}_x$ be the set of clauses in $\formula$ outside of $\{(x \vee C_i)\}_{i=1}^3$ that contain the remaining occurrences of variables in $Y_x$, i.e., $\mathcal{R}_x = \{D \in \formula \setminus \{(x \vee C_i)\}_{i=1}^3 \mid \var(D) \cap Y_x \neq \emptyset\}$. 
Furthermore, let $\mathrm{Ext}_x = \var(\mathcal{R}_x) \setminus Y_x$ be the set of their external neighbors. 

Observe that when assigning $x=1$, the variables in $Y_x$ lose all but their one remaining occurrence in $\mathcal{R}_x$, thus becoming $1$-variables in $\formula[x=1]$. By \cref{rrule:1var}, all variables in $\mathrm{Ext}_x$ will be assigned values and give us extra gain in the measure decrease. Formally, we have:

\begin{lemma}\label{lem:unified-x1}
    Let $x$ be a $3$-variable appearing in $(x \vee C_1), (x \vee C_2)$, and $(x \vee C_3)$. It holds that $\mu(\formula) - \mu(R(\formula[x=1])) \ge 8w_2 + \abs{\mathrm{Ext}_x}w_2$.
\end{lemma}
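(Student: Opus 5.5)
We account for the measure drop in $\formula[x=1]$ variable by variable, splitting it into three disjoint contributions: $x$ itself, the variables of $C_1\cup C_2\cup C_3$, and the external neighbours $\mathrm{Ext}_x$.

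\emph{Contribution of $x$.} The variable $x$ is assigned, so it contributes $w_3=2w_2$.

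\emph{Contribution of the sub-clauses.} After Step~4 each $C_i$ has length $2$, so $C_1\cup C_2\cup C_3$ contains six literal occurrences, all of which disappear. As in the proof of \cref{lem:L-step4}, no $2$-variable appears twice in these sub-clauses and no $3$-variable appears three times, since otherwise $x$ would dominate it and \cref{rrule:dominate} would apply. Hence a $3$-variable loses at most two occurrences, each worth $\delta_3=w_2$, and a $2$-variable loses one occurrence, worth $\delta_2=w_2$. Since the maximum degree is $3$, this gives $6w_2$ in total, and with $x$ we already have $8w_2$.

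Some care is needed with the variables of $Y_x$: their decrease should be counted only as the drop to degree $1$. After that, \cref{rrule:1var} eliminates them at no further gain, because $w_1=0$.

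\emph{Contribution of $\mathrm{Ext}_x$.} Each $y\in Y_x$ becomes a $1$-variable whose unique remaining occurrence lies in some $D\in\mathcal{R}_x$. \cref{rrule:1var} then assigns every other literal of $D$ to $0$, so every variable of $\var(D)\setminus Y_x$ gets assigned. Taking the union over all $y$, every $z\in\mathrm{Ext}_x$ is eliminated. To avoid double counting, I must check that $\mathrm{Ext}_x$ is disjoint from the variables already accounted for, or otherwise bound what remains.

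\begin{itemize}
\item First, $x\notin\mathrm{Ext}_x$, because all three occurrences of $x$ lie in the clauses $(x\vee C_i)$, so $x$ appears in no clause of $\mathcal{R}_x$.
\item Next, take $z\in\mathrm{Ext}_x$ that also lies in $\var(C_1\cup C_2\cup C_3)$. Since $z\notin Y_x$ but $z$ occurs outside the three clauses, $z$ must be a $3$-variable appearing exactly once in the sub-clauses. After losing that occurrence it still has degree $2$, so its later elimination contributes an extra $w_2$ beyond the $\delta_3$ already counted.
\item Finally, a $z\in\mathrm{Ext}_x$ not in the sub-clauses has degree at least $2$ in $\formula[x=1]$ before it is eliminated, so it contributes at least $w_2$.
\end{itemize}

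Either way, each external neighbour adds at least $w_2$, and these contributions are disjoint across distinct $z$. Summing gives $8w_2+\abs{\mathrm{Ext}_x}w_2$.

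The main obstacle is the middle case of the $\mathrm{Ext}_x$ accounting. Degrees change during the reduction cascade, so I will phrase the bound as a final-minus-initial weight per variable: $z$ starts at $w_3=2w_2$, ends at $0$, and $w_2$ of that was attributed to the sub-clause step. The cleanest route is to define the total drop as $\sum_z\bigl(w_{\deg(z)}-w_{\deg'(z)}\bigr)$ and split it per variable. Reduction rules never increase the measure (\cref{lem:mu-reduction-measure}), so the order of application does not matter.
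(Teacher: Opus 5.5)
Your proposal is correct and follows the paper's proof. It charges $w_3$ for $x$, at least $w_2$ for each of the six sub-clause occurrences (since the literal $x$ dominates no variable), and a further $w_2$ for each variable of $\mathrm{Ext}_x$ eliminated by \cref{rrule:1var}. Your check that any $z\in\mathrm{Ext}_x\cap\var(C_1\cup C_2\cup C_3)$ must be a $3$-variable occurring exactly once in the sub-clauses, so that its total drop $\delta_3+w_2=w_3$ covers both charges, makes explicit a disjointness point the paper's proof passes over.
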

\begin{proof}
    Assigning $x=1$ satisfies and removes the three clauses $(x \vee C_i)$. Note that any $2$-variable can appear at most once in $C_1, C_2, C_3$, and any $3$-variable can appear at most twice in $C_1, C_2, C_3$. Otherwise, \cref{rrule:dominate} is applicable, which contradicts that $\formula$ is reduced. Thus, each literal occurrence in $C_1,C_2,C_3$ independently contributes at least $\min\{\delta_3,\delta_2,w_2\}=w_2$ to the measure decrease. Consequently, the removal of $(x \vee C_1), (x \vee C_2)$, $(x \vee C_3)$ contributes $w_3 + \sum_{i=1}^{3}\abs{C_i}=8w_2$ since $w_3=2w_2$ and $\abs{C_1}=\abs{C_2}=\abs{C_3}=2$.
    
    After removing clauses $(x \vee C_1), (x \vee C_2)$, $(x \vee C_3)$, variables in $Y_x$ become $1$-variables, which triggers \cref{rrule:1var} to assign values to variables in $\mathrm{Ext}_x$. This contributes an additional $\abs{\mathrm{Ext}_x}w_2$ measure drop. Summing up both parts, we have $\mu(\formula) - \mu(R(\formula[x=1]))\ge 8w_2+\abs{\mathrm{Ext}_x}w_2$, which completes the proof.
\end{proof}

\subparagraph*{Step 5.1: Apply simple branching on a variable $x$ with $\abs{\mathrm{Ext}_{x}} \ge 1$.}
If there is a $3$-variable $x$ satisfying $\abs{\mathrm{Ext}_{x}} \ge 1$, we apply simple branching on $x$: (B1) assign $x=1$; (B2) assign $x=0$. Guided by \cref{lem:unified-x1}, we have the following result.

\begin{lemma}\label{lem:L-step5.1}
    The branching factor generated by Step~5.1 is at most $\tau(13.5, 3) < 1.1052$.
\end{lemma}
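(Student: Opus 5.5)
The plan is to bound the two branches separately, $\Delta_1=\mu(\formula)-\mu(R(\formula[x=1]))$ and $\Delta_0=\mu(\formula)-\mu(R(\formula[x=0]))$, and aim for $\Delta_1\ge 9w_2=13.5$ and $\Delta_0\ge 2w_2=3$. These give exactly $\tau(13.5,3)$.

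The $x=1$ branch is immediate from \cref{lem:unified-x1}. It gives $\Delta_1\ge 8w_2+\abs{\mathrm{Ext}_x}w_2$. Since Step~5.1 only applies when $\abs{\mathrm{Ext}_x}\ge 1$, we get $\Delta_1\ge 9w_2=13.5$.

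The $x=0$ branch uses the structure left after Steps~1--4.
\begin{itemize}
\item Assigning $x$ eliminates a $3$-variable, so the measure drops by at least $w_3=2w_2=3$.
\item Setting $x=0$ shrinks each $(x\vee C_i)$ to the $2$-clause $C_i$. This creates no $1$-clauses, so no cascade is guaranteed, and the variables in $C_i$ keep their degrees.
\item Hence $\Delta_0\ge w_3=3$ is the baseline bound. Reduction rules never increase $\mu$ by \cref{lem:mu-reduction-measure}, so whatever they do afterwards can only help.
\end{itemize}
I will check that nothing in the shrinkage can raise the measure. This holds because degrees only go down and $w$ is monotone.

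Combining the two bounds yields the branching vector $(13.5,3)$, and a numerical check confirms $\tau(13.5,3)<1.1052$.

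The main obstacle is justifying that $\Delta_1$ really gains the full extra $w_2$ per variable of $\mathrm{Ext}_x$, without double counting against the $8w_2$. Concretely, each $z\in\mathrm{Ext}_x$ lies outside $Y_x$, and it must be shown to retain weight at least $w_2$ after the three clauses are removed. This holds because $z\notin Y_x$ means $z$ either has at least two occurrences outside the three clauses or does not appear in them at all. Since this is already packaged in \cref{lem:unified-x1}, I would simply invoke it. If slack is needed, I would fall back on a \cref{lem:vec-dominate}-style argument, though the direct bounds already suffice.
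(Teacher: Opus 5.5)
Your proposal is correct and matches the paper's proof: you get $\Delta_1 \ge 8w_2 + \abs{\mathrm{Ext}_x}w_2 \ge 9w_2 = 13.5$ from \cref{lem:unified-x1} using $\abs{\mathrm{Ext}_x}\ge 1$, and $\Delta_0 \ge w_3 = 2w_2 = 3$ from eliminating $x$, which gives $\tau(13.5,3)<1.1052$. Your extra remark on why each $z\in\mathrm{Ext}_x$ still carries weight at least $w_2$ is accurate but not needed, since \cref{lem:unified-x1} already covers it.
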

\begin{proof}
    In (B1), by \cref{lem:unified-x1}, since $\abs{\mathrm{Ext}_{x}} \ge 1$, we have $\Delta_1 \ge 8w_2 + \abs{\mathrm{Ext}_{x}}w_2 \ge 9w_2$. In branch (B2), removing $x=0$ yields a measure drop of $\Delta_0 \ge w_3 = 2w_2$. This yields the branching factor $\tau(9w_2, 2w_2)= \tau(13.5, 3) < 1.1052$.
\end{proof}

We then distinguish $3$-variables by \emph{proper} and \emph{non-proper} based on the composition of the clauses containing it:
\begin{definition}[proper variables]
    Let $x$ be a $3$-variable appearing in $(x \vee C_1), (x \vee C_2)$, $(x \vee C_3)$. $x$ is called \emph{proper} if the following two conditions hold:
    \begin{itemize}
        \item $C_1,C_2,C_3$ are mutually disjoint (i.e., $\var(C_i) \cap \var(C_j) = \emptyset$ for $i \neq j$); 
        \item all variables in $C_1 \cup C_2 \cup C_3$ are $3$-variables. 
    \end{itemize}
    Otherwise, $x$ is \emph{non-proper}.
\end{definition}

The following lemma shows that Step~5.1 in fact handles all non-proper $3$-variables. 
\begin{lemma}\label{lem:structural-dichotomy}
    Let $\formula$ be a reduced formula. If there exists at least one non-proper $3$-variable in $\formula$, then there exists a $3$-variable $x^*$ such that $\abs{\mathrm{Ext}_{x^*}} \ge 1$.
\end{lemma}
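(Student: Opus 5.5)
The plan is to take a non-proper $3$-variable $x$, with clauses $(x\vee C_1),(x\vee C_2),(x\vee C_3)$, and show that either $\abs{\mathrm{Ext}_x}\ge 1$ already holds or some nearby $3$-variable $x^*$ works. We rely on the invariants available after Step~4: the formula is reduced, every variable has degree $2$ or $3$, every $3$-variable occurs only positively, and every clause containing a $3$-variable has length exactly $3$, so $\abs{C_i}=2$.

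\textbf{Step 1: non-properness gives $Y_x\neq\emptyset$.} Non-properness means one of two things.
\begin{enumerate}
\item Some $C_i$ contains a $2$-variable $y$. As in the proof of \cref{lem:unified-x1}, \cref{rrule:dominate} implies that $y$ occurs only once in $C_1\cup C_2\cup C_3$. Hence $y\in Y_x$.
\item Two sub-clauses $C_i,C_j$ share a variable $z$. If $z$ were a $2$-variable, $x$ would dominate $z$. So $z$ is a $3$-variable occurring exactly twice in $C_1\cup C_2\cup C_3$, and $z\in Y_x$.
\end{enumerate}
In both cases $\mathcal{R}_x\neq\emptyset$. Take $D\in\mathcal{R}_x$ to be the clause holding the remaining occurrence of this element of $Y_x$. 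The clause $D$ is not one of $x$'s clauses and has length $\ge 2$.

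\textbf{Step 2: assume $\mathrm{Ext}_x=\emptyset$ and get a closed subformula.} Then every variable of every clause in $\mathcal{R}_x$ lies in $Y_x\subseteq \var(C_1\cup C_2\cup C_3)$. Let $\formula'$ be the three clauses of $x$ together with $\mathcal{R}_x$. It has at most $7\le 10$ variables.
\begin{itemize}
\item Variables of $Y_x$ have all their occurrences inside $\formula'$, and so does $x$.
\item The only variables that can also occur outside are those of $\var(C_1\cup C_2\cup C_3)\setminus Y_x$. These are exactly the $3$-variables occurring once in the $C_i$'s. They cannot occur in $\mathcal{R}_x$, since otherwise they would belong to $\mathrm{Ext}_x$.
\end{itemize}
By Property~\ref{prop:reduced-formula-subformula} of \cref{lem:reduced-formula}, at least two such "exit" variables must exist. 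This bounds $\abs{Y_x}\le 4$ and pins down the local structure considerably.

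\textbf{Step 3: finish the case analysis, by contradiction or by switching variables.} In this remaining configuration I would argue one of two ways.
\begin{itemize}
\item Derive a contradiction with reducedness. Candidates are: two clauses sharing two $2$-variables (Property~3 of \cref{lem:reduced-formula}); twins under \cref{rrule:twins}, typically two $Y_x$ variables appearing in exactly the same clauses; or domination.
\item Otherwise, switch to a different $3$-variable $x^*$, for instance an exit variable or a $3$-variable of $D\cap Y_x$ (which must be positive in the $3$-clause $D$). Then show that $\mathcal{R}_{x^*}$ reaches a variable outside $Y_{x^*}$, namely one of the other exit variables or $x$ itself via one of the $C_i$'s.
\end{itemize}
Concretely, I would split on $\abs{D}$ and on whether $D$ contains a $3$-variable. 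If $D$ is a $2$-clause made of $2$-variables from two different $C_i$'s, consider the $3$-variable $x$ together with $D$ and the exit variables. If $D$ is a $3$-clause whose variables all lie in $Y_x$, then $D$ with $x$'s clauses forms an almost closed block. The "twice-in-$C$'s" $3$-variable $z$ in $D$ then serves as $x^*$, because one of $z$'s clauses contains $x$, and $x$ has two other occurrences outside $z$'s clauses. This makes $x\in\mathrm{Ext}_{z}$ once we check that $x$ has exactly one occurrence outside, or else it forces some other neighbour into $\mathrm{Ext}_z$.

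\textbf{Main obstacle.} The hard part is this last step: enumerating the few small configurations in which $\mathrm{Ext}_x=\emptyset$ and, for each, either exhibiting a violated reduction rule or choosing the right $x^*$. My first attempt would be to always choose $x^*$ as a $3$-variable of $Y_x$. Its clause set contains one of $x$'s clauses, which gives a concrete witness in $\mathrm{Ext}_{x^*}$ whenever $x$ has exactly one occurrence outside $x^*$'s clauses. I would fall back on Property~\ref{prop:reduced-formula-subformula} of \cref{lem:reduced-formula} and on domination only in the residual cases.
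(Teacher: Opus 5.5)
\textbf{There is a genuine gap: your Step~3 is where the lemma is actually proved, and it is left open.} Your Steps~1 and~2 are sound and match the paper's exit-variable count with Property~\ref{prop:reduced-formula-subformula}. Step~3, however, is a list of tools, and the one concrete strategy you commit to (take $x^*$ to be a $3$-variable of $Y_x$) is unavailable exactly where a switch is needed.

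\textbf{The case that forces a switch.} It arises when $Y_x$ contains no $3$-variable and some clause of $x$ is mixed, $(x\vee y\vee z)$ with $y$ a $3$-variable and $z$ a $2$-variable. Here $\mathrm{Ext}_x=\emptyset$ is compatible with reducedness. For example, take $(x\vee y\vee z)$, $(x\vee a\vee b)$, $(x\vee c\vee d)$, $(\nl{z}\vee\nl{a})$, with $z,a$ $2$-variables and $y,b,c,d$ $3$-variables whose other occurrences lie generically elsewhere. No reduction rule applies to this structure, so no contradiction exists, and $Y_x=\{z,a\}$ contains no $3$-variable.

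\textbf{The missing idea: take $x^*=y$.} Suppose $\mathrm{Ext}_x=\emptyset$. The other occurrence of $z$ is in some $E\in\mathcal{R}_x$ with $\var(E)\subseteq Y_x$, so $E$ contains only $2$-variables and does not contain $y$. Since $\abs{E}\ge 2$, it contains some $z'\neq z$, whose other occurrence is in a clause of $x$ other than $(x\vee y\vee z)$. Because $y$ occurs only once among $x$'s sub-clauses, $z'$ never co-occurs with $y$, so $z'\notin Y_y$. Meanwhile $E\in\mathcal{R}_y$, because it contains $z\in Y_y$. Hence $z'\in\mathrm{Ext}_y$. Note that the witness is a $2$-variable of $Y_x$, not an exit variable or $x$ as you anticipated.

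\textbf{The other cases need no switch, but you still have to show $\mathrm{Ext}_x\neq\emptyset$, and your proposal does not.}
\begin{itemize}
\item If $Y_x$ contains a $3$-variable, the paper reaches a contradiction from the exit count plus subsumption (two such variables), or domination plus Property~3 of \cref{lem:reduced-formula} (one such variable). Your suggested switch to such a $z$ is unnecessary and also mis-argued. Since $z$ lies in two clauses of $x$, $x$ occurs twice among $z$'s sub-clauses, so $x\in Y_z$ and therefore $x\notin\mathrm{Ext}_z$.
\item If $Y_x$ has no $3$-variable and no clause of $x$ is mixed, each $C_i$ consists entirely of $2$-variables or entirely of $3$-variables. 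Your subformula (all three clauses of $x$ plus $\mathcal{R}_x$) then inherits two or four exit variables from the all-$3$-variable $C_i$'s, so Property~\ref{prop:reduced-formula-subformula} applied to it yields nothing. For instance, $(x\vee u\vee v),(x\vee p\vee q),(x\vee b\vee c)$ with $\mathcal{R}_x=\{(\nl{u}\vee\nl{p}),(\nl{v}\vee\nl{q})\}$ triggers neither Property~3, domination, nor twins. Instead, apply Property~\ref{prop:reduced-formula-subformula} to the clauses of $x$ whose $C_i$ consist of $2$-variables, together with $\mathcal{R}_x$. This smaller subformula shares at most $x$ with the rest of $\formula$.
\end{itemize}
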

\begin{proof}
    Let $x$ be a non-proper $3$-variable in $\formula$, and let $(x\lor C_1), (x\lor C_2), (x\lor C_3)$ be the three clauses containing $x$. Let $Y_{x,j} \subseteq \var(C_1 \cup C_2 \cup C_3)$ for $j \in \{2, 3\}$ be the variables appearing exactly $j-1$ times in the sub-clauses. Because $x$ is non-proper, its local structure must fall into one of the following three exhaustive cases. We outline the specific conclusion we prove for each case:
    \begin{itemize}
        \item \textbf{Case 1:} $\abs{Y_{x,3}}\ge 1$. Equivalently, there is an overlap among the sub-clauses of $x$ (i.e., $\var(C_i) \cap \var(C_j) \neq \emptyset$ for some $i \neq j$). The equivalence follows from the fact that a $2$-variable can only appear once in $C_1,C_2,C_3$ (otherwise, it is dominated by $x$, triggering \cref{rrule:dominate} and contradicting that $\formula$ is reduced). In this case, we prove that $\abs{\mathrm{Ext}_x} \ge 1$.
        \item \textbf{Case 2:} Case 1 does not hold (i.e., $\abs{Y_{x,3}}=0$), and there exists a clause $(x\lor y\lor z)$ such that $y$ is a $3$-variable and $z$ is a $2$-variable. We show that either $\abs{\mathrm{Ext}_x} \ge 1$ or $\abs{\mathrm{Ext}_y} \ge 1$.
        \item \textbf{Case 3:} Cases 1 and 2 do not hold, and $\abs{Y_{x,2}}\ge 1$, i.e., there exists at least one $2$-variable in $C_1\cup C_2\cup C_3$. We prove that $\abs{\mathrm{Ext}_x} \ge 1$.
    \end{itemize}

    \textbf{Case 1:} $\abs{Y_{x,3}}\ge 1$. Suppose for contradiction that $\abs{\mathrm{Ext}_x} = 0$, which implies all clauses in $\mathcal{R}_x$ are formed exclusively by variables from $Y_x$, i.e., $\var(\mathcal{R}_x) = Y_x$. 
    
    Consider the subformula $\formula' = \{(x \vee C_i)\}_{i=1}^3 \cup \mathcal{R}_x$. Note that all occurrences of variables in $Y_x \cup \{x\}$ are strictly contained within $\var(\formula')$. Therefore, the intersection between $\formula'$ and the rest of the formula, $\var(\formula') \cap \var(\formula \setminus \formula')$, can only consist of variables in $\var(C_1 \cup C_2 \cup C_3) \setminus Y_x$. Since variables in $Y_{x,2}$ have $1$ occurrence and variables in $Y_{x,3}$ have $2$ occurrences in the sub-clauses, the number of remaining literal occurrences in the sub-clauses is exactly $6 - \abs{Y_{x,2}} - 2\abs{Y_{x,3}}$. This provides an upper bound on the intersection size:
    \begin{align*}
        \abs{\var(\formula') \cap \var(\formula \setminus \formula')} \le 6 - \abs{Y_{x,2}} - 2\abs{Y_{x,3}}.
    \end{align*}
    To avoid violating Property 5 of \cref{lem:reduced-formula}, the intersection size should be at least $2$. 
    If $\abs{Y_{x,3}} = 3$, the intersection size is at most $6 - 0 - 6 = 0$. Thus, $\abs{Y_{x,3}}\le 2$.
    
    If $\abs{Y_{x,3}} = 2$, we must have $\abs{Y_{x,2}} = 0$ (otherwise, the intersection size is at most $6 - 1 - 4 = 1$). Let $Y_{x,3} = \{y_1, y_2\}$. These two variables occupy $4$ literal occurrences in $C_1,C_2,C_3$. Since $\abs{C_1}=\abs{C_2}=\abs{C_3}=2$ and each variable can appear at most once in a clause, by the Pigeonhole Principle, there exists some $C_i$ such that $C_i = \{y_1, y_2\}$. Meanwhile, since $\abs{\mathrm{Ext}_x} = 0$ and $\abs{Y_{x,2}} = 0$, $\mathcal{R}_x$ is formed exclusively by $y_1$ and $y_2$. Thus, any clause $D \in \mathcal{R}_x$ must be a subset of $\{y_1, y_2\}$, implying $D \subseteq C_i$. However, this would trigger \cref{rrule:subsumption} and contradict that $\formula$ is reduced.
    
    If $\abs{Y_{x,3}} = 1$, we must have $\abs{Y_{x,2}} \le 2$. Otherwise, if $\abs{Y_{x,2}} \ge 3$, the intersection size is at most $6 - 3 - 2 = 1$. Let $Y_{x,3} = \{y\}$ and assume without loss of generality that $y \in C_1$ and $y \in C_2$. Let $D\in \mathcal{R}_x$ be the clause containing $y$. Since $y$ is a $3$-variable, we have $\abs{D}=3$. Thus, $D$ requires two additional distinct variables from $Y_x$ since {$\var(\mathcal{R}_x) \subseteq Y_x$}. As $y$ is the only variable in $Y_{x,3}$, these two additional variables must belong to $Y_{x,2}$, and so $\abs{Y_{x,2}} \ge 2$. Therefore, $\abs{Y_{x,2}} = 2$ and $\mathcal{R}_x=\{D\}$.

    {
    Let $Y_{x,2} = \{u, v\}$, then $D = (y \lor \literal_u \lor \literal_v)$, where $\literal_u$ and $\literal_v$ are literals of variables $u$ and $v$, respectively. 
    Note that variables $u, v \in \var(C_1 \cup C_2 \cup C_3)$. 
    If variable $u$ appears in $C_1$, let $\literal_u'$ be its literal in $C_1$, then $(x \vee C_1) = (x \vee y \vee \literal_u')$. 
    Since $u$ is a $2$-variable and the positive literal $y$ appears in every clause containing variable $u$ (namely $(x \vee C_1)$ and $D$), the variable $u$ is dominated by literal $y$.
    This triggers \cref{rrule:dominate} and contradicts that $\formula$ is reduced. 
    Thus, variable $u \notin \var(C_1)$, and by symmetry, $u \notin \var(C_2)$. 
    The exact same logic applies to $v$, meaning variable $v \notin \var(C_1 \cup C_2)$.
    Therefore, variables $u, v \in \var(C_3)$. Let $\literal_u'$ and $\literal_v'$ be their respective literals in $C_3$, meaning $(x \vee C_3) = (x \vee \literal_u' \vee \literal_v')$. 
    However, this contradicts \cref{lem:reduced-formula}, which says that every two clauses can share at most one $2$-variable in a reduced formula. 
    Thus, we must have $\abs{\mathrm{Ext}_x} \ge 1$.
    }
    
    \textbf{Case 2:} $\abs{Y_{x,3}}=0$, and there exists a clause $(x\lor y\lor z)$ such that $y$ is a $3$-variable and $z$ is a $2$-variable.
    Suppose for contradiction that $\abs{\mathrm{Ext}_x} = 0$ and $\abs{\mathrm{Ext}_y} = 0$. 
    Since $z$ is a $2$-variable, $z \in Y_{x,2}$, meaning its only other occurrence must be in some clause $E \in \mathcal{R}_x$. Since $\formula$ is reduced, $\abs{E} \ge 2$, so there exists another variable $z' \in E \setminus \{z\}$. 
    Because $\abs{\mathrm{Ext}_x} = 0$ and $\abs{Y_{x,3}} = 0$, we have $\var(\mathcal{R}_x) = Y_{x,2}$. Thus, $z'$ must be a $2$-variable in $Y_{x,2}$, implying that its only other occurrence should co-occur with $x$. 
    On the other hand, note that $E$ also belongs to $\mathcal{R}_y$. Thus, $z'\in Y_{y,2}$, implying that the only other occurrence of $z'$ should co-occur with $y$. 
    Since $z'$ is a $2$-variable and one of its occurrences is in $E$, the other one remaining occurrence should co-occur with both $x$ and $y$, forming a clause $(x \vee y \vee z')$. 
    Furthermore, because $z' \neq z$, this clause must be distinct from $(x \vee y \vee z)$. However, this implies $y \in Y_{x,3}$, which contradicts $\abs{Y_{x,3}} = 0$. 
    Thus, we have either $\abs{\mathrm{Ext}_x} \ge 1$ or $\abs{\mathrm{Ext}_y} \ge 1$.
    
    \textbf{Case 3:} {Cases 1 and 2 do not hold, and $\abs{Y_{x,2}} \ge 1$.}
    Suppose for contradiction that $\abs{\mathrm{Ext}_x} = 0$, i.e., $\var(\mathcal{R}_x)\subseteq Y_{x}$. Since Case 1 does not hold (i.e., $\abs{Y_{x,3}}=0$), $Y_{x}=Y_{x,2}$. Recall that $x$ is contained in clauses $(x\lor C_1), (x\lor C_2), (x\lor C_3)$. Let $\mathcal{T}\subseteq \{(x\lor C_i)\}_{i=1}^{3}$ be the set of clauses such that $C_i$ contains $2$-variables. Because Case 2 does not hold, for every $C_i$, it holds that either $\var(C_i)\subseteq Y_{x,2}$ or $\var(C_i)\cap Y_{x,2}=\emptyset$. Thus, $\var(\mathcal{T})=Y_{x,2}\cup \{x\}=Y_{x}\cup \{x\}$. Consider the subformula $\formula'' = \mathcal{T}\cup \mathcal{R}_{x}$. Since $\var(\mathcal{R}_x)\subseteq Y_{x}$, the set of variables in this subformula is exactly $\var(\formula'') = Y_x \cup \{x\}$. 
    Because all variables in $Y_x$ are $2$-variables whose two occurrences are contained within $\mathcal{T}$ and $\mathcal{R}_x$, they do not appear anywhere in $\formula \setminus \formula''$. Thus, $\var(\formula'') \cap \var(\formula \setminus \formula'') \subseteq \{x\}$, which violates Property 5 of \cref{lem:reduced-formula} and contradicts that $\formula$ is reduced. Thus, we must have $\abs{\mathrm{Ext}_x} \ge 1$.
    
    In all exhaustive cases, we identify a $3$-variable $x^*$ satisfying $\abs{\mathrm{Ext}_{x^*}} \ge 1$.
\end{proof}

\subparagraph*{Step 5.2: Apply variable branching on proper variables.}
By \cref{lem:structural-dichotomy}, if the algorithm reaches this step, all $3$-variables in the formula are proper. We pick up an arbitrary $3$-variable $x$ (which is contained in $(x \vee C_1), (x \vee C_2)$, and $(x \vee C_3)$), and apply the branching rule in \cref{lem:variable-branching} on $x$: (B1) assign $x=1, C_1=0$; (B2) assign $x=1, C_2=0$, and add clause $C_1$; (B3) assign $x=1, C_3=0$, and add clauses $C_1$ and $C_2$.

\begin{lemma}\label{lem:L-step5.2}
    The branching factor generated by Step~5.2 is at most $\tau(15, 12, 9) < 1.0983$.
\end{lemma}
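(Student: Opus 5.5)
The plan is a direct measure count for each of the three branches, using the rigid local structure that holds once the algorithm reaches Step~5.2. By \cref{lem:structural-dichotomy} and the fact that Step~5.1 no longer applies, the chosen $x$ is proper. Together with Step~4, this means $\abs{C_1}=\abs{C_2}=\abs{C_3}=2$ and the sets $\var(C_1),\var(C_2),\var(C_3)$ are pairwise disjoint. So $C_1\cup C_2\cup C_3$ involves six distinct $3$-variables, each occurring exactly once in the sub-clauses. Hence each loses exactly one occurrence when its clause containing $x$ disappears, and is fully eliminated when assigned. I will lower-bound only these direct effects and argue that everything else can only help. The justification is \cref{lem:mu-reduction-measure}: reduction rules never increase $\mu$, and further assignments or clause removals only lower degrees.

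\textbf{Branch (B1): $x=1$, $C_1=0$.} The variable $x$ is eliminated, contributing $w_3=2w_2$. The two $3$-variables of $C_1$ are assigned, contributing $2w_3=4w_2$. The clauses $(x\vee C_2)$ and $(x\vee C_3)$ are satisfied and removed. By disjointness, their four variables lie outside $\var(C_1)$ and each loses one occurrence, contributing $4\delta_3=4w_2$. The total is $10w_2=15$.

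\textbf{Branch (B2): $x=1$, $C_2=0$, $C_1$ added.} Again $x$ gives $2w_2$ and the variables of $C_2$ give $4w_2$. The clause $(x\vee C_3)$ is removed, so its two variables each lose one occurrence, giving $2w_2$. For the variables of $C_1$, the clause $(x\vee C_1)$ is replaced by the $2$-clause $C_1$, so their degrees are unchanged. I will count them as contributing $0$, with no negative contribution. The total is $8w_2=12$.

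\textbf{Branch (B3): $x=1$, $C_3=0$, $C_1$ and $C_2$ added.} By the same reasoning, the contributions are $2w_2$ from $x$ and $4w_2$ from $\var(C_3)$, while the variables of $C_1$ and $C_2$ keep their degrees. The total is $6w_2=9$.

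With $w_2=1.5$, the branching vector is $(15,12,9)$, as claimed.

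The only step needing care is that these contributions do not double count and cannot be cancelled. Disjointness of the three sub-clauses guarantees that the variables charged with a lost occurrence are different from those charged with being assigned. A variable that loses an occurrence and is later assigned by a cascade (for example, because an added clause $C_1$ shrinks to a $1$-clause) only increases the measure drop. The added clauses $C_1$ and $C_2$ introduce no new variables or occurrences beyond those already present in $(x\vee C_1)$ and $(x\vee C_2)$, so they cannot increase $\mu$. Finally, the correctness of branching into exactly these three subinstances is given by \cref{lem:variable-branching}.
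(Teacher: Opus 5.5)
Your proposal is correct and matches the paper's proof essentially line by line. Both use the same per-branch accounting: $w_3$ for $x$, $2w_3$ for the two variables of the falsified sub-clause, $\delta_3$ for each occurrence lost in the satisfied clauses, and nothing for the re-added sub-clauses, which gives $(10w_2, 8w_2, 6w_2) = (15,12,9)$; your remarks on disjointness preventing double counting and on cascades only increasing the drop spell out what the paper leaves implicit.
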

\begin{proof}
    In all branches, $x$ is assigned to $1$, contributing a measure drop of $w_3$. 
    Since $x$ is proper, $C_1$, $C_2$, and $C_3$ are mutually disjoint and contain only $3$-variables. 
    
    In branch (B1), $C_1$ is falsified. Since $\abs{C_1} = 2$ and it contains only $3$-variables, assigning the variables in $C_1$ yields a measure drop of $2w_3$. The clauses $(x \vee C_2)$ and $(x \vee C_3)$ are satisfied and removed. Because $C_1, C_2, C_3$ are mutually disjoint, this removal causes the four variables in $C_2$ and $C_3$ to lose occurrences, contributing at least $4\delta_3$. Thus, $\Delta_1 \ge w_3 + 2w_3 + 4\delta_3 = 3w_3 + 4w_2 = 10w_2$.
    
    In branch (B2), $C_2$ is falsified. Assigning its variables gives $2w_3$. The clause $(x \vee C_3)$ is satisfied and removed, causing the variables in $C_3$ to lose occurrences and decreasing the measure by at least $2\delta_3$. Adding clause $C_1$ preserves the occurrences of its variables. Thus, $\Delta_2 \ge w_3 + 2w_3 + 2\delta_3 = 3w_3 + 2w_2 = 8w_2$.
    
    In branch (B3), $C_3$ is falsified. Assigning the variables in $C_3$ decreases the measure by at least $2w_3$. Adding clauses $C_1$ and $C_2$ preserves the degrees of their variables. Thus, $\Delta_3 \ge w_3 + 2w_3 = 6w_2$.
    
    Therefore, the branching factor is at most $\tau(10w_2, 8w_2, 6w_2) = \tau(15, 12, 9) < 1.0983$.
\end{proof}

\subsubsection{Step 6: Reduce to \ParityTwoOccSAT{}}
Now all variables are $2$-variables.
We have $\mu(\formula)=w_2n(\formula)$.
Applying the $\bigO{1.1193^{n(\formula)}}$-time algorithm for \ParityTwoOccSAT{} (\cref{thm:result-2occ-n}) to solve the remaining part takes $\bigO{(1.1193^{\frac{1}{w_2}})^{\mu(\formula)}}\subseteq \bigO{1.0781^{\mu(\formula)}}
$ time.

\subsubsection{The Final Result}
As summarized in Table~\ref{tab:result-L-overview}, the largest branching factor among all branching steps 1--5 is $1.1052$. All branching steps take polynomial space. Step 6 takes $\bigO{1.0781^{\mu(\formula)}}$ time and polynomial space. Thus, we obtain 
\begin{theorem}
    \ParitySAT{} can be solved in $\bigO{1.1052^L}$ time and polynomial space, where $L$ is the formula length.
\end{theorem}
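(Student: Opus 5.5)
The theorem assembles the step-wise analyses of \cref{subsec:alg-L} into a single bound. The plan is to prove $T(\mu)\in\bigOstar{1.1052^{\mu}}$ for the measure $\mu$ of \cref{subsec:L-measure}, and then use $\mu(\formula)\le L(\formula)$.

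\textbf{Correctness.} Every reduction rule preserves parity (\cref{lem:rules-correctness}) and can be applied exhaustively in polynomial time (\cref{lem:poly-time-reduce}). Simple branching is correct by $\parity(\formula)=\parity(\formula[x=0])\oplus\parity(\formula[x=1])$. Clause branching (Steps 2 and 3.1) is correct by \cref{lem:clause-branching}, and the tailored branching of Step 5.2 by \cref{lem:variable-branching}.

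\textbf{Exhaustiveness.} I would check that on a reduced formula some step always applies until all variables are $2$-variables. Variables of degree $0$ or $1$ are removed by reduction (\cref{lem:reduced-formula}), so at most degrees $2$ and $3$ remain after Step 1.
\begin{itemize}
\item After Step 2, every clause containing a $3$-variable has length $2$ or $3$.
\item After Step 3, after flipping, every $3$-variable occurs only positively. If $|D|\ge 3$ it would already have been handled by Step 2, and $|D|\ge 1$ holds because there are no $1$-clauses.
\item After Step 4, every clause containing a $3$-variable is a $3$-clause.
\item By \cref{lem:structural-dichotomy}, if any non-proper $3$-variable exists, Step 5.1 applies. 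Otherwise Step 5.2 applies to a proper variable.
\end{itemize}
Once no $3^+$-variables remain, the instance is a \ParityTwoOccSAT{} instance and Step 6 applies.

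\textbf{Running time.} By \cref{lem:mu-reduction-measure}, reductions never increase $\mu$. The branching factors are bounded by the step lemmas:
\begin{itemize}
\item Step 1: $\tau(12,4)$ (\cref{lem:L-step1-branching});
\item Step 2: $\tau(7.5,7.5)$ (\cref{lem:L-step2-result});
\item Step 3.1: $\tau(6,9)$ (\cref{lem:L-step3.1-result});
\item Steps 3.2 and 4: $\tau(10.5,4.5)$ (\cref{lem:L-step3.2-result,lem:L-step4});
\item Step 5.1: $\tau(13.5,3)$ (\cref{lem:L-step5.1});
\item Step 5.2: $\tau(15,12,9)$ (\cref{lem:L-step5.2}).
\end{itemize}
All of these are below $1.1052$. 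The leaves of the search tree are Step 6 calls on instances with $\mu'=1.5\,n'$. By \cref{thm:result-2occ-n} each such call costs $\bigOstar{1.1193^{n'}}\subseteq\bigOstar{1.0781^{\mu'}}$.

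I would combine these by induction on $\mu$, claiming $\mathrm{cost}(\mu)\le \mathrm{poly}\cdot 1.1052^{\mu}$. At a branching node the recurrence gives the bound because the factor is at most $1.1052$. At a leaf the bound follows from $1.0781^{\mu'}\le 1.1052^{\mu'}$ and $\mu'\le\mu$. The search tree has polynomial depth since each branch strictly decreases $\mu$, which is bounded by $L$, so the polynomial factors do not accumulate. Finally $\mu\le L$ gives $\bigOstar{1.1052^{L}}$.

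\textbf{Polynomial space.} The recursion has polynomial depth, each frame stores a polynomial-size formula, and the Step 6 subroutine uses polynomial space by \cref{thm:result-2occ-n}.

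\textbf{Main obstacle.} The delicate point is exhaustiveness: the branching-factor lemmas assume the preconditions established by earlier steps. In particular, the bound for Step 5.2 needs every $3$-variable to be proper, and that is exactly what \cref{lem:structural-dichotomy} supplies. I would also verify that re-reducing after each branch keeps the invariants the steps rely on, so that each step's lemma applies to the reduced formula at hand.
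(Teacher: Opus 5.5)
Your proposal is correct and follows the same route as the paper. Every branching in Steps 1--5 is bounded by its step lemma, with the worst case $\tau(13.5,3)<1.1052$ in Step 5.1; the Step 6 leaves cost $1.1193^{\mu/1.5}\le 1.0781^{\mu}$ via the \ParityTwoOccSAT{} subroutine; and $\mu\le L$ converts the bound to formula length. Your explicit correctness and exhaustiveness checks, including the use of \cref{lem:structural-dichotomy} to justify that Step 5.2 is reached only when every $3$-variable is proper, spell out what the paper leaves implicit in its short closing paragraph.
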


%% file: tables/result-L-overview.tex
\begin{tabular}{llll}
    \toprule
    \textbf{Steps} & \textbf{Branching Objects} & \textbf{Vectors} & \textbf{Factors} \\
    \midrule
    Step 1 & $4^+$-variables & $(12, 4)$ & $1.1003$ \\
    \midrule
    Step 2 & $4^+$-clauses containing $3$-variables & $(7.5, 7.5)$ & $1.0969$ \\
    \midrule
    \multirow{2}{*}{Step 3} & \multirow{2}{*}{\makecell[l]{$3$-variables with both positive\\ literals and negative literals}} & Step 3.1: $(6, 9)$ & $1.0983$ \\
           &                    & Step 3.2: $(10.5, 4.5)$ & $1.1031$ \\
    \midrule
    Step 4 & $3$-variables contained in $2$-clauses  & $(10.5, 4.5)$ & $1.1031$ \\
    \midrule
    \multirow{2}{*}{Step 5} & \multirow{2}{*}{Remaining $3$-variables} & Step 5.1: $(13.5, 3)$ & $1.1052$* \\
           &                                         & Step 5.2: $(15, 12, 9)$ & $1.0983$ \\
    \midrule
    Step 6 & \makecell[l]{$2$-variables\\ (reduce to \ParityTwoOccSAT{})} & \makecell[l]{$1.1193^{\frac{1}{1.5}}$\\ (\cref{thm:result-2occ-n})} & $1.0781$ \\
    \bottomrule
\end{tabular}

%% file: sections/conclusion.tex
\section{Conclusion and Discussion}\label{sec:conclusion}
In this paper, we investigated algorithms for \ParitySAT{} and its bounded-occurrence version, \ParitydOccSAT{}. By systematically exploiting the properties of parity, we developed polynomial-space algorithms that achieve better running time bounds than the best-known bounds for their exact counting counterparts. Specifically, we broke the $2^m$-barrier for \ParitydOccSAT{} for any fixed $d$ with an $\bigOstar{2^{m(1-1/\bigO{d})}}$-time algorithm, achieved refined bounds of $\bigO{\ResultTwoOccN}$ and $\bigOstar{\ResultTwoOccM}$ time for \ParityTwoOccSAT{}, and established an $\bigO{1.1052^L}$-time algorithm for general \ParitySAT{}.

A natural question is whether one can break the $2^m$-barrier for \#$d$-occ-SAT, even for the base case of $d=2$. In the parity setting, \cref{lem:reduction-positive} allows us to safely restrict our attention to positive formulas. It is worth noting that since the clause branching rule in \cref{lem:clause-branching} inherently preserves the exact number of satisfying assignments, a counting variant of \cref{lem:reduction-positive} indeed holds. However, our subsequent algorithms for handling these positive formulas rely fundamentally on parity: {the simple $\bigOstar{2^{m(1-1/\bigO{2^d})}}$-time algorithm} applies the parity-specific variable branching in \cref{lem:variable-branching}, while {the improved $\bigOstar{2^{m(1-1/\bigO{d})}}$-time algorithm} exploits the parity equivalence between hitting sets and set covers. Neither of these techniques can be directly lifted to the exact counting counterpart.

%% file: sections/appendix.tex
\appendix
\newpage
\input{sections/rETH-hard}

\section{Monien-Preis bound for cubic multigraphs}\label{appendix:bisection-multigraphs}
\begin{theorem}[{see also \cite{journals/corr/abs-2201-03220}}]
    For any $\varepsilon > 0$, there exists an integer $n_{\varepsilon}$ such that for any cubic multigraph $G$ with $|V(G)| \geq n_{\varepsilon}$, the bisection width of $G$ is at most $(1/6 + \varepsilon)|V(G)|$.
\end{theorem}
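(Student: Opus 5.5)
The plan is to reduce the multigraph case to the simple-graph case of \cref{lem:bisection-width}. The reduction replaces each parallel edge by a small gadget that keeps the graph cubic. It adds only a constant number of vertices per multi-edge and can be undone with a controlled loss in the cut. Self-loops do not arise in our setting; a loop at a vertex of degree 3 forces a bridge-like pendant structure. Even so, I would handle loops by the same gadget idea, or exclude them as the paper does for $G_\formula^*$.

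The first step is to bound the number of parallel edges. Call a pair of vertices joined by two edges a \emph{double edge}, and a pair joined by three a \emph{triple edge}. A triple edge is an isolated connected component on 2 vertices. Such components, together with any other components of bounded size, can be distributed between the two sides greedily at no cost, so they affect the balance by at most an additive constant.

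The second step is the gadget. For each double edge $\{u,v\}$, I would delete one copy of the edge and subdivide the remaining copy twice. Then I would attach the two new degree-2 vertices to a $K_4$-minus-an-edge style gadget, so that every vertex has degree 3 and the result is simple. A cleaner alternative is to subdivide both parallel edges once, giving new vertices $a$ and $b$, and connect $a$ and $b$ to each other. This yields a simple cubic graph locally, but $a$ and $b$ together with $u$ and $v$ form a $K_4$ minus an edge, which is fine as long as it creates no new parallel edges. The resulting graph $H$ is simple and cubic, with $|V(H)| \le |V(G)| + 2p$, where $p$ is the number of double edges.

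The third step is to pull back the bisection. Apply \cref{lem:bisection-width} to $H$ to obtain a bisection of width at most $(1/6+\varepsilon)|V(H)|$. Then contract the gadgets back. For each gadget, move $a$ and $b$ to the side of $u$. Cut edges between the gadget and the rest only ever correspond to edges of $G$ between $u$ and $v$, so the cut in $G$ does not grow. The vertex counts on the two sides of $G$ may now be imbalanced by up to $2p$. Rebalancing by moving vertices costs up to 3 cut edges per moved vertex.

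The main obstacle is that $p$ can be linear in $|V(G)|$, so both the additive $2p$ in $|V(H)|$ and the rebalancing cost ruin the $(1/6+\varepsilon)$ constant. To fix this, I would use a different reduction when $p$ is large. Each double edge $uv$ behaves like a single "super-vertex" of degree 2 in the simple graph obtained by contracting it. Contract all double edges to obtain a graph with degree-2 and degree-3 vertices. Suppress the degree-2 vertices into long paths, and apply the Monien–Preis bound (or its known extension to graphs of maximum degree 3, as in the cited note) to this weighted structure. Paths of contracted pairs can be split anywhere along their length at a cost of 1 edge, which gives fine-grained balancing for free. Verifying that the counting yields exactly $1/6+\varepsilon$, rather than a worse constant, is where I expect the real work to lie. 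Concretely, I would check that each path of $k$ double edges contributes $2k$ vertices but at most one cut edge, so the bound improves in the presence of double edges.
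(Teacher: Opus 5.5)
Your proposal has a genuine gap. Neither reduction you describe delivers the $(1/6+\varepsilon)$ bound, and the step you defer (``verifying that the counting yields exactly $1/6+\varepsilon$'') is the entire content of the statement.

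The gadget reduction fails for the reason you identify yourself. The number $p$ of double edges can be as large as $|V(G)|/2$: take a ring of double edges $u_iv_i$ linked by single edges $v_iu_{i+1}$. So the only guarantee \cref{lem:bisection-width} gives for $H$ is $(1/6+\varepsilon)(|V(G)|+2p)$, which can be about $|V(G)|/3$. Moreover, after discarding the gadget vertices, the two sides of $G$ may differ by up to $2p+1$ vertices, and each vertex moved to rebalance may cost $3$ cut edges.

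The fallback does not repair this, for three reasons. First, Monien--Preis is a statement about vertex-balanced bisections of unweighted graphs, so there is nothing to apply ``to this weighted structure''. Second, contracting double edges and suppressing degree-$2$ vertices can create new parallel edges and loops, so you are back to a multigraph. For example, if $u$ and $v$ of a double edge share their third neighbor $w$, the contracted vertex is doubly joined to $w$, and suppressing it yields a loop at $w$. Third, balancing weights is not free: a suppressed path whose two ends fall on the same side costs $2$ cut edges to split, and the imbalance may be spread over $\Theta(p)$ paths of weight $2$.

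The idea you are missing is that you should not enlarge the graph at all. Delete the extra copies of each parallel edge to obtain a simple graph $G'$ on the \emph{same} vertex set. It has maximum degree $3$, and the Monien--Preis bound holds for such graphs (the extension you mention yourself). So $G'$ has a bisection with at most $(1/6+\varepsilon/2)|V(G)|$ crossing edges.

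The cut in $G$ can be larger only through crossing edges $uv$ of $G'$ with multiplicity at least $2$ in $G$. Fix each one locally by moving its endpoint on the currently larger side, say $u$. Since $u$ has at most one neighbor besides $v$, the move removes $uv$ from the cut and toggles the crossing status of at most one simple edge. Hence the cut of $G'$ does not increase, the number of crossing multi-edges drops by one, and the imbalance never exceeds $2$.

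When no multi-edge crosses, the cut in $G$ equals the cut in $G'$. A single final move, costing at most $3\le(\varepsilon/2)|V(G)|$ once $|V(G)|\ge 6/\varepsilon$, restores a bisection. This is the paper's argument. It keeps the vertex count fixed and removes all multi-edges from the cut by local moves that never increase the cut, which is exactly what your gadget approach lacks.
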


\begin{proof}
    Let $G$ be a cubic multigraph. We obtain a simple sub-cubic graph $G'$ by replacing each set of parallel edges in $G$ with a single edge. As noted in \cite{journals/ipl/FominH06,journals/talg/GaspersS17}, the Monien-Preis bound holds for simple sub-cubic graphs. Let $\varepsilon' = \varepsilon/2$. For $n \ge n_{\varepsilon'}$, $G'$ admits a bisection $(A', B')$ such that $|E_{G'}(A', B')| \leq (1/6 + \varepsilon')n$, where $E_{G'}(A', B')$ is the set of edges in $G'$ with one endpoint in $A'$ and the other in $B'$. Let $n_{\varepsilon} = \max(n_{\varepsilon'}, 6/\varepsilon)$. We show that for $n \geq n_{\varepsilon}$, the cut size in $G$ is at most $(1/6 + \varepsilon)n$.

    We first adjust the partition on the simple graph $G'$. Suppose there exists an edge $(u, v)$ in the cut $E_{G'}(A', B')$ that corresponds to parallel edges in the original multigraph $G$ (i.e., the multiplicity $m_G(u,v) \geq 2$). We resolve this by moving the endpoint residing in the currently larger partition (say $u$) to the smaller partition.

    Since $G$ is cubic and $m_G(u,v) \ge 2$, the vertex $u$ has at most one neighbor in $G$ other than $v$. Consequently, in the simple graph $G'$, $u$ is incident to $v$ and at most one other vertex $w$. Moving $u$ to the other partition removes the edge $(u,v)$ from the cut and introduces at most one new edge $(u,w)$ into the cut. Thus, this move does not increase the cut size $|E_{G'}|$. By always moving the vertex from the larger partition to the smaller one, we ensure that the imbalance $||A|-|B||$ remains at most 2 throughout the process of eliminating cut edges with $m_G \ge 2$.

    Let the resulting partition be $(A, B)$. At this stage, every edge crossing the cut in $G'$ has multiplicity $1$ in the original multigraph $G$. Therefore, the cut size in $G$ is exactly equal to the cut size in $G'$, yielding $|E_G(A, B)| = |E_{G'}(A, B)| \leq |E_{G'}(A', B')| \leq (1/6 + \varepsilon')n$.

    If the partition $(A, B)$ is not already a bisection, the imbalance is at most $2$. We move at most one additional vertex to restore perfect balance (i.e., $||A|-|B|| \le 1$). Since the maximum degree in $G$ is $3$, this final adjustment increases the cut size by at most $3$. Given $n \geq n_{\varepsilon} \geq 6/\varepsilon$, we have $3 \leq (\varepsilon/2)n$. Consequently, the bisection width of $G$ is bounded by:
    \[
    (1/6 + \varepsilon/2)n + 3 \leq (1/6 + \varepsilon/2)n + (\varepsilon/2)n = (1/6 + \varepsilon)n.
    \]
    This completes the proof.

\end{proof}

%% file: sections/rETH-hard.tex
\section{A Hardness Result for Parity-2-occ-SAT}\label{sec:rETH-hard}
The randomized Exponential Time Hypothesis (rETH) states that there is a constant $\varepsilon>0$ such that no randomized algorithm can decide $3$-SAT in time $\bigOstar{2^{\varepsilon n}}$ with error probability at most $1/3$ {(see also \cite[Section 1.3]{journals/talg/DellHMTW14})}.
\begin{theorem}
    Under rETH, there is no randomized algorithm that solves \ParityTwoOccSAT{} {on positive formulas} in $2^{o(n)}$ time, where $n$ is the number of variables.
\end{theorem}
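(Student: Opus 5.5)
We reduce from a problem that is known to be hard under rETH and map it, with only linear blow-up, to \ParityTwoOccSAT{} on positive formulas. Uniqueness is where parity enters. By the isolation lemma of Calabro et al.~\cite{journals/jcss/CalabroIKP08} (or Traxler~\cite{conf/iwpec/Traxler08}), there is a randomized reduction from $3$-SAT on $n$ variables to Unique-$k$-SAT on $O(n)$ variables that preserves subexponential time: a satisfiable instance becomes, with probability $2^{-o(n)}$ boosted by repetition, one with exactly one model. An unsatisfiable instance stays unsatisfiable. For such promise instances, $\parity$ decides satisfiability. Hence, under rETH, \ParitySAT{} on $3$-CNF (and $k$-CNF) with $N=O(n)$ variables, and by sparsification~\cite{journals/jcss/ImpagliazzoP01} with $m=O(N)$ clauses, has no $2^{o(N)}$ randomized algorithm. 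The dependence on $\varepsilon$ in the isolation step needs some care, but it is standard and gives exactly this rETH consequence.

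Next we reduce general \ParitySAT{} with $m=O(N)$ clauses to positive formulas. We follow \cref{lem:reduction-positive}, but we need a polynomial-size or linear-size transformation, not a branching. The cleaner route is the chain used in \cref{sec:d-occ}, run in reverse. Parity of hitting sets equals parity of set covers by Cygan et al.~\cite[Theorem 4.3]{journals/talg/CyganDLMNOPSW16}. A positive $2$-occ formula is exactly a set system in which each element lies in at most two sets. Its dual is a family of sets of size at most $2$, i.e.\ a graph, and hitting sets of the dual are vertex covers. So positive \ParityTwoOccSAT{} with $n$ variables is equivalent to Parity-Vertex-Cover on graphs with $n$ edges, or equivalently Parity-Edge-Cover. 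The plan is therefore to show that counting vertex covers (equivalently independent sets) modulo $2$ on a graph with $O(N)$ edges cannot be done in $2^{o(N)}$ time under rETH.

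For this last step, we first reduce $3$-SAT to Independent Set with a unique maximum solution via the standard sparse gadget reduction. This produces a graph with $O(N)$ vertices and edges, and bounded degree after sparsification. We then need parity of \emph{all} independent sets, not only maximum ones. To handle this, we attach to each vertex a weight gadget that is realizable by pendant paths, so that only maximum-size configurations survive modulo $2$. One option is to replace each vertex by a small blow-up in which the number of independent-set extensions is odd exactly when the vertex is chosen ``properly''. Alternatively, we can interpolate over a polynomial number of gadget sizes and solve a linear system over $\mathrm{GF}(2)$. Each gadget has constant size, so the edge count stays $O(N)$. The resulting algorithm would decide $3$-SAT in $2^{o(n)}$ randomized time, contradicting rETH.

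The main obstacle is this final gadget step. The step must turn ``unique solution of a hard problem'' into ``odd number of all vertex covers'' while keeping linear size. Interpolation over $\mathrm{GF}(2)$ is delicate because many weights collapse modulo $2$. As an alternative, I would try a direct encoding: apply the isolation lemma to Unique-$3$-SAT, then express clause satisfaction through a $2$-occurrence positive formula. Each variable $x$ becomes two fresh positive variables linked by the domination and twin rules in reverse, with the parity-cancellation argument of \cref{rrule:twins} ensuring that the unintended assignments come in cancelling pairs.
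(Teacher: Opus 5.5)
Your second paragraph is correct, and it is essentially the paper's reduction chain read in reverse. A positive $2$-occ formula is the edge-cover instance of the graph with clauses as vertices and variables as edges. Cygan et al.'s parity identity then passes to vertex covers and, by complementation, to independent sets. But this only relocates the difficulty. The whole theorem now rests on the claim that Parity-Independent-Set (equivalently Parity-Vertex-Cover) on graphs with $m$ edges has no $2^{o(m)}$-time randomized algorithm under rETH. Your proposal never proves this, as you acknowledge yourself. The paper does not derive it from scratch either; it takes it from \cite[Proof of Theorem~1.2]{journals/talg/DellHMTW14}. After that, the proof is just complementation, the parity identity, and the encoding with one variable $x_e$ per edge and one clause $\bigvee_{e\in\delta(v)}x_e$ per vertex, which gives $n=m$.

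The route you sketch toward that lower bound fails at the points you flag, and also at one you do not.

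\textbf{The standard reduction is not parsimonious.} In the usual sparse reduction from $3$-SAT to Independent Set (a triangle per clause, edges between complementary literal occurrences), suppose the formula has a unique model $\sigma$. Then the maximum independent sets correspond bijectively to choices of one true literal per clause. There are $\prod_C t_C(\sigma)$ of them, where $t_C(\sigma)$ is the number of literals of $C$ true under $\sigma$. This product can be even, so ``unique maximum solution'' is false.

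\textbf{Vertex gadgets cannot isolate maximum configurations modulo $2$.} A constant-size gadget hung at a single vertex $v$ contributes only a pair of weights in $\{0,1\}$, one for $v\notin I$ and one for $v\in I$. So it can only force $v$ in, force it out, or do nothing; it can never weight sets by their size. For a pendant path of length $k$ the pair is $(F_{k+2},F_{k+1}) \bmod 2$, with Fibonacci numbers $F_1=F_2=1$, and it cycles through exactly these three options. The $\mathrm{GF}(2)$ interpolation you mention degenerates for the same reason.

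\textbf{The ``reverse twin/domination'' idea is not a construction.} Encoding negated literals into a positive $2$-occ formula, i.e.\ into edge covers, is exactly the hard part, and the proposal does not say how to do it.

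To complete the proof, you must either cite the rETH lower bound for Parity-Independent-Set in terms of the number of edges, or give an explicit linear-size reduction from $\oplus$3-SAT to parity of independent sets (or vertex or edge covers). In the second case you also need a proof that all unintended configurations cancel in pairs.
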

\begin{proof}
   Let $G=(V,E)$ be a graph with $n_v = |V|$ vertices and $m = |E|$ edges. 
    A vertex cover of $G$ is a set of vertices that includes at least one endpoint of every edge of $G$. An independent set of $G$ is a set of vertices such that there is no edge between any two vertices. It is a basic property that $S \subseteq V$ is a vertex cover if and only if $V \setminus S$ is an independent set.
    An edge cover of $G$ is a set of edges such that every vertex of $G$ is an endpoint of at least one edge of the set. For a vertex $v \in V$, we denote by $\delta(v) = \{e \in E \mid v \in e\}$ the set of edges incident to $v$. Parity-Vertex-Cover (resp., Parity-Independent-Set, Parity-Edge-Cover) is the problem of determining whether the number of vertex covers (resp., independent sets, edge covers) of $G$ is odd.
    Let $\#vc(G)$ denote the number of vertex covers of $G$ and $\#ec(G)$ the number of edge covers of $G$.

    It is known that under rETH, Parity-Independent-Set cannot be solved in $2^{o(m)}$ time, where $m$ is the number of edges~{\cite[Proof of Theorem 1.2]{journals/talg/DellHMTW14}}. 
    Since the complement of an independent set is a vertex cover, Parity-Independent-Set is equivalent to Parity-Vertex-Cover, which also implies a $2^{o(m)}$ lower bound for the latter. 
    We first show that Parity-Edge-Cover is equivalent to Parity-Vertex-Cover. Then, we reduce Parity-Edge-Cover to Parity-2-occ-SAT {on positive formulas} in a natural way.
    We claim that
    \begin{equation}
        \#vc(G) \equiv \#ec(G) \pmod 2. \label{eq:vc=ec}
    \end{equation}
    Then, the equivalence of Parity-Vertex-Cover and Parity-Edge-Cover follows from Eq.~\ref{eq:vc=ec}.
    The correctness of Eq.~\ref{eq:vc=ec} follows from~\cite[Theorem 4.3]{journals/talg/CyganDLMNOPSW16}. Here, we note the proof for our case for completeness. Consider the following Inclusion-Exclusion formulation of $\#ec(G)$:
    \begin{align*}
        \#ec(G)
        &= 2^{|E(G)|} - \sum_{u\in V(G)}2^{|E(G-\{u\})|} + \sum_{u,v\in V(G)}2^{|E(G-\{u,v\})|} - \cdots \\
        % \sum_{u,v,w\in V(G)}2^{|E(G-\{u,v,w\})|} + 
        &=\sum_{S \subseteq V(G)} (-1)^{|S|}\cdot 2^{|E(G-S)|}.
    \end{align*}
    Thus, we have
    \[
        \#ec(G)\equiv \sum_{\substack{S \subseteq V(G)\\|E(G-S)|=0}} 1 \equiv \#vc(G) \pmod 2,
    \]
    which completes the proof of Eq.~\ref{eq:vc=ec}.
    Next, we reduce Parity-Edge-Cover to Parity-2-occ-SAT.

    Given a graph $G=(V,E)$, for each edge $e\in E$ we create a variable $x_e$. The formula is then defined as $\phi_{G}=\bigwedge_{v\in V(G)}(\bigvee_{e\in \delta(v)} x_e)$.
    Clearly, each variable appears twice {positively}, and each satisfying assignment to $\phi_{G}$ corresponds to an edge cover of $G$ and vice versa. 

    In terms of parameters, the number of variables in $\phi_G$ is $n = |E| = m$. Since Parity-Vertex-Cover on $m$ edges requires $2^{\Omega(m)}$ time under rETH and our reduction is parity-preserving with $n = m$, any algorithm solving Parity-2-occ-SAT in $2^{o(n)}$ time would directly imply a $2^{o(m)}$ algorithm for Parity-Vertex-Cover, contradicting rETH.
\end{proof}
{We note that the above reduction also applies to show the $\oplus${\sf P}-Completeness of \ParityTwoOccSAT{} on positive formulas.}

%% file: _ref.bib
@inproceedings{conf/stoc/Cook71,
  author    = {Stephen A. Cook},
  title     = {The Complexity of Theorem-Proving Procedures},
  booktitle = {Proceedings of the 3rd Annual {ACM} Symposium on Theory of Computing ({STOC} 1971)},
  pages     = {151--158},
  year      = {1971},
  doi       = {10.1145/800157.805047}
}

@article{journals/tcs/Valiant79,
  author    = {Leslie G. Valiant},
  title     = {The Complexity of Computing the Permanent},
  journal   = {Theor. Comput. Sci.},
  volume    = {8},
  pages     = {189--201},
  year      = {1979},
  doi       = {10.1016/0304-3975(79)90044-6}
}

@inproceedings{conf/focs/BacchusDP03,
  author    = {Fahiem Bacchus and Shannon Dalmao and Toniann Pitassi},
  title     = {Algorithms and Complexity Results for {\#}SAT and Bayesian Inference},
  booktitle = {Proceedings of the 44th Annual Symposium on Foundations of Computer Science ({FOCS} 2003)},
  pages     = {340--351},
  year      = {2003},
  doi       = {10.1109/SFCS.2003.1238208}
}

@article{journals/ai/Roth96,
  author    = {Dan Roth},
  title     = {On the Hardness of Approximate Reasoning},
  journal   = {Artif. Intell.},
  volume    = {82},
  number    = {1-2},
  pages     = {273--302},
  year      = {1996},
  doi       = {10.1016/0004-3702(94)00092-1}
}

@inproceedings{conf/aaai/SangBK05,
  author    = {Tian Sang and Paul Beame and Henry A. Kautz},
  title     = {Performing Bayesian Inference by Weighted Model Counting},
  booktitle = {Proceedings of the 20th National Conference on Artificial Intelligence ({AAAI} 2005)},
  pages     = {475--482},
  year      = {2005}
}

@article{journals/ai/ChaviraD08,
  author    = {Mark Chavira and Adnan Darwiche},
  title     = {On probabilistic inference by weighted model counting},
  journal   = {Artif. Intell.},
  volume    = {172},
  number    = {6-7},
  pages     = {772--799},
  year      = {2008},
  doi       = {10.1016/J.ARTINT.2007.11.002}
}

@inproceedings{conf/aaai/Duenas-OsorioMP17,
  author    = {Leonardo Due{\~{n}}as{-}Osorio and Kuldeep S. Meel and Roger Paredes and Moshe Y. Vardi},
  title     = {Counting-Based Reliability Estimation for Power-Transmission Grids},
  booktitle = {Proceedings of the 31st {AAAI} Conference on Artificial Intelligence ({AAAI} 2017)},
  pages     = {4488--4494},
  year      = {2017},
  doi       = {10.1609/AAAI.V31I1.11178}
}

@inproceedings{conf/sat/NarodytskaSMIM19,
  author    = {Nina Narodytska and Aditya A. Shrotri and Kuldeep S. Meel and Alexey Ignatiev and Jo{\~{a}}o Marques{-}Silva},
  title     = {Assessing Heuristic Machine Learning Explanations with Model Counting},
  booktitle = {Proceedings of the 22nd International Conference on Theory and Applications of Satisfiability Testing ({SAT} 2019)},
  pages     = {267--278},
  year      = {2019},
  doi       = {10.1007/978-3-030-24258-9\_19}
}

@article{journals/tcs/ValiantV86,
  author    = {Leslie G. Valiant and Vijay V. Vazirani},
  title     = {{NP} is as easy as detecting unique solutions},
  journal   = {Theor. Comput. Sci.},
  volume    = {47},
  number    = {3},
  pages     = {85--93},
  year      = {1986},
  doi       = {10.1016/0304-3975(86)90135-0}
}

@inproceedings{conf/tcs/PapadimitriouZ83,
  author    = {Christos H. Papadimitriou and Stathis Zachos},
  title     = {Two remarks on the power of counting},
  booktitle = {Proceedings of the 6th {GI} Conference on Theoretical Computer Science},
  pages     = {269--276},
  year      = {1983},
  doi       = {10.1007/BFb0009651}
}

@article{journals/siamcomp/Toda91,
  author    = {Seinosuke Toda},
  title     = {{PP} is as Hard as the Polynomial-Time Hierarchy},
  journal   = {{SIAM} Journal on Computing},
  volume    = {20},
  number    = {5},
  pages     = {865--877},
  year      = {1991},
  doi       = {10.1137/0220053}
}

@article{journals/jcss/ImpagliazzoP01,
  author    = {Russell Impagliazzo and Ramamohan Paturi},
  title     = {On the Complexity of k-SAT},
  journal   = {J. Comput. Syst. Sci.},
  volume    = {62},
  number    = {2},
  pages     = {367--375},
  year      = {2001},
  doi       = {10.1006/JCSS.2000.1727}
}

@inproceedings{conf/soda/ImpagliazzoMP12,
  author    = {Russell Impagliazzo and William Matthews and Ramamohan Paturi},
  title     = {A satisfiability algorithm for AC\({}^{\mbox{0}}\)},
  booktitle = {Proceedings of the 23rd Annual {ACM-SIAM} Symposium on Discrete Algorithms ({SODA} 2012)},
  pages     = {961--972},
  year      = {1992},
  doi       = {10.1137/1.9781611973099.77}
}

@inproceedings{conf/soda/ChanW16,
  author    = {Timothy M. Chan and Ryan Williams},
  title     = {Deterministic APSP, Orthogonal Vectors, and More: Quickly Derandomizing Razborov-Smolensky},
  booktitle = {Proceedings of the 27th Annual {ACM-SIAM} Symposium on Discrete Algorithms ({SODA} 2016)},
  pages     = {1246--1255},
  year      = {2016},
  doi       = {10.1137/1.9781611974331.CH87}
}

@inproceedings{conf/iwpec/Wahlstrom08,
  author    = {Magnus Wahlstr{\"{o}}m},
  title     = {A Tighter Bound for Counting Max-Weight Solutions to 2SAT Instances},
  booktitle = {Proceedings of the 3rd International Workshop on Parameterized and Exact Computation ({IWPEC} 2008)},
  pages     = {202--213},
  year      = {2008},
  doi       = {10.1007/978-3-540-79723-4\_19}
}

@book{series/txtcs/FominK10,
  author    = {Fedor V. Fomin and Dieter Kratsch},
  title     = {Exact Exponential Algorithms},
  series    = {Texts in Theoretical Computer Science. An {EATCS} Series},
  publisher = {Springer},
  year      = {2010},
  doi       = {10.1007/978-3-642-16533-7}
}

@inproceedings{conf/ijcai/0001S025,
  author    = {Junqiang Peng and Zimo Sheng and Mingyu Xiao},
  title     = {New Algorithms for {\#}2-SAT and {\#}3-SAT},
  booktitle = {Proceedings of the 34th International Joint Conference on Artificial Intelligence ({IJCAI} 2025)},
  pages     = {2666--2674},
  year      = {2025},
  doi       = {10.24963/IJCAI.2025/297}
}

@inproceedings{conf/lics/BannachDGH25,
  author    = {Max Bannach and Erik D. Demaine and Timothy Gomez and Markus Hecher},
  title     = {{\#}P is Sandwiched by One and Two {\#}2DNF Calls: Is Subtraction Stronger Than We Thought?},
  booktitle = {Proceedings of the 40th Annual {ACM/IEEE} Symposium on Logic in Computer Science ({LICS} 2025)},
  pages     = {31--43},
  year      = {2025},
  doi       = {10.1109/LICS65433.2025.00010}
}

@inproceedings{conf/sosa/BannachDGH26,
  author    = {Max Bannach and Erik D. Demaine and Timothy Gomez and Markus Hecher},
  title     = {A Novel Reduction from {\#}SAT to {\#}2SAT Based on Symmetry: \emph{Simply Drop the Large Clauses}},
  booktitle = {Proceedings of the 2026 Symposium on Simplicity in Algorithms ({SOSA} 2026)},
  pages     = {254--265},
  year      = {2026},
  doi       = {10.1137/1.9781611978964.19}
}

@article{journals/jcss/CalabroIKP08,
  author    = {Chris Calabro and Russell Impagliazzo and Valentine Kabanets and Ramamohan Paturi},
  title     = {The complexity of Unique k-SAT: An Isolation Lemma for k-CNFs},
  journal   = {J. Comput. Syst. Sci.},
  volume    = {74},
  number    = {3},
  pages     = {386--393},
  year      = {2008},
  doi       = {10.1016/J.JCSS.2007.06.015}
}

@inproceedings{conf/iwpec/Traxler08,
  author    = {Patrick Traxler},
  title     = {The Time Complexity of Constraint Satisfaction},
  booktitle = {Proceedings of the 3rd International Workshop on Parameterized and Exact Computation ({IWPEC} 2008)},
  pages     = {190--201},
  year      = {2008},
  doi       = {10.1007/978-3-540-79723-4\_18}
}

@article{journals/talg/CyganDLMNOPSW16,
  author    = {Marek Cygan and Holger Dell and Daniel Lokshtanov and D{\'{a}}niel Marx and Jesper Nederlof and Yoshio Okamoto and Ramamohan Paturi and Saket Saurabh and Magnus Wahlstr{\"{o}}m},
  title     = {On Problems as Hard as {CNF-SAT}},
  journal   = {{ACM} Trans. Algorithms},
  volume    = {12},
  number    = {3},
  pages     = {41:1--41:24},
  year      = {2016},
  doi       = {10.1145/2925416}
}

@article{journals/jda/MonienP06,
  author    = {Burkhard Monien and Robert Preis},
  title     = {Upper bounds on the bisection width of 3- and 4-regular graphs},
  journal   = {J. Discrete Algorithms},
  volume    = {4},
  number    = {3},
  pages     = {475--498},
  year      = {2006},
  doi       = {10.1016/J.JDA.2005.12.009}
}

@article{journals/ipl/FominH06,
  author    = {Fedor V. Fomin and Kjartan H{\o}ie},
  title     = {Pathwidth of cubic graphs and exact algorithms},
  journal   = {Inf. Process. Lett.},
  volume    = {97},
  number    = {5},
  pages     = {191--196},
  year      = {2006},
  doi       = {10.1016/J.IPL.2005.10.012}
}

@article{journals/talg/GaspersS17,
  author    = {Serge Gaspers and Gregory B. Sorkin},
  title     = {Separate, Measure and Conquer: Faster Polynomial-Space Algorithms for Max 2-CSP and Counting Dominating Sets},
  journal   = {{ACM} Trans. Algorithms},
  volume    = {13},
  number    = {4},
  pages     = {44:1--44:36},
  year      = {2017},
  doi       = {10.1145/3111499}
}

@article{journals/jacm/FominGK09,
  author    = {Fedor V. Fomin and Fabrizio Grandoni and Dieter Kratsch},
  title     = {A measure {\&} conquer approach for the analysis of exact algorithms},
  journal   = {J. {ACM}},
  volume    = {56},
  number    = {5},
  pages     = {25:1--25:32},
  year      = {2009},
  doi       = {10.1145/1552285.1552286}
}

@article{journals/tcs/ChuXZ21,
  author    = {Huairui Chu and Mingyu Xiao and Zhe Zhang},
  title     = {An improved upper bound for {SAT}},
  journal   = {Theor. Comput. Sci.},
  volume    = {887},
  pages     = {51--62},
  year      = {2021},
  doi       = {10.1016/J.TCS.2021.06.045}
}

@article{journals/iandc/PengX23,
  author    = {Junqiang Peng and Mingyu Xiao},
  title     = {Further improvements for {SAT} in terms of formula length},
  journal   = {Inf. Comput.},
  volume    = {294},
  pages     = {105085},
  year      = {2023},
  doi       = {10.1016/J.IC.2023.105085}
}

@article{journals/jacm/PaturiPSZ05,
  author    = {Ramamohan Paturi and Pavel Pudl{\'{a}}k and Michael E. Saks and Francis Zane},
  title     = {An improved exponential-time algorithm for \emph{k}-SAT},
  journal   = {J. {ACM}},
  volume    = {52},
  number    = {3},
  pages     = {337--364},
  year      = {2005},
  doi       = {10.1145/1066100.1066101}
}

@inproceedings{conf/focs/PaturiPZ97,
  author    = {Ramamohan Paturi and Pavel Pudl{\'{a}}k and Francis Zane},
  title     = {Satisfiability Coding Lemma},
  booktitle = {Proceedings of the 38th Annual Symposium on Foundations of Computer Science ({FOCS} 1997)},
  pages     = {566--574},
  year      = {1997},
  doi       = {10.1109/SFCS.1997.646146}
}

@inproceedings{conf/focs/Schoning99,
  author    = {Uwe Sch{\"{o}}ning},
  title     = {A Probabilistic Algorithm for k-SAT and Constraint Satisfaction Problems},
  booktitle = {Proceedings of the 40th Annual Symposium on Foundations of Computer Science ({FOCS} 1999)},
  pages     = {410--414},
  year      = {1999},
  doi       = {10.1109/SFFCS.1999.814612}
}

@inproceedings{conf/coco/CalabroIP06,
  author    = {Chris Calabro and Russell Impagliazzo and Ramamohan Paturi},
  title     = {A Duality between Clause Width and Clause Density for {SAT}},
  booktitle = {Proceedings of the 21st Annual {IEEE} Conference on Computational Complexity ({CCC} 2006)},
  pages     = {252--260},
  year      = {2006},
  doi       = {10.1109/CCC.2006.6}
}

@incollection{series/faia/DantsinH21,
  author    = {Evgeny Dantsin and Edward A. Hirsch},
  title     = {Worst-Case Upper Bounds},
  booktitle = {Handbook of Satisfiability - Second Edition},
  publisher    = {{IOS} Press},
  pages     = {669--692},
  year      = {2021},
  doi       = {10.3233/FAIA200999}
}

@article{journals/siamcomp/Iwama89,
  author    = {Kazuo Iwama},
  title     = {{CNF} Satisfiability Test by Counting and Polynomial Average Time},
  journal   = {{SIAM} J. Comput.},
  volume    = {18},
  number    = {2},
  pages     = {385--391},
  year      = {1989},
  doi       = {10.1137/0218026}
}

@article{journals/ipl/Lozinskii92,
  author    = {Eliezer L. Lozinskii},
  title     = {Counting Propositional Models},
  journal   = {Inf. Process. Lett.},
  volume    = {41},
  number    = {6},
  pages     = {327--332},
  year      = {1992},
  doi       = {10.1016/0020-0190(92)90160-W}
}

@inproceedings{conf/focs/Scheder21,
  author    = {Dominik Scheder},
  title     = {{PPSZ} is better than you think},
  booktitle = {Proceedings of the 62nd {IEEE} Annual Symposium on Foundations of Computer Science ({FOCS} 2021)},
  pages     = {205--216},
  year      = {2021},
  doi       = {10.1109/FOCS52979.2021.00028}
}

@inproceedings{conf/icalp/Liu18,
  author    = {Sixue Liu},
  title     = {Chain, Generalization of Covering Code, and Deterministic Algorithm for \emph{k}-{SAT}},
  booktitle = {Proceedings of the 45th International Colloquium on Automata, Languages, and Programming ({ICALP} 2018)},
  pages     = {88:1--88:13},
  year      = {2018},
  doi       = {10.4230/LIPIcs.ICALP.2018.88}
}

@article{journals/iandc/ArvindK06,
  author    = {Vikraman Arvind and Piyush P. Kurur},
  title     = {Graph Isomorphism is in {SPP}},
  journal   = {Inf. Comput.},
  volume    = {204},
  number    = {5},
  pages     = {835--852},
  year      = {2006},
  doi       = {10.1016/J.IC.2006.02.002}
}

@inproceedings{conf/focs/Valiant06,
  author    = {Leslie G. Valiant},
  title     = {Accidental Algorithms},
  booktitle = {Proceedings of the 47th Annual {IEEE} Symposium on Foundations of Computer Science ({FOCS} 2006)},
  pages     = {509--517},
  year      = {2006},
  doi       = {10.1109/FOCS.2006.7}
}

@inproceedings{conf/icalp/AbboudFW20,
  author    = {Amir Abboud and Shon Feller and Oren Weimann},
  title     = {On the Fine-Grained Complexity of Parity Problems},
  booktitle = {Proceedings of the 47th International Colloquium on Automata, Languages, and Programming ({ICALP} 2020)},
  pages     = {5:1--5:19},
  year      = {2020},
  doi       = {10.4230/LIPICS.ICALP.2020.5}
}

@article{journals/algorithmica/GoldbergR24,
  author    = {Leslie Ann Goldberg and Marc Roth},
  title     = {Parameterised and Fine-Grained Subgraph Counting, Modulo 2},
  journal   = {Algorithmica},
  volume    = {86},
  number    = {4},
  pages     = {944--1005},
  year      = {2024},
  doi       = {10.1007/S00453-023-01178-0}
}

@inproceedings{conf/esa/CurticapeanDH21,
  author    = {Radu Curticapean and Holger Dell and Thore Husfeldt},
  title     = {Modular Counting of Subgraphs: Matchings, Matching-Splittable Graphs, and Paths},
  booktitle = {Proceedings of the 29th Annual European Symposium on Algorithms ({ESA} 2021)},
  pages     = {34:1--34:17},
  year      = {2021},
  doi       = {10.4230/LIPICS.ESA.2021.34}
}

@inproceedings{DBLP:conf/walcom/Hoi0S024,
  author    = {Gordon Hoi and Sanjay Jain and Ammar Fathin Sabili and Frank Stephan},
  title     = {A Bisection Approach to Subcubic Maximum Induced Matching},
  booktitle = {Proceedings of the 18th International Conference and Workshops on Algorithms and Computation ({WALCOM} 2024)},
  pages     = {257--272},
  year      = {2024},
  doi       = {10.1007/978-981-97-0566-5\_19}
}

@inproceedings{DBLP:conf/cp/HoiJS20,
  author    = {Gordon Hoi and Sanjay Jain and Frank Stephan},
  title     = {A Faster Exact Algorithm to Count {X3SAT} Solutions},
  booktitle = {Proceedings of the 26th International Conference on Principles and Practice of Constraint Programming ({CP} 2020)},
  pages     = {375--391},
  year      = {2020},
  doi       = {10.1007/978-3-030-58475-7\_22}
}

@inproceedings{conf/focs/BjorklundH13,
  author    = {Andreas Bj{\"{o}}rklund and Thore Husfeldt},
  title     = {The Parity of Directed Hamiltonian Cycles},
  booktitle = {Proceedings of the 54th Annual {IEEE} Symposium on Foundations of Computer Science ({FOCS} 2013)},
  pages     = {727--735},
  year      = {2013},
  doi       = {10.1109/FOCS.2013.83}
}

@inproceedings{conf/icalp/BjorklundDH15,
  author    = {Andreas Bj{\"{o}}rklund and Holger Dell and Thore Husfeldt},
  title     = {The Parity of Set Systems Under Random Restrictions with Applications to Exponential Time Problems},
  booktitle = {Proceedings of the 42nd International Colloquium on Automata, Languages, and Programming ({ICALP} 2015)},
  pages     = {231--242},
  year      = {2015},
  doi       = {10.1007/978-3-662-47672-7\_19}
}

@article{journals/talg/DellHMTW14,
  author       = {Holger Dell and
                  Thore Husfeldt and
                  D{\'{a}}niel Marx and
                  Nina Taslaman and
                  Martin Wahlen},
  title        = {Exponential Time Complexity of the Permanent and the Tutte Polynomial},
  journal      = {{ACM} Trans. Algorithms},
  volume       = {10},
  number       = {4},
  pages        = {21:1--21:32},
  year         = {2014},
  doi          = {10.1145/2635812}
}

@article{journals/corr/abs-2201-03220,
  author       = {Gordon Hoi and
                  Ammar Fathin Sabili and
                  Frank Stephan},
  title        = {An Exact Algorithm for finding Maximum Induced Matching in Subcubic
                  Graphs},
  journal      = {CoRR},
  volume       = {abs/2201.03220},
  year         = {2022},
  url          = {https://arxiv.org/abs/2201.03220},
  eprinttype   = {arXiv},
  eprint       = {2201.03220}
}
